\documentclass[11pt]{article}

\usepackage[a4paper, margin=1.25in]{geometry}
\usepackage{fourier} % nicer font
\usepackage{booktabs} % nicer tables
\usepackage[T1]{fontenc} % polish letters
\usepackage{authblk} % author blocks
\usepackage{footmisc}
\usepackage{microtype} % for not going over margins
\usepackage{balance}

\usepackage[dvipsnames]{xcolor} % for more colors
\definecolor{orcidgreen}{HTML}{85A12C} % darker green than orcid
\usepackage{url}
\usepackage[hidelinks]{hyperref} % for active links
\hypersetup{
    colorlinks=true,
    allcolors=orcidgreen,
    linkcolor=Blue,
    citecolor=purple,
}

\usepackage{amssymb}
\usepackage{amsmath}
\usepackage{amsthm}
\usepackage{orcidlink}
\usepackage{academicons}
\usepackage{xspace}
\usepackage{nicefrac}

\newcommand{\reals}{\mathbb{R}}
\newcommand{\pref}{\succ}

\newcommand{\out}{{\mathrm{out}\hbox{-}\mathrm{div}}}
\newcommand{\ansd}{{\mathrm{ansd}}}
\newcommand{\swap}{{\mathrm{swap}}}

\newcommand{\kem}{{\mathrm{kem}}}
\newcommand{\pos}{{\mathrm{pos}}}
\newcommand{\pop}{{\mathrm{pop}}}
\newcommand{\npop}{{\mathrm{npop}}}
\newcommand{\calL}{{\mathcal{L}}}

\newcommand{\calT}{{T}} % {{\mathcal{T}}}
\newcommand{\kkemscore}{{{k\hbox{-}\mathrm{kem}}}}
\newcommand{\kemscore}{{{\mathrm{kem}}}}

\newcommand{\cent}{{\mathrm{center}}}
\newcommand{\far}{{\mathrm{far}}}
\newcommand{\rev}{{\mathrm{rev}}}
\newcommand{\UN}{{\mathrm{UN}}}

\newcommand{\fp}{{\mathrm{FP}}}
\newcommand{\kc}{{\mathrm{K1C}}}

\newcommand{\np}{{{\mathrm{NP}}}}

\newcommand{\SP}{{\mathrm{SP}}}
\newcommand{\SPOC}{{\mathrm{SPOC}}}
\newcommand{\SPDF}{{\mathrm{SP/DF}}}
\newcommand{\SC}{{\mathrm{SC}}}
\newcommand{\GS}{{\mathrm{GS}}}
\newcommand{\GSbal}{{\mathrm{GS/bal}}}
\newcommand{\GScat}{{\mathrm{GS/cat}}}
\newcommand{\Int}{{\mathrm{1D\hbox{-}Int.}}}
\newcommand{\Square}{{\mathrm{2D\hbox{-}Square}}}
\newcommand{\Cube}{{\mathrm{3D\hbox{-}Cube}}}

\newcommand{\LC}{{\mathrm{LC}}}

\usepackage{xspace}

\DeclareSymbolFont{bbold}{U}{bbold}{m}{n}
\DeclareSymbolFontAlphabet{\mathbbold}{bbold}

\newcommand{\probName}[1]{\textsc{#1}\xspace}

\newcommand{\kemenycenter}{\probName{Kemeny $1$-Center}}
\newcommand{\farthestpermutation}{\probName{Farthest Permutation}}

\usepackage{graphicx}
\usepackage{tikz}
\usetikzlibrary{calc}
\usepackage{subcaption}

\usepackage{algorithm}
\usepackage[noend]{algorithmic}

\usepackage[square]{natbib}
\usepackage[nameinlink]{cleveref} % load after algorithm2e hyperref, amsmath

\theoremstyle{plain}
\newtheorem{theorem}{Theorem}[section]
\newtheorem{lemma}[theorem]{Lemma}
\newtheorem{proposition}[theorem]{Proposition}
\newtheorem{corollary}[theorem]{Corollary}

\newtheorem{definition}{Definition}[section]

\newtheorem{remark}{Remark}[section]

\usepackage{thm-restate}

\title{\huge \textbf{Outer Diversity of Structured Domains}}

\author{
  Piotr Faliszewski\quad\quad
  Krzysztof Sornat\quad\quad
  Stanisław Szufa\quad\quad
  Tomasz Wąs
}

\date{}

\makeatletter
\renewcommand*{\@fnsymbol}[1]{\ifcase#1\or i\or ii\or iii\or iv\else\@ctrerr\fi}
\makeatother

\begin{document}

\maketitle

\begin{quote}
    \textbf{Abstract:}
    An ordinal preference domain is a subset of
    preference orders that the voters are allowed to cast in an election.
    We introduce and study the notion of \emph{outer diversity} of a domain and
    evaluate its value for a number of well-known structured domains, such as the
    single-peaked, single-crossing, group-separable, and Euclidean ones.
\end{quote}
\begin{quote}
    \textbf{Code:} \url{https://github.com/Project-PRAGMA/Outer-Diversity-AAMAS-2026}
\end{quote}
% Abstract: An ordinal preference domain is a subset of preference orders that the voters are allowed to cast in an election. We introduce and study the notion of outer diversity of a domain and evaluate its value for a number of well-known structured domains, such as the single-peaked, single-crossing, group-separable, and Euclidean ones.

\renewcommand{\thefootnote}{}
\footnotetext{
  \hspace{-19pt}
  Authors' Information:
  \href{https://orcid.org/0000-0002-0332-4364}{Piotr Faliszewski \orcidlink{0000-0002-0332-4364}}, faliszew@agh.edu.pl, AGH University, Poland;
  \href{https://orcid.org/0000-0001-7450-4269}{Krzysztof~Sornat~\orcidlink{0000-0001-7450-4269}}, sornat@agh.edu.pl, AGH University, Poland;
  \href{https://orcid.org/0000-0001-6301-6227}{Stanisław Szufa \orcidlink{0000-0001-6301-6227}}, s.szufa@gmail.com, University of Geneva, Geneva, Switzerland;
  \href{https://orcid.org/0000-0003-3492-6584}{Tomasz Wąs \orcidlink{0000-0003-3492-6584}}, tomasz.was@cs.ox.ac.uk, University of Oxford, United Kingdom.
}
\renewcommand{\thefootnote}{\arabic{footnote}}

%%%%%%%%%%%%%%%%%%%%%%%%% MAIN PART %%%%%%%%%%%%%%%%%%%%%%%%%%%%%

\section{Introduction}

In the standard, ordinal model of elections, each voter considers a
set of candidates and ranks them from the one that he or she likes
most to the one that he or she likes least. In principle, a voter may
order the candidates in any arbitrary way, but some of these rankings
appear more natural (or, more rational) than others.  For example, in
the political setting it would be expected that a voter would rank the
candidates with respect to their proximity to his or her political
stance, but a ranking with the most right-wing candidate and the most
left-wing one on two top positions would be surprising. Various
rationality conditions for ordinal rankings are expressed as so-called
\emph{structured domains}, i.e., sets of rankings that can be cast in
a given setting. Such domains include, e.g., the single-peaked
one~\citep{bla:b:polsci:committees-elections}, which captures
preferences based on proximity to some ideal, the single-crossing
ones, introduced in the context of
taxation~\citep{mir:j:single-crossing,rob:j:tax}, or group-separable
ones~\citep{ina:j:group-separable,ina:j:simple-majority}, where voters
derive rankings of candidates from preferences over their
features~\citep{kar:j:group-separable,fal-kar-obr:c:group-separable}.
We introduce a new measure of diversity of such domains, provide
algorithms for computing its value, and analyze diversity of a number
of structured domains.

Somewhat surprisingly, analysis of diversity for structured domains
has only recently started to receive more focused
attention~\citep{amm-pup:j:domain-diversity,kar-mar-rii-zho:t:domain-diversity,fal-sor-szu-was:c:structured-kemeny},
with a few authors also considering diversity of
elections~\citep{has-end:c:diversity-indices,fal-kac-sor-szu-was:c:div-agr-pol-map,fal-mer-nun-szu-was:c:map-dap-top-truncated}.
Two commonly used approaches are:
\begin{description}
\item[Richness Diversity.] The overarching idea is that a domain is
  diverse if it contains many different substructures in its rankings
  (these substructures are sometimes also called \emph{attributes}, as
  the approach builds on the theory of attribute diversity of
  \citet{neh-pup:j:diversity}). For example, one might consider how
  many votes appear in the domain, how many candidates are ever ranked
  on top, or---for each triple of candidates---how many ways of
  ranking these candidates appear in the domain. This approach is
  taken, e.g., by \citet{amm-pup:j:domain-diversity} and
  \citet{kar-mar-rii-zho:t:domain-diversity}
 
\item[Inner Diversity.] In this case, we say that a domain is diverse
  if its rankings do not form clear clusters. This approach was taken
  by
  \citet{fal-kac-sor-szu-was:c:div-agr-pol-map,fal-mer-nun-szu-was:c:map-dap-top-truncated,fal-sor-szu-was:c:structured-kemeny},
  who introduced the $k$-Kemeny problem to quantify the difficulty of
 clustering rankings (briefly put, one tries to optimally partition the
  rankings into a given number of groups, measuring their cohesiveness
  using the classic Kemeny rule~\citep{kem:j:no-numbers}).
\end{description}
We propose a third approach, which we refer to as \emph{outer
  diversity}:
\begin{description}
\item[Outer Diversity.] A domain is diverse if, on average, a random
  ranking from the space of all possible ones is similar to some
  ranking from the domain. In particular, we measure similarity
  between rankings using the number of swaps of adjacent candidates
  that transform one into the other.
\end{description}
Inner and outer diversity seem to capture the same basic intuition, but
the inner approach focuses on the rankings within the domain, whereas the
outer one focuses on those outside.

We believe that all the above approaches to measuring domain diversity
are meaningful and are worth studying, but outer diversity has some
advantages. First, it has a very clear interpretation: If a domain has
high diversity, then it covers the space of all possible rankings well;
if one wanted to cast a ranking from the domain but had one that did
not belong to it, then the closest member of the domain would not be
too far off from his or her original ranking.

Second, outer diversity of a given domain is a single number. On the
other hand, in case of richness diversity one has to choose from many
different substructures to count, and in case of inner diversity one
either has to choose the number of clusters to consider (for which
there is no clear solution) or somehow aggregate
obtained values for different
numbers of clusters, which is not obvious (indeed,
the works we cite with respect to inner diversity do not provide fully
satisfying recommendations).

Third, while in principle computing outer diversity may require
exponential time, we provide efficient algorithms for computing it
using sampling: Our algorithms compute the distance from a given
ranking to the closest one in a given domain of interest, such as the
single-peaked, single-crossing, and group-separable ones. Hence, we
can sample random votes, compute their distances to the domains, and
output the average of the obtained values.  On the other hand, even
the heuristics that \citet{fal-kac-sor-szu-was:c:div-agr-pol-map}
proposed for inner diversity (i.e., for $k$-Kemeny) require
exponential time if a given domain contains exponentially many
rankings (as is the case for, e.g., the single-peaked and
group-separable ones).

Our main contributions are as follows:
\begin{enumerate}
\item We introduce the notion of outer diversity and provide means of
  computing its values for a number of domains, including the
  single-peaked, single-crossing, and group-separable ones, but also
  many others (including variants of the single-peaked domain, as well
  as Euclidean domains). However, we also find that for some natural
  domains, the sampling-based approach
  requires solving an $\np$-hard problem.

\item We evaluate outer diversity across a number of domains. We find
  that ranking the domains with respect to outer diversity gives
  similar results as doing so with respect to the inner one. Further,
  while analyzing outer diversity of our domains, we note a number of
  their interesting features.

\item We compute domains of given sizes, whose outer diversity is
  (close to) the highest possible, and we analyze how close are
  various structured domains to these maximal values.
\end{enumerate}
One of the takeaway messages of our work is that the domain of
group-separable preferences based on caterpillar trees (see
\Cref{sec:prelim}) is the most diverse one among those that we study,
and has many features that other domains often lack. Consequently, and
strengthening the message of
\citet{fal-sor-szu-was:c:structured-kemeny}, we believe that this
domain should be used in numerical experiments on elections. Even if
it does not capture reality in a given setting, it is so special that
studying it may lead to the discovery of hard-to-spot phenomena.

We discuss related work throughout the paper, whenever relevant.
Omitted proofs are available in the appendix.

\section{Preliminaries}\label{sec:prelim}

For a positive integer $t$, by $[t]$ we mean the set
$\{1, 2, \ldots, t\}$.  Given an undirected graph $G$, by $V(G)$ and $E(G)$
we mean its sets of vertices and edges, respectively.  We use the
\emph{Iverson bracket} notation, i.e., for a logical formula
$\varphi$, by $[\varphi]$ we mean $1$ if $\varphi$ is true, and $0$,
otherwise. 

\subsection{Preference Orders, Domains, and Elections}
Let $C$ be a
set of $m$ \emph{candidates}.  By $\mathcal{L}(C)$ we denote the set
of all $m!$ linear orders over $C$, typically referred to as
\emph{preference orders}, \emph{votes}, or \emph{rankings}.
For each such ranking $v$
and two candidates $a,b \in C$, we write $a \pref_v b$ to indicate
that $v$ ranks $a$ ahead of $b$ (i.e., according to $v$, $a$ is
preferred to $b$).  A \emph{preference domain} (over $C$) is a subset
$D$ of $\mathcal{L}(C)$.  In particular, $\calL(C)$ is the
\emph{general domain}.
For a ranking $v$ and candidate $c$, by $\pos_v(c)$ we mean
the position of $c$ in $v$; the top candidate has position $1$, the
next one has position $2$, and so on.

An election is a pair $E = (C,V)$, where $C = \{c_1, \ldots, c_m\}$ is
a set of candidates and $V = (v_1, \ldots, v_n)$ is a collection of
voters, each of whom has a vote from $\calL(C)$.  To streamline the
discussion, we use the same symbol $v_i$ to refer both to the given
voter and to his or her vote. The exact meaning will always be clear
from the context. Given a domain $D \subseteq \calL(C)$, we say that
$E = (C,V)$ is a $D$-election if all the voters in $V$ have votes from
$D$.

Sometimes it is convenient to treat a domain $D \subseteq \calL(C)$ as
an election that contains a single voter for each of its preference
orders. In particular, we write $\UN$ to mean an election that
contains one copy of every possible order (so $\UN$ is simply
$\calL(C)$, viewed as an election). For other domains, we typically do
not introduce a second name, but $\UN$ has already been used in
preceding literature in the context of the map of
elections~\citep{szu-boe-bre-fal-nie-sko-sli-tal:j:map}.

For two rankings $u,v \in \mathcal{L}(C)$, their \emph{swap distance}%
\footnote{We focus on swap distance in this paper,
as it is standard in (computational) social choice
and it was also used in the work on inner diversity~\cite{fal-kac-sor-szu-was:c:div-agr-pol-map}.
Our initial experiments with Spearman footrule distance~\cite{elk-fal-sli:j:dr}
yield qualitatively similar outcomes.}
(also known as \emph{Kendall's $\tau$ distance}) is a number of pairs
of candidates in $C$ on whose ordering $u$ and $v$ disagree, i.e.:
\[
    \swap(u,v) = |\{ a, b \in C: a \pref_u b \land b \pref_v a\}|.
\]
The value $\swap(u,v)$ can be computed in time $O(m\sqrt{\log m})$ \cite{cha-pat:c:fast-swap-distance}.
For a domain
$D \subseteq \mathcal{L}(C)$, we let
$\swap(D,v) = \min_{u \in D}\swap(u,v)$.

\subsection{Structured Domains}
Let us fix a size-$m$ set
of candidates $C = \{c_1, \ldots, c_m\}$. Below, we describe the
preference domains over $C$ whose diversity we want to analyze.

Consider a connected, undirected graph $G$, such that $V(G) = C$ (we
refer to such graphs as $\SP$-graphs, or $\SP$-trees in case $G$ is
also acyclic).  A ranking $v \in \calL(C)$ is \emph{single-peaked}
with respect to $G$ if for every $t \in [m]$, the subgraph
induced by the $t$ top-ranked candidates from $v$ is
connected. $\SP(G)$ is the domain that consists of all rankings that
are single-peaked with respect to $G$ (see, e.g., the work of
\citet{elk-lac-pet:b:structured-domains}).  We focus on the following
variants:
\begin{enumerate}
\item $\SP$ is the classic single-peaked domain that consists of
  rankings single-peaked with respect to a path (often called an
  \emph{axis} and denoted $c_1 \rhd c_2 \rhd \cdots \rhd c_m$).  In
  politics, the axis may, e.g., indicate the progression from the
  most left-wing candidate to the most right-wing one.  SP is due to
  \citet{bla:b:polsci:committees-elections}.

\item $\SPOC$, introduced by \citet{pet-lac:j:spoc}, consists of
  rankings single-peaked with respect to a cycle.  SPOC preferences
  appear, e.g., when people located in different time zones want to
  choose a convenient time for an online meeting.
  The name $\SPOC$ stands for \emph{single-peaked on a circle.}

\item $\SPDF$ is a domain introduced by
  \citet{fal-sor-szu-was:c:structured-kemeny} and consists of votes
  single-peaked with respect to a tree that we obtain by taking a
  path and adding four vertices: two directly connected to one end of
  the path, and two directly connected to the other end. The name
  $\SPDF$ stands for \emph{single-peaked/double-forked}. Domains of
  rankings single-peaked with respect to trees were introduced by
  \citet{dem:j:sp-trees}.
\end{enumerate}
Whenever we speak of $\SP$, $\SPOC$, or $\SPDF$ the exact number of
candidates and their positions in respective graphs will be clear from
the context (or will be irrelevant). We use this convention 
for the other domains as well, omitting such details from
their names. 

A domain
is \emph{single-crossing} if it is
possible to list its members as $v_1, v_2, \ldots, v_n$, so that, as we
consider them from $v_1$ to $v_n$, the relative ordering of each
pair of candidates~$a$
and~$b$ changes at most once. Single-crossingness is due to
\citet{mir:j:single-crossing} and \citet{rob:j:tax}.
\begin{enumerate}
\setcounter{enumi}{3}
\item By $\SC$, we mean a single-crossing domain sampled from the
  space of all such domains using the algorithm of
  \citet{szu-boe-bre-fal-nie-sko-sli-tal:j:map}: We generate votes
  iteratively, starting with some arbitrary vote $v_0$. In each
  iteration, given vote $v_i$, we form $v_{i+1}$ by taking $v_i$'s
  copy and swapping a randomly selected pair of adjacent candidates
  that were not swapped in preceding iterations. Altogether, we
  generate rankings $v_0, \ldots, v_{\binom{m}{2}}$ that form our
  domain.
\end{enumerate}
Note that the algorithm
of~\citet{szu-boe-bre-fal-nie-sko-sli-tal:j:map} does not sample
single-crossing domains uniformly at random (so far, the only known
algorithm for such uniform sampling requires exponential time).

Let $\calT$ be an ordered, rooted tree, where each internal node has
at least two children and each leaf is labeled with a unique candidate
from $C$ (we refer to such trees as $\GS$-trees).  A \emph{frontier}
of $\calT$ is the ranking of the candidates, obtained by reading the
leaves of $\calT$ from left to right.  Domain $\GS(\calT)$ consists
exactly of those rankings $v \in \calL(C)$ that are either a frontier
of $\calT$ or a frontier of a tree obtained from $\calT$ by reversing
the order of some nodes' children.
A domain $D$ is \emph{group-separable} if $D = \GS(\calT)$ for some
$\calT$.  We are particularly interested in the following two such
domains:
\begin{enumerate}
  \setcounter{enumi}{4}
\item $\GSbal$ is a group-separable domain defined by balanced binary
  trees, i.e., binary trees where each internal node has exactly two children and
  for each two leaves, their distance from the root differs at most by $1$.

\item $\GScat$ is a group-separable domain defined by caterpillar
  binary trees, i.e., trees where each internal node has exactly two
  children, of which at least one is a leaf.
\end{enumerate}
Group-separable domains were introduced by
\citet{ina:j:group-separable,ina:j:simple-majority}, but the above
tree-based definition is due to \citet{kar:j:group-separable}.

Let $d$ be some positive integer, and let
$x\colon C \rightarrow \reals^d$ be a function that associates the
candidates with distinct points in $\reals^d$.  A ranking
$v \in \calL(C)$ is consistent with $x$ if there is a point
$x_v \in \reals^d$ such that for each two candidates $a,b \in C$ such
that $a \pref_v b$ it holds that the Euclidean distance between $x_v$
and $x(a)$ is smaller than that between $x_v$ and $x(b)$.  $D(x)$ is
the domain that includes exactly the rankings consistent with $x$.
Such domains are called \emph{Euclidean} and were studied, e.g., by
\citet{ene-hin:b:spatial,ene-hin:b:spatial2}. We focus on:
\begin{enumerate}
  \setcounter{enumi}{6}
\item $\Int$, $\Square$, and $\Cube$, where the position of each
  candidate is sampled uniformly at random from, respectively,
  $[-1,1]$, $[-1,1]^2$, and $[-1,1]^3$.
\end{enumerate}
It is well-known that $\Int$ is also a single-crossing domain, and all
its votes are single-peaked with respect to the axis obtained by
sorting the positions of the candidates.

$\SP$, $\SC$, all group-separable domains, and $\Int$ are examples of
so-called \emph{Condorcet domains}. That is, for every election with
odd number of votes from one of these domains, there is a ranking $v$
of the candidates such that if $v$ ranks some candidate $a$ over some
other candidate $b$, then a strict majority of voters prefers $a$ to
$b$.

\subsection{Distance Between Elections}  
\emph{Isomorphic swap distance} between two elections (with the same
numbers of candidates and the same numbers of voters) is a measure of
their structural similarity, introduced
by~\citet{fal-sko-sli-szu-tal:c:isomorphism-jcss}.
We extend it
to apply to elections with different numbers of
voters (in essence, we pretend to duplicate the votes so that the
elections appear to be equal-sized).

\begin{definition}
\label{def:isomorphic-swap-dist}
For two elections $E = (C,V)$ and $F = (B,U)$ such that $|C|=|B|$,
where $V = (v_1,\dots,v_{n})$ and $U = (u_1,\dots,u_{k})$, their
isomorphic swap distance is defined as follows (the indices of the
votes from $V$ are taken modulo $n$, and the indices of the votes from
$U$ modulo $k$):
\[ 
        d_\swap(E,F) = \frac{1}{nk}
        \min_{\pi:[nk]\rightarrow [nk]}
        \min_{\sigma: C \rightarrow B}
        \sum_{i \in [nk]} \swap(\sigma(v_i),u_{\pi(i)}),
\]
where $\pi$ and $\sigma$ are bijections, and by $\sigma(v_i)$ we mean vote
$v_i$ where each candidate $c \in C$ is replaced with candidate $\sigma(c) \in B$.      
\end{definition}

\subsection[k-Kemeny and Inner Diversity]{$\boldsymbol{k}$-Kemeny and Inner Diversity}
Let
$E = (C,V)$ be an election and let $R = \{r_1, \ldots, r_k\}$ be a set
of preference orders from $\calL(C)$. By the Kemeny score of $R$ with
respect to election $E$, we mean:
\[ \textstyle
  \kemscore_E(R) = \sum_{v \in V} \swap(R,v).
\]
In other words, it is the sum of the swap distances of the election's
votes to their closest rankings from $R$. The $k$-Kemeny score of an
election $E$, denoted $\kkemscore(E)$, is the smallest Kemeny score of
a size-up-to-$k$ set of rankings for this election.  By Kemeny score
we mean the $1$-Kemeny score.
Computing the Kemeny score of a given election is well-known to be
hard~\citep{bar-tov-tri:j:who-won,hem-spa-vog:j:kemeny}, even for the
case of four
voters~\citep{dwo-kum-nao-siv:c:rank-aggregation,bie-bra-den:j:kemeny-hardness}.
The notion of the Kemeny score was the original idea of
Kemeny~\citep{kem:j:no-numbers}, whereas the extension to collections
of rankings was put forward by
\citet{fal-kac-sor-szu-was:c:div-agr-pol-map}, in the context of
election diversity. Specifically, they claimed that the appropriately
normalized weighted sum of an election's $k$-Kemeny scores (for
varying $k$) captures its diversity. Indeed, the larger an 
election's $k$-Kemeny score, the more difficult it is to cluster its
votes into $k$ groups, meaning that its votes are quite different from
 one another. Consequently, these votes are diverse. The same view was taken by
\citet{fal-mer-nun-szu-was:c:map-dap-top-truncated} and was recently
applied to measure the diversity of preference
domains by~\citet{fal-sor-szu-was:c:structured-kemeny}.
Specifically, given domain $D$ over size-$m$ candidate set, they defined its 
Kemeny vector to be:
\[
\kemscore(D) = (1\hbox{-}\kemscore(D)/|D|, 2\hbox{-}\kemscore(D)/|D|, \ldots, m\hbox{-}\kemscore(D)/|D|)
\]
and they said that a given domain $D_1$ is more diverse than another domain $D_2$ (both over
equal-sized candidate sets) if $\kem(D_1)$ dominates $\kem(D_2)$ or is close to dominating it; they did not formalize this notion as they considered only a few domains. 

We broadly refer to measures of diversity based on the difficulty of
clustering as capturing \emph{inner diversity}.

\section{Outer Diversity}\label{sec:definition}

Let $C$ be a set of candidates and let $D \subseteq \calL(C)$ be a
domain over~$C$.
By the \emph{average normalized swap distance} of $D$, denoted
$\ansd(D)$, we mean the expected swap distance between a vote
chosen from $\mathcal{L}(C)$ uniformly at random and the closest vote
in $D$, divided by the maximal possible distance between two votes
in $\calL(C)$.  Formally:
\[ \textstyle
    \ansd(D) = \frac{1}{m!}\sum_{u \in \mathcal{L}(C)} \swap(D,u)/ \textstyle\binom{m}{2}.
\]

The largest possible value of $\ansd(D)$ is $0.5$, obtained when $D$
consists of a single vote, and the smallest one is $0$, obtained
for the general domain.
To ensure that \emph{outer diversity} of a domain~$D$ is between $0$
and $1$ (where $0$ means complete lack of diversity
and $1$ means full diversity), we define it as the following linear
transformation of $\ansd(D)$.

\begin{definition}
  For a domain $D \subseteq \mathcal{L}(C)$, its
  \emph{outer diversity} is defined as:
  $$ \out(D) = 1 - 2\cdot\ansd(D).  $$
\end{definition}

While outer- and inner diversity notions are based on different
principles, they are interrelated in several ways. For example, inner
diversity, as defined by
\citet{fal-kac-sor-szu-was:c:div-agr-pol-map,fal-mer-nun-szu-was:c:map-dap-top-truncated,fal-sor-szu-was:c:structured-kemeny},
relies on analyzing $k$-Kemeny scores of given elections or domains,
whereas $\ansd(D)$ is simply the normalized $k$-Kemeny score of the
input domain $D$, with respect to the $\UN$ election.  Considered from
a different perspective,
$\ansd(D)$ is equal to the smallest possible isomorphic swap distance
between $\UN$ and a $D$-election.

% \begin{restatable}{proposition}{ProOuterSwapDistance}\label{pro:outer-swap-distance}
\begin{proposition}\label{pro:outer-swap-distance}
    For every domain $D \subseteq \mathcal{L}(C)$,
    it holds that:
    $$
        \ansd(D) = \min_{\text{$E$ is a $D$-election}}d_{\swap}(\UN,E) / \textstyle\binom{m}{2}.
    $$
\end{proposition}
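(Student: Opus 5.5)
We prove the two inequalities separately: first that the right-hand side is at most $\ansd(D)$, by constructing a suitable $D$-election, and then that every $D$-election is at least this far from $\UN$, by a termwise lower bound. Throughout, write $N = m!$ for the number of votes in $\UN$, and let $E = (C,U)$ be a $D$-election with $U = (u_1,\dots,u_k)$.

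\textbf{Upper bound.} For each $v \in \calL(C)$, fix a vote $f(v) \in D$ with $\swap(f(v),v) = \swap(D,v)$. Let $E^*$ be the $D$-election with $N$ voters whose votes are $f(v)$ for $v \in \calL(C)$, listed in the same order as the votes of $\UN$. We evaluate $d_\swap(\UN,E^*)$ with $\sigma$ the identity.

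Here $n = k = N$, so the index set is $[N^2]$. The $i$-th term pairs $v_{i \bmod N}$ with $u_{\pi(i)}$. We choose the bijection $\pi$ so that $\pi(i) \equiv i \pmod N$; for instance, $\pi$ is the identity on $[N^2]$.

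With these choices, each vote $v$ of $\UN$ is paired exactly $N$ times with $f(v)$. The sum therefore equals $N\sum_v \swap(D,v)$. Dividing by $nk = N^2$ gives $\frac1N\sum_v \swap(D,v) = \binom m2\ansd(D)$. Hence the minimum over $D$-elections is at most $\binom m2\ansd(D)$.

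\textbf{Lower bound.} Now let $E$ be an arbitrary $D$-election as above, and fix any bijections $\sigma$ and $\pi$ from \Cref{def:isomorphic-swap-dist}.

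We first handle the relabeling $\sigma$. Since $\sigma$ is a bijection $C \to C$, we have $\swap(\sigma(v),u) = \swap(v,\sigma^{-1}(u))$. As $v$ ranges over $\calL(C)$, so does $\sigma(v)$, so $\sigma$ merely permutes the votes of $\UN$.

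Next we bound each term. Every $u_{\pi(i)}$ lies in $D$, so
\[
\swap(\sigma(v_i),u_{\pi(i)}) \ge \swap(D,\sigma(v_i)).
\]

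Finally we count multiplicities. As $i$ ranges over $[nk]$, each index modulo $n = N$ occurs exactly $k$ times, so each vote of $\UN$ appears $k$ times. Since $\sigma$ permutes $\calL(C)$, each $w \in \calL(C)$ also appears exactly $k$ times as some $\sigma(v_i)$. Therefore the sum is at least $k\sum_{w}\swap(D,w)$. Dividing by $nk = Nk$ gives at least $\binom m2\ansd(D)$.

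This bound holds for every $\sigma$, $\pi$ and every $D$-election, so the minimum over $D$-elections is at least $\binom m2\ansd(D)$. Together with the upper bound, this yields the claimed equality.

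The only delicate point is the relabeling $\sigma$: it is not the identity in general, so we cannot compare $v_i$ with $D$ directly. It is resolved by noting that $\sigma$ permutes $\calL(C)$, so every vote of $\UN$ still appears as $\sigma(v_i)$ exactly $k$ times and the termwise bound sums correctly. The rest is bookkeeping of the modular indices in the definition.
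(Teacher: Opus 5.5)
Your proof is correct, and it is organized differently from the paper's. The paper argues extremally. It fixes a $D$-election attaining the minimum, pads it so that its number of voters is a multiple of $m!$, and asserts that $\sigma$ can be taken to be the identity. It then uses an exchange argument: if a vote of $\UN$ were paired with a domain vote that is not closest to it, replacing that voter would lower the distance. From this it concludes that every pair in an optimal matching realizes $\swap(D,u_i)$, and simply sums.

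You instead prove the two inequalities separately. The explicit election $E^*$ with $\pi$ the identity gives the upper bound. For the lower bound you use the termwise estimate $\swap(\sigma(v_i),u_{\pi(i)}) \ge \swap(D,\sigma(v_i))$, valid for every $D$-election and every choice of $\sigma$ and $\pi$, together with the multiplicity count under the modular indexing.

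Your route is more rigorous with respect to the definition as stated:
\begin{itemize}
\item It does not presuppose that the minimum over the infinite family of $D$-elections is attained; your $E^*$ shows that it is.
\item It needs no padding step.
\item It works directly with the modular-index bijection on $[nk]$. The paper tacitly treats $\pi$ as a perfect matching between equal-length vote lists, which is what lets its one-voter exchange affect only one term.
\item It justifies the treatment of $\sigma$ by the fact that relabeling permutes $\calL(C)$, instead of just asserting it.
\end{itemize}
The paper's route is shorter and makes the structural picture explicit: in an optimal election each vote of $\UN$ is paired with one of its nearest domain members. Your argument recovers this too, since equality forces every termwise inequality to be tight.
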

% \end{restatable} 
\begin{proof}
    Fix election $E = (C,V)$ yielding the minimal distance.
    Without loss of generality,
    we can assume that the number of votes in $V$
    is a multiple of $m!$, i.e.,
    $V = \{v_1,\dots, v_{k\cdot m!}\}$
    for some $k \in \mathbb{N}$,
    because creating $k$ additional copies of all votes
    does not affect the isomorphic swap distance.
    Let $u_1,\dots,u_{k \cdot m!}$ be
    copies of voters in $\UN$
    as denoted in \Cref{def:isomorphic-swap-dist}
    and $\pi : [k \cdot m!] \rightarrow [k \cdot m!]$
    be a matching of voters yielding the minimum distance.
    We can assume that the matching of candidates, $\sigma$, is the identity
    since for the distance to $\UN$
    every matching of candidates gives the same sum of distances.

    Observe that $\swap(u_i,v_{\pi(i)}) = \swap(D, u_i)$, for every $i \in [k \cdot m!]$
    as otherwise $d_\swap(\UN,E)$ could be decreased by
    exchanging $v_{\pi(i)}$ for $v$ yielding the minimum
    and keeping all other voters as is.
    This also implies that
    for each $i \in [m!]$ and $\ell \in [k-1]$,
    we have
    \(
    \swap(u_i,v_{\pi(i)}) = \swap(u_{i + \ell \cdot m!},v_{\pi(i + \ell \cdot m!)}).
    \)
    Then, we get
    \begin{align*}
\textstyle        \sum_{r \in \mathcal{L}(C)} \swap(D, r)
        &=\textstyle \sum_{i \in [m!]} \swap(D, u_{i}) \\
        &=\textstyle \sum_{i \in [m!]}\swap(u_{i}, v_{\pi(i)}) \\
        &=\textstyle \frac{1}{k} \sum_{i \in [k \cdot m!]}\swap(u_{i}, v_{\pi(i)}) \\
        &=\textstyle m! \cdot d_\swap(E,\UN),
    \end{align*}
  which yields the thesis.
\end{proof}

Since \citet{fal-kac-sor-szu-was:c:div-agr-pol-map} have shown that
proximity to $\UN$ is highly correlated with their form of inner
diversity, we conclude that both approaches are capturing the same
high-level idea.

\begin{table}[t]
    \centering
    \begin{tabular}{ccc}
    \toprule
                &       & Complexity of Finding \\
         Domain &  Size & Closest Ranking from $D$ \\
      \midrule
        $\GS(\calT)$ & $\leq 2^{m-1}$ & $O(m^2)$ \\
        $\GScat$ & $2^{m-1}$ & $O(m\log m)$ \\
        $\GSbal$ & $2^{m-1}$ & $O(m\log m)$ \\
        \midrule
        $\SP$    & $2^{m-1}$ & $O(m^2)$ \\
        $\SPDF$  & $2^{m+1} - 16$  & $O(m^4)$ \\
        $\SPOC$  & $m2^{m-2}$ & $O(m^2)$ \\
      $\SP(T)$ & --- & $\substack{\textstyle O(km^k) \\ \scriptscriptstyle \text{$k =$ number of $T$'s leaves}}$ \\
        $\SP(G)$ & --- & $\np$-com. \\
        \midrule
        $\SC$    & $1+\nicefrac{m(m-1)}{2}$ & $O(m^2)^*$ \\
        \midrule
        $\Int$   & $1+\nicefrac{m(m-1)}{2}$ & $O(m^2)^*$ \\
        $\Square$& $O(m^4)$                 & $O(m^4)^*$ \\
        $\Cube$  & $O(m^6)$                 & $O(m^6)^*$ \\
    \bottomrule
    \end{tabular}
    \caption{For each domain we give its size and the complexity of
      finding its closest member (in terms of swap distance) to a
      given input ranking. Running times marked with $^*$ do not
      include the time needed for preprocessing.}
    \label{tab:size-vote-complexity}
\end{table}

\section{Computing Outer Diversity}\label{sec:computation}

For domains over sufficiently small candidate sets, it is possible to
compute outer diversity exactly. In the most basic approach, given a
domain $D$ over candidate set $C$, we could simply compute the swap
distance between every vote in $D$ and every vote in
$\calL(C)$. Naturally, this is very inefficient and computing outer
diversity of, say, $\SP$ with $m$ candidates would require time
$O(m! \cdot 2^{m-1} \cdot m\sqrt{\log m})$;
the general domain has $m!$ rankings, $\SP$ has $2^{m-1}$ of them,
and it takes $O(m\sqrt{\log m})$ time to compute the swap
distance~\citep{cha-pat:c:fast-swap-distance}.
Fortunately, there is a faster approach that given a domain $D$,
for each $i$ forms a set $D_i$ of rankings at swap distance $i$
from $D$
(the algorithm and the proof can be found in \Cref{app:computation}).

\begin{restatable}{proposition}{ProBfs}\label{pro:bfs}
% \begin{proposition}\label{pro:bfs}
  There is an algorithm that
  given domain $D$ over $m$ candidates (represented by listing its members),  
  computes $\out(D)$ in time $O(m^2\cdot m!)$.
% \end{proposition}
\end{restatable}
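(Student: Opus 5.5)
We plan to run a multi-source breadth-first search on the \emph{permutahedron graph}: its vertices are the $m!$ rankings in $\calL(C)$, and two rankings are adjacent exactly when one is obtained from the other by swapping a single pair of adjacent candidates. The key fact we rely on is the classical one that $\swap(u,v)$ equals the length of a shortest path between $u$ and $v$ in this graph. That is, the swap distance equals the minimum number of adjacent transpositions needed to turn $u$ into $v$ (the number of inversions, as in bubble sort). Consequently, for every $u$, $\swap(D,u)$ is the graph distance from $u$ to the set $D$.

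The algorithm initializes $D_0 = D$ and marks all its members as visited. Given $D_i$, it forms $D_{i+1}$ by iterating over each $v \in D_i$ and each of the $m-1$ adjacent swaps. Every resulting ranking that is not yet visited is marked and added to $D_{i+1}$. The search stops once all $m!$ rankings are visited, which happens after at most $\binom{m}{2}$ layers. It then outputs
\[
\out(D) = 1 - 2\cdot \frac{\sum_i i\cdot |D_i|}{m!\binom{m}{2}}.
\]
Correctness follows by the standard BFS induction: $D_i$ is exactly the set of rankings at graph distance $i$ from $D$, hence at swap distance $i$ from $D$ by the fact above.

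For the running time, we index rankings by their Lehmer code rank, so that the visited set is a bit array of size $m!$. Each ranking enters some layer exactly once and generates $m-1$ neighbours. Each neighbour costs $O(m)$ to write down and to rank, via the Lehmer code with an incremental or bit-trick computation, or simply in $O(m)$ time since only two positions change and the rank can be updated by adjusting two digits. This gives $O(m)\cdot O(m) = O(m^2)$ work per ranking, and $O(m^2\cdot m!)$ in total. Initializing the search from the listing of $D$ costs $O(m\cdot|D|) \le O(m\cdot m!)$.

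The main obstacle is making the visited-check $O(m)$ per neighbour rather than $O(m\log m)$ or hashing-dependent. We would handle it by noting that an adjacent swap at positions $j,j+1$ changes only the Lehmer digits at those positions: one digit changes by $\pm 1$ and the two are exchanged accordingly. The factorial-base rank can therefore be updated in $O(1)$ given the current rank, and even a naive $O(m)$ recomputation suffices for the claimed bound.
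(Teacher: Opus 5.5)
Your proposal is correct and takes essentially the same approach as the paper: a multi-source BFS from $D$ over the adjacent-swap (permutahedron) graph that builds the layers $D_i$ and pays $O(m)$ for each of the $O(m\cdot m!)$ neighbour membership checks. The only difference is the dictionary. You index rankings by Lehmer-code rank in a bit array of size $m!$, whereas the paper stores visited rankings in a trie, and both give $O(m)$ lookups. In your version, compute the Lehmer digits of each dequeued ranking once, in $O(m^2)$ time, and then update the rank in $O(1)$ per neighbour; this keeps the total at $O(m^2\cdot m!)$.
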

% \begin{proof}
%   We use \Cref{alg:growing-balls}, whose correctness
%   follows directly from the definitions of $\ansd(D)$ and $\out(D)$.
%   In line \ref{gb:core} of the algorithm, for
%   each vote $v \in \calL(C)$ we consider all $m-1$ votes $u$
%   obtained from $v$ by a single swap of adjacent candidates,
%   resulting in $O(m \cdot m!)$ memberships checks.
%   Rankings from $\bigcup_{j=0}^{i} D_j$ are stored in a trie (prefix tree),
%   which allows $O(m)$-time membership checks and insertions.
%   For the current iteration $i$, we store only sets $D_i$ and $D_{i+1}$
%   (each taking $O(m \cdot m!)$ space) while retaining the values $|D_j|$ for $j<i$.
%   Hence, the computational bottleneck is line \ref{gb:core} executed $O(m \cdot m!)$ times, each taking $O(m)$ time, leading to a total running time of $O(m^2 \cdot m!)$.
% \end{proof}

To compute outer diversity for larger candidate sets, we resort to
sampling.  Namely, given a domain $D$ over a size-$m$ candidate set
$C$, we fix a number $N$, sample $N$ rankings from $\calL(C)$, for
each sampled ranking $v$ we compute $\swap(D,v)$ and output the
average of these values, divided by $\binom{m}{2}$. This gives an estimate for
$\ansd(D)$, based on which we obtain $\out(D)$.  However, to implement
this idea efficiently, we need fast algorithms for
the following problem:
Given a ranking $v$ and a domain $D$, compute
$\swap(D,v)$.
We dedicate the rest of this section to seeking  algorithms
for this problem for various domains, and to establishing
its complexity.

On the outset, the problem can be even $\np$-hard.
For example, for each set of $4m$ candidates
$C = \{c_{i,j} : i \in [4],j \in [m]\}$, let the \emph{4-alignment}
domain contain each vote of the form
\( \{c_{1,1},\dots,c_{1,m}\} \! \succ\! \{c_{2,1},\dots,c_{2,m}\} \!
\succ\!  \{c_{3,1},\dots,c_{3,m}\} \! \succ\!
\{c_{4,1},\dots,c_{4,m}\}, \) in which the order of the candidates,
based on their second indices, is identical in each block. Then we
have the following hardness result (in essence, for this domain the
problem of finding a closest vote in the domain becomes the problem of
computing Kemeny score for $4$ voters, known to be
$\np$-hard~\citep{dwo-kum-nao-siv:c:rank-aggregation,bie-bra-den:j:kemeny-hardness}).

% \begin{restatable}{theorem}{ThmAlignment}
\begin{theorem}
    Let $D$ be the 4-alignment domain.
    Given vote $v$ and integer $d \in \mathbb{N}$
    it is NP-complete to decide whether $\swap(D,v)\le d$.
\end{theorem}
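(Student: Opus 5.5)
Membership in NP is immediate: a certificate is a single permutation $\tau$ of $[m]$. It determines the domain vote $u_\tau$, namely block $1$ ordered by $\tau$, then block $2$ ordered by $\tau$, and so on. We can then check $\swap(u_\tau,v)\le d$ in polynomial time.

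For hardness, I would reduce from Kemeny score with four voters, which is NP-hard~\citep{dwo-kum-nao-siv:c:rank-aggregation,bie-bra-den:j:kemeny-hardness}. The input is rankings $w_1,\dots,w_4$ over $\{a_1,\dots,a_m\}$ and a bound $K$, and the question is whether some $\tau$ satisfies $\sum_{i} \swap(\tau,w_i)\le K$. Identify $a_j$ with second index $j$. Build $v$ as the concatenation of the four blocks in their fixed order $1,2,3,4$, where block $i$ lists $c_{i,1},\dots,c_{i,m}$ in the order $w_i$ (that is, $c_{i,j}$ is placed according to where $w_i$ puts $a_j$). Set $d=K$.

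The key step is a decomposition of the distance. For any $u_\tau\in D$, split the pairs of candidates into cross-block pairs and same-block pairs. A cross-block pair $c_{i,j},c_{i',j'}$ with $i<i'$ is ordered $c_{i,j}\pref c_{i',j'}$ in both $v$ and $u_\tau$, so it contributes $0$. A same-block pair in block $i$ is inverted exactly when $\tau$ and $w_i$ disagree on $a_j,a_{j'}$. Hence
\[
\swap(u_\tau,v)=\sum_{i=1}^4 \swap(\tau,w_i),
\]
so $\min_{u\in D}\swap(u,v)$ equals the Kemeny score of $(w_1,\dots,w_4)$. It follows that $\swap(D,v)\le d$ holds iff the Kemeny score is at most $K$.

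No step is a real obstacle. The only point needing care is to check that every domain vote places the blocks in the same order as $v$, so that cross-block pairs cost nothing, and that the common within-block order $\tau$ ranges over all permutations. Both follow directly from the definition of the 4-alignment domain. The construction is clearly polynomial, with $4m$ candidates.
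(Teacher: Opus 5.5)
Your proposal is correct and matches the paper's proof. Both use the same reduction from \textsc{KemenyOn4Votes}: $v$ is the concatenation of the four input votes, each domain vote is the concatenation of four copies of a common order, and $d$ is set to the Kemeny bound, so that $\swap(v,f(u))=\sum_{i\in[4]}\swap(v_i,u)$. You also justify that identity explicitly (cross-block pairs are ordered identically and cost nothing), whereas the paper only asserts it.
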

% \end{restatable}
\begin{proof}
    The verification is straightforward.
    Given the vote in $D$ that yields the closest distance to $v$,
    we can check whether this distance is larger than $d$
    in polynomial time.

    To show hardness,
    we give a reduction from \textsc{KemenyOn4Votes}.
    In this problem we are given
    a candidate set $C = \{c_1,\dots,c_m\}$,
    four votes $v_1,v_2,v_3,v_4 \in \mathcal{L}(C)$,
    and an integer $d \in \mathbb{N}$,
    and we ask whether there exists a ranking $u \in \mathcal{L}(C)$
    for which it holds that $\sum_{i \in [4]}\swap(v_i,u)\le d$.
    This is known to be NP-complete~\cite{dwo-kum-nao-siv:c:rank-aggregation,bie-bra-den:j:kemeny-hardness}.

    Now, for each instance of \textsc{KemenyOn4Votes},
    let us construct an instance of our problem.
    To this end, let $C' = \{c_{i,j} : i \in [4],j \in [m]\}$
    and let $v \in \mathcal{L}(C')$ be a concatenation of votes $v_1,v_2,v_3$, and $v_4$, i.e.,
    $c_{i,j} \succ_v c_{i',j'}$, if and only if,
    $i < i'$ or $i = i'$ and $c_j \succ_{v_i} c_{j'}$.
    Also, for every ranking $u \in \mathcal{L}(C)$ let
    $f(u)$ denote a ranking in $\mathcal{L}(C')$ that is a
    concatenation of 4 copies of $u$, i.e.,
    $c_{i,j} \succ_{f(u)} c_{i',j'}$, if and only if,
    $i < i'$ or $i = i'$ and $c_j \succ_{u} c_{j'}$.
    Then, 4-agreement domain can be alternatively written as
    $D = \{f(u) : u \in \mathcal{L}(C)\}$.
    Moreover, $\swap(v,f(u)) = \sum_{i \in [4]}\swap(v_i,u)$,
    for each $u \in \mathcal{L}(C)$.
    Therefore, indeed, there exists $u \in \mathcal{L}(C)$ such that
    $\sum_{i \in [4]}\swap(v_i,u) \le d$, if and only if,
    $\swap(D,v) \le d$.
\end{proof}

Despite this negative result, for most of our domains
we find efficient algorithms for computing the distance to a given
vote (see \Cref{tab:size-vote-complexity}).  In the following, we
always use $C = \{c_1, \ldots, c_m\}$ to denote the set of $m$
candidates in the domain under consideration.

\subsection{Single-Peaked Domains}

Let us first consider the family of single-peaked domains. 
We note that \citet[Theorem 4.5.]{fal-hem-hem:j:nearly-sp} already
gave a polynomial-time algorithm for computing the distance between
$\SP$ and a given ranking, but their approach---based on dynamic
programming---required $O(m^3)$ time.
We improve this algorithm to run in $O(m^2)$ time.  The main idea is
to use dynamic programming to iteratively compute the distance between
a given ranking $v$ and votes that rank more and more bottom
candidates as required by $\SP$.

Assume that we are given a vote $v$ and a societal axis
$c_1 \rhd c_2 \rhd \dots \rhd c_m$.  For each
$\ell,r \in \{0,1,2,\dots,m\}$ such that $\ell + r \le m$, let
$C_{\ell,r}$ denote the set of the first $\ell$
and the last $r$  candidates according to $\rhd$.
Formally, we have
$C_{\ell,r} = \{c_1, \dots, c_\ell\} \cup \{c_{m+1-r}, \dots, c_m\}$;
by convention, for $\ell=0$ we have
$\{c_1, \dots, c_\ell\}=\varnothing$, and for $r=0$ we have
$\{c_{m+1-r}, \dots, c_m\}=\varnothing$.  Then, by $U_{\ell,r}$ we
denote the set of all votes $u \in \mathcal{L}(C)$ in which
(a)~candidates from $C_{\ell,r}$ are in the bottom $\ell+r$ positions,
and (b)~for each $t \in \{m, m-1, \ldots, m-\ell-r+1\}$,
the top $t$ candidates of $u$ form an interval within $\rhd$.  Observe
that $U_{0,0} = \calL(C)$, whereas if $\ell + r = m$, then
$U_{\ell, r} = \SP$.
We write $A_{\ell,r}$ to denote the minimal swap distance between
$v$ and $u \in
U_{\ell,r}$.  As we will show, all values of
$A_{\ell,r}$ can be computed efficiently in \Cref{alg:single-peaked},
using a recursive formula.

\begin{algorithm}[t] 
    \caption{Distance between a ranking and $\SP$}
    \label{alg:single-peaked}
    \begin{algorithmic}[1]{
        \REQUIRE Ranking $v \in \mathcal{L}(C)$, societal axis $c_1 \rhd \dots \rhd c_m$ \\ 

        \textsc{Phase 1, Precomputation:}
        \FOR{$i \in [m]$}
            \STATE $L_{i,i} \leftarrow 0, R_{i,i} \leftarrow 0$
            \STATE \textbf{for} $j \in \{i+1,\dots,m\}$ \textbf{do}
                $L_{i,j} \leftarrow L_{i,j-1} +
                    [c_{i} \succ_v c_j] $
            \STATE \textbf{for} $j \in \{i-1,\dots,1\}$ \textbf{do}
                $R_{j,i} \leftarrow R_{j+1,i} +
                [c_{i} \succ_v c_j]$
        \ENDFOR
        
        \textsc{Phase 2, Main Computation:}
        \STATE $A_{0,0} \leftarrow 0$
        \STATE \textbf{for} $\ell \in [m-1]$ \textbf{do}
            $A_{\ell,0} \leftarrow A_{\ell-1,0} + L_{\ell,m}$ \label{alg:sp:recursion1}
        \FOR{$r \in [m-1]$}
            \STATE $A_{0,r} \leftarrow A_{0,r-1} + R_{1,m+1-r}$ \label{alg:sp:recursion2}
            \FOR{$\ell \in [m-r-1]$}
                \STATE $A_{\ell,r} \leftarrow \min(
                    A_{\ell-1,r} + L_{\ell,m-r}, \
                    A_{\ell,r-1} + R_{\ell+1,m+1-r})$ \label{alg:sp:recursion}
            \ENDFOR
        \ENDFOR
        \RETURN \label{alg:sp:return} $\min_{\ell \in [m]}A_{\ell-1,m-\ell}$}
        \end{algorithmic}
\end{algorithm}

\begin{theorem}
    \label{thm:vote-domain:single-peaked}
    \Cref{alg:single-peaked} computes the distance between a given vote and a single-peaked domain in time $O(m^2)$.
\end{theorem}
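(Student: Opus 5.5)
The proof has two parts: correctness of the recurrence for $A_{\ell,r}$ and the running-time bound. I will use a decomposition of the swap distance. If a vote $u$ has a set $S$ of candidates occupying its bottom $|S|$ positions, then $\swap(u,v)$ splits into three terms. The first counts pairs inside $C\setminus S$, the second counts pairs inside $S$, and the third counts cross pairs $(a,b)$ with $a\notin S$, $b\in S$ and $b\succ_v a$. The third term depends only on $S$ and $v$.

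First I verify what Phase~1 computes. By induction on $j$, $L_{i,j}=|\{k\in\{i+1,\dots,j\}: c_i\succ_v c_k\}|$ and $R_{j,i}=|\{k\in\{j,\dots,i-1\}: c_i\succ_v c_k\}|$. So $L_{\ell,m-r}$ is exactly the number of inversions created when $c_\ell$ is placed below all of $\{c_{\ell+1},\dots,c_{m-r}\}$, the remaining interval. Symmetrically, $R_{\ell+1,m+1-r}$ is the cost of placing $c_{m+1-r}$ below all of $\{c_{\ell+1},\dots,c_{m-r}\}$.

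Next I prove the recurrence by induction on $\ell+r$. In any $u\in U_{\ell,r}$ with $\ell+r\ge 1$, the top $m-\ell-r+1$ candidates form an interval. Together with condition (a), this forces the candidate at position $m-\ell-r+1$ to be either $c_\ell$ or $c_{m+1-r}$. In the first case $u\in U_{\ell-1,r}$; in the second $u\in U_{\ell,r-1}$. Apply the decomposition with $S=C_{\ell,r}$ to such a $u$, and compare it with any $u'$ in the same $U$-class that agrees with $u$ below position $m-\ell-r$. The difference in swap distance to $v$ comes only from the pairs between the newly placed candidate and the remaining interval, and that difference is the $L$ or $R$ term above. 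One has to check that this additive cost is independent of how the remaining interval is ordered. It is, because the cost is a count over pairs with a fixed orientation in $u$. Conversely, any $u\in U_{\ell-1,r}$ whose position $m-\ell-r+1$ holds $c_\ell$ lies in $U_{\ell,r}$, and likewise on the right side. Minimizing therefore gives $A_{\ell,r}=\min(A_{\ell-1,r}+L_{\ell,m-r},\,A_{\ell,r-1}+R_{\ell+1,m+1-r})$. Lines~\ref{alg:sp:recursion1} and~\ref{alg:sp:recursion2} are the boundary cases where only one option exists.

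Once only the top candidate remains, the last placed candidate is determined with zero extra cost. So when $\ell+r=m-1$, $A_{\ell,r}$ ranges over single-peaked votes with peak $c_{\ell+1}$. Every vote in $\SP$ has some peak, so line~\ref{alg:sp:return} returns $\min_{u\in\SP}\swap(u,v)$.

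The main obstacle is stating the independence claim cleanly. The minimizer for $A_{\ell-1,r}$ may order the top part arbitrarily, so the recurrence needs the set $U_{\ell,r}$ to be exactly the union described above, and the incremental cost to be the same for every member of that union. I will handle this by writing $\swap(u,v)$ as the sum of the costs of the successively placed bottom candidates, each charged against the candidates above it, plus the cost of the pairs inside the top interval.

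For the running time, Phase~1 has two nested loops of length $O(m)$ with $O(1)$ work per step, given $O(m^2)$ precomputed comparisons via $\pos_v$. Phase~2 fills $O(m^2)$ table entries in $O(1)$ each, and the return step takes $O(m)$. The total is $O(m^2)$.
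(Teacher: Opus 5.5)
Your proposal is correct. It shares the paper's skeleton: the same sets $U_{\ell,r}$, the same observation that position $m-\ell-r+1$ must hold $c_\ell$ or $c_{m+1-r}$, the same split $U_{\ell,r}=U_{\underline{\ell},r}\cup U_{\ell,\underline{r}}$, and the same reading of $L_{\ell,m-r}$ and $R_{\ell+1,m+1-r}$ as incremental costs.

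Where you differ is in justifying $\min_{u\in U_{\underline{\ell},r}}\swap(u,v)=A_{\ell-1,r}+L_{\ell,m-r}$.

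\begin{itemize}
\item \textbf{The paper} uses an exchange argument. It takes a minimizer $u$ and re-sorts its top $m-\ell-r+1$ candidates by $v$ to get $u'$. It then argues that $u'$ is optimal in $U_{\ell-1,r}$, since otherwise the same rearrangement of the bottom would improve $u$. Finally it notes that the gap is exactly $L_{\ell,m-r}$.
\item \textbf{Your charge decomposition} shows that for every $u\in U_{\ell,r}$, $\swap(u,v)$ equals the inversions inside $\{c_{\ell+1},\dots,c_{m-r}\}$ plus a sum of $L$/$R$ values. That sum is determined only by the chain of nested intervals. Hence $A_{\ell,r}$ is a minimum-weight monotone lattice path from $(0,0)$ to $(\ell,r)$, and the recurrence is just the shortest-path recurrence.
\end{itemize}

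Your route makes the independence of the top and bottom blocks explicit and avoids reasoning about modified minimizers, so it is arguably cleaner.

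You should lead with that decomposition, because an earlier sentence is false as written. You claim that for \emph{any} $u'$ in the same class agreeing with $u$ below position $m-\ell-r$, the difference is exactly the $L$ or $R$ term. But such a $u'$ may order $\{c_{\ell+1},\dots,c_{m-r}\}$ differently from $u$. Read literally, it even keeps $c_\ell$ at position $m-\ell-r+1$. Either way, the difference picks up pairs unrelated to $L_{\ell,m-r}$.
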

  
\begin{proof}
  For the running time, observe that each of our loops is over at most
  $m$ elements, and we have at most two levels of nested loops.  Each
  individual iteration can be completed in time $O(1)$.  The final
  minimum in line \ref{alg:sp:return} requires $O(m)$ time.

  Let us now analyze the correctness of the algorithm.  For each
  $i, j \in [m]$, with $i \le j$, we let $L_{i,j}$ be the number of
  candidates in $\{c_i, c_{i+1}, \ldots, c_j\}$ that $v$ ranks below
  $c_i$. Consequently, we have that $L_{i,i} = 0$ and, if $i < j$,
  then either $L_{i,j} = L_{i,j-1} +1$ (if $v$ ranks $c_i$ ahead of
  $c_j$) or $L_{i,j} = L_{i,j-1}$ (otherwise).  Similarly, for $j \leq i$, $R_{i,j}$
  is the number of candidates in $\{c_j, c_{j+1}, \ldots, c_i\}$ that
  $v$ ranks below $c_i$ ($R_{j,i}$ satisfies analogous relations as
  $L_{i,j}$). The algorithm computes the values of $L_{i,j}$ and
  $R_{j,i}$ in \textsc{Phase 1}.

  Then, in \textsc{Phase 2}, the algorithm computes all the values
  $A_{\ell,r}$ for $\ell, r \in [m]$ such that $\ell+r \leq m-1$.  Let
  us fix such $\ell$ and $r$. We note that every ranking in
  $U_{\ell,r}$ either ranks $c_\ell$ or $c_{m+1-r}$ on position
  $m+1-\ell-r$ (i.e., on the $\ell+r$'th position from the
  bottom). Indeed, for all rankings in $U_{\ell,r}$ we have that the first
  $m+1-\ell-r$ candidates form an interval within $\rhd$. However, by
  definition, all of these candidates, except for the one ranked on
  position $m-\ell-r+1$, belong to $C \setminus
  C_{\ell,r}$. Consequently, to form the interval, the candidate on
  position $m-\ell-r+1$ must be either $c_\ell$ or $c_{m+1-r}$.  Let
  $U_{\underline{\ell},r}$ be a subset of votes from $U_{\ell-1,r}$
  that additionally have $c_\ell$ in the position $m+1-\ell-r$.
  Similarly, let $U_{\ell,\underline{r}}$, be a subset of votes from
  $U_{\ell,r-1}$ with $c_{m+1-r}$ in the position $m+1-\ell-r$. By the
  preceding argument, we have that
  $U_{\ell,r} = U_{\underline{\ell},r} \cup U_{\ell,\underline{r}}$
  (if $\ell =0$, we assume $U_{\underline{\ell},r} = \varnothing$, if
  $r=0$, $U_{\ell,\underline{r}} = \varnothing$).
  Thus,
  \( A_{\ell,r} = \min(\swap(U_{\underline{\ell},r},v),
  \swap(U_{\ell,\underline{r}},v)).  \)

  Let $u$ be a vote in $U_{\underline{\ell},r}$ that minimizes
  $\swap(u,v)$.  Observe that $m-\ell-r$ first candidates in $u$
  appear in the same relative order as they appear in $v$ (otherwise, ordering
  them as in $v$ would decrease the distance).  Let $u'$ be a vote
  obtained from $u$ by ensuring that it ranks its first $m-\ell-r+1$
  candidates in the same relative order as in $v$ (in other words,
  $u'$ is the same as $u$, except that it might rank $c_\ell$ some
  positions earlier).
  It must be that $u' \in U_{\ell-1,r}$.  Moreover, we can show that
  $u'$ minimizes swap distance to $v$ among rankings in
  $U_{\ell-1,r}$, i.e., $\swap(u',v) = A_{\ell-1,r}$.  Indeed, the
  first $m-\ell-r+1$ candidates are in the optimal order (the same as
  in $v$), and if rearranging the last $\ell+r-1$ candidates could
  decrease the distance, we could also rearrange them in the same way
  in $u$.  Now, when we look at the inversions counted in
  $\swap(v,u)$, we see that we count all inversions that we count in
  $\swap(v,u')$ and additionally those from having $c_\ell$ after all
  of the first $m-\ell-r$ candidates.  But those are exactly the
  inversions we store in $L_{\ell,m-r}$.  Thus, we get that
  \(\swap(U_{\underline{\ell},r},v) = A_{\ell-1,r} +
  L_{\ell,m-r}.  \) Analogously, we can prove that
  \( \swap(U_{\ell,\underline{r}},v) = A_{\ell,r-1} +
  R_{\ell+1,m+1-r}.  \) This way, we obtain the recursive equation
  used in line \ref{alg:sp:recursion}, as well as the equations from
  lines \ref{alg:sp:recursion1} and \ref{alg:sp:recursion2} (in their
  cases either $r=0$ or $\ell=0$ so respective parts of the equation
  disappear).

    Finally, for $\ell \in [m]$,
    we observe that $A_{\ell-1,m-\ell}$ is the minimal distance
    from $v$ to a single-peaked ranking $u$ in which $c_{\ell}$ is the top candidate.
    Thus, to get the overall smallest distance,
    we take the minimum from all these values.
\end{proof}

Every vote in $\SPOC$ is single-peaked along the axis
obtained by ``cutting'' the cycle between some two adjacent
candidates~\cite{pet-lac:j:spoc}.  There are $m$ such axes, hence we
can run \Cref{alg:single-peaked} for each of them and choose the
minimum distance.  This gives as an algorithm running in time
$O(m^3)$. We can improve that and get an $O(m^2)$ algorithm by a
similar dynamic programming algorithm as for $\SP$
(the algorithm and the proof can be found in \Cref{app:computation:sp}).

\begin{theorem}
  There is an algorithm that
  computes the swap distance between a given vote and $\SPOC$ in time
  $O(m^2)$.
\end{theorem}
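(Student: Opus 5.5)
We mirror \Cref{alg:single-peaked}, building the ranking from the bottom up, and exploit the fact that in $\SPOC$ the bottom part of a vote is a cyclic interval. Arrange the candidates on the cycle $c_1, c_2, \ldots, c_m, c_1$ with indices modulo $m$. For a vote $u \in \SPOC$, every top-$t$ set is a cyclic interval, so every bottom-$s$ set is also a cyclic interval, since the complement of a cyclic interval on a cycle is again one. Conversely, if every bottom set of $u$ is a cyclic interval, then every top set is one as well, and so $u \in \SPOC$. For each cyclic interval $I = \{c_i, c_{i+1}, \ldots, c_{i+s-1}\}$ with $1 \le s \le m-1$, let $B_I$ be the minimum of $\swap(u,v)$ over votes $u$ in which $I$ occupies the bottom $|I|$ positions and each bottom-$t$ set for $t \le |I|$ is a cyclic interval. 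There are $m(m-1) = O(m^2)$ such intervals. The answer will be $\min_{i} B_{C \setminus \{c_i\}}$, because the top candidate is unconstrained once the bottom $m-1$ candidates are placed.

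The recursion follows the proof of \Cref{thm:vote-domain:single-peaked}. The candidate at position $m-s+1$, which is the topmost member of $I$, must be an endpoint of $I$, namely $c_i$ or $c_{i+s-1}$, because removing it from $I$ must leave a cyclic interval. If it is $c_i$, the same exchange argument as for $\SP$ applies. Optimally, the top $m-s$ candidates are ordered as in $v$, and placing $c_i$ just below them costs exactly the number of candidates in $C \setminus I$ that $v$ ranks below $c_i$. Hence
\[
B_I = \min\bigl(B_{I \setminus \{c_i\}} + X(c_i, C \setminus I),\; B_{I \setminus \{c_{i+s-1}\}} + X(c_{i+s-1}, C \setminus I)\bigr),
\]
where $X(c, S) = |\{b \in S : c \succ_v b\}|$ and $B_\varnothing = 0$. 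The correctness argument carries over verbatim. The bottom part and the top part contribute independently: inversions inside the top block are zero when it is sorted as in $v$, and the cross terms between the top block and the bottom block are accounted for one candidate at a time.

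The main obstacle is evaluating $X(c_i, C \setminus I)$ in $O(1)$ time for all $O(m^2)$ pairs. The complement $C \setminus I$ is the cyclic interval $\{c_{i+s}, \ldots, c_{i-1}\}$, which starts immediately after $I$ and ends immediately before $c_i$. So for the endpoint $c_i$ we need prefix-style counts $X(c_i, \{c_{i-k}, \ldots, c_{i-1}\})$ over cyclic intervals ending just before $c_i$. For the endpoint $c_{i+s-1}$ we need counts over cyclic intervals starting just after it. Both families can be precomputed in $O(m^2)$ total time, exactly like the tables $L$ and $R$ in Phase~1 of \Cref{alg:single-peaked}. For each candidate $c$ we walk around the cycle in each direction and accumulate $[c \succ_v b]$, taking indices modulo $m$. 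With these tables each recursion step takes $O(1)$ time. Processing intervals in order of increasing size gives $O(m^2)$ states with $O(1)$ work each, and the final minimum takes $O(m)$ time, so the total running time is $O(m^2)$.
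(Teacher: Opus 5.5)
Your proposal is correct and takes essentially the same approach as the paper: a dynamic program over the $O(m^2)$ cyclic intervals in which each newly placed candidate must be an endpoint of the current interval, with the incremental cost read from $O(m^2)$ precomputed cyclic prefix counts. The only difference is direction: the paper's appendix algorithm grows the \emph{top} block downward (its states are top intervals, with $A_{i,i}$ counting the candidates $v$ prefers to $c_i$), whereas you grow the bottom block upward exactly as in \Cref{alg:single-peaked}, which works equally well because the complement of a cyclic interval is again a cyclic interval.
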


We can also extend \Cref{alg:single-peaked} to work for the case of
$\SP(T)$, where $T$ is an $\SP$-tree. If $T$ has $k$ leaves (i.e., $k$
nodes of degree~$1$), then the algorithm requires $O(km^k)$ time.  The
main idea is to implement dynamic programming over sets of connected
vertices in $T$, of which there are $O(m^k)$
(the algorithm and the proof can be found in \Cref{app:computation:sp}).

\begin{theorem}
    \label{thm:vote-domain:single-peaked-on-a-tree}
    There is an algorithm that given an $\SP$-tree that has $k$
    leaves, computes the swap distance between a given vote and $\SP(T)$ in
    time $O(km^k)$.
\end{theorem}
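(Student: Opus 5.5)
We generalize the bottom-up dynamic programming of \Cref{alg:single-peaked}. A ranking $u$ is in $\SP(T)$ iff every top-$t$ prefix induces a connected subtree. Equivalently, reading $u$ from the bottom, each suffix is the complement of a connected subtree. Removing the bottom candidate from a connected set $S$ must leave $S$ connected, so that candidate must be a leaf of the subtree $T[S]$.

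\textbf{Dynamic program.} Let $\mathcal{S}$ be the family of vertex sets $S \subseteq V(T)$ inducing nonempty connected subtrees. For each $S\in\mathcal{S}$, define $A_S$ as the minimum of $\swap(u,v)$ over all rankings $u$ of $C$ such that:
\begin{itemize}
\item the candidates of $C\setminus S$ occupy the bottom $m-|S|$ positions, ordered so that each intermediate top-prefix is connected;
\item the candidates of $S$ occupy the top positions, ordered as in $v$.
\end{itemize}
Then $A_{V(T)}=0$, and $\swap(\SP(T),v)=\min_{c} A_{\{c\}}$.

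\textbf{Recurrence.} If $x$ is a leaf of $T[S]$ with $|S|\ge 2$, set $S'=S\setminus\{x\}$. Then
\[
A_{S'}=\min_{x} \bigl(A_S + N_S(x)\bigr),
\]
where the minimum ranges over all $S\supseteq S'$ with $S\in\mathcal{S}$ and $S'\cup\{x\}=S$, and $N_S(x)$ is the number of candidates $y\in S\setminus\{x\}$ with $x\succ_v y$. These are exactly the inversions created by moving $x$ from its $v$-relative position inside $S$ to the bottom of $S$.

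Correctness follows the exchange argument in the proof of \Cref{thm:vote-domain:single-peaked}. In an optimal $u$ for $S'$, the candidate just below the block $S'$ is some $x$ with $S=S'\cup\{x\}$ connected. Reordering the top $|S|$ positions of $u$ as in $v$ yields a feasible ranking for $S$. The distance then splits into the inversions counted in that ranking plus exactly $N_S(x)$. Conversely, any choice of $x$ gives a feasible ranking for $S'$.

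It is cleaner to run the recursion top-down from $V(T)$, processing sets in decreasing size. For each $S$ and each leaf $x$ of $T[S]$, relax $A_{S\setminus\{x\}}$. The leaf condition on $x$ guarantees that $S\setminus\{x\}$ remains connected.

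\textbf{Counting and complexity.} The main obstacle is bounding $|\mathcal{S}|$ and achieving constant time per transition.

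For the count, a subtree of $T$ is determined by its set of leaves. Not every leaf subset yields a subtree, but we can encode each subtree differently. Root $T$ at a leaf, or take the $k$ root-to-leaf paths. A connected subtree is then determined by choosing, on each of the $k$ paths from some fixed vertex to the leaves of $T$, how far it extends. This gives $O(m^k)$ subtrees; more carefully, $m$ choices of a "top" vertex times $O(m^{k-1})$ cut points along the branches.

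For transitions, a subtree $T[S]$ has at most $k$ leaves, since every leaf of a subtree is either a leaf of $T$ or a boundary vertex, and the number of boundary vertices is bounded by the branching. So each state has $O(k)$ transitions, for $O(km^k)$ transitions in total.

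Each $N_S(x)$ must be obtained in $O(1)$. Compute it incrementally along the DP order: $N_{S}(x)$ and $N_{S\setminus\{y\}}(x)$ differ by $[x\succ_v y]$. Store these counts per state, analogously to the $L_{i,j}$ and $R_{j,i}$ tables of \Cref{alg:single-peaked}, whose precomputation takes $O(m^2)$ time.

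Indexing the states by their cut-point vectors in an array gives $O(1)$ lookups. The total running time is therefore $O(km^k)$.
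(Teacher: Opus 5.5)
Your proposal is correct and follows the paper's proof. It uses the same dynamic program over connected vertex sets $S$ (with $C\setminus S$ fixed at the bottom), the same recurrence adding a neighbour $x$ at cost equal to the number of $y\in S\setminus\{x\}$ that $v$ ranks below $x$, the same exchange argument, and an $O(m^k)$ encoding of connected sets (yours by top vertex and cut points, the paper's by distances from the $k$ leaves of $T$). It also has at most $k$ transitions per state and the same incremental update $N_S(x)=N_{S\setminus\{y\}}(x)+[x\succ_v y]$, which is the paper's $I_{x,S}$. One thing to tidy up: this update goes from smaller sets to larger ones, so the counts $N_S(x)$ must be precomputed in a separate pass in increasing order of $|S|$, as in the paper's Phase~1, rather than ``along'' your top-down order for $A_S$.
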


Given the algorithms for SP, SPOC, and single-peaked-on-a-tree
domains, one could ask for a general polynomial-time algorithm that
works for all single-peaked-on-a-graph domains.  We prove that
in this general case the problem is $\np$-complete
(the proof can be found in \Cref{app:spog:hardness}).

\begin{restatable}{theorem}{ThmSpogDistHardness}\label{thm:spog:dist:hardness}
    Given a graph $G$,
    a vote $v \in \mathcal{L}(V(G))$,
    and an integer $d \in \mathbb{N}$,
    deciding if \break $\swap(SP(G),v)\le d$ is $\np$-complete.
\end{restatable}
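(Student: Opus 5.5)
We prove membership in $\np$ directly and hardness by a reduction from \textsc{Steiner Tree} in unweighted graphs. Membership is easy: a certificate is a ranking $u \in \calL(V(G))$. We check in polynomial time that every prefix of $u$ induces a connected subgraph of $G$, and that $\swap(u,v) \le d$.

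For hardness, take an instance of \textsc{Steiner Tree}: a connected graph $H$, a terminal set $T \subseteq V(H)$ with $T=\{t_1,\dots,t_p\}$, and an integer $k$. The question is whether some set $S \subseteq V(H)\setminus T$ with $|S|\le k$ makes $H[T\cup S]$ connected. Let $n=|V(H)|$ and $M = n^2+1$. Build $G$ from $H$ by attaching a pendant path $P=(q_1,\dots,q_M)$ to $t_1$, with $q_1$ adjacent to $t_1$. Let $v$ rank the candidates in three blocks: first $t_1,\dots,t_p$, then $q_1,\dots,q_M$, then all nonterminals of $H$ in arbitrary order. Set $d = kM + n^2$. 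The idea is that a nonterminal ``used for connectivity'' must be moved ahead of the last terminal, and therefore past the whole padding block, costing at least $M$. Every other rearrangement costs less than $n^2$ in total.

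\emph{Completeness.} Suppose $S$ is a Steiner set with $|S|\le k$. Let $u$ list $T\cup S$ first, in the order of a search of the connected graph $H[T\cup S]$ starting at $t_1$. Then $u$ lists $q_1,\dots,q_M$, and then the remaining nonterminals in the order of a search of $H$ from the already-placed set. Such an order exists because $H$ is connected. Every prefix of $u$ is connected, so $u\in\SP(G)$. Its disagreements with $v$ fall into two groups. Pairs not involving the padding number at most $\binom{n}{2}\le n^2$. Pairs involving the padding are exactly the pairs (element of $S$, element of $P$), and there are $|S|M\le kM$ of them. Hence $\swap(u,v)\le d$.

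\emph{Soundness.} This is the main step. Let $u\in\SP(G)$ with $\swap(u,v)\le d$, let $x$ be the terminal ranked last in $u$, and let $S$ be the set of nonterminals ranked before $x$.
\begin{itemize}
\item The prefix of $u$ ending at $x$ induces a connected subgraph of $G$. It contains $T\cup S$, plus possibly some vertices of $P$.
\item $P$ is a pendant path attached only at $t_1\in T$, so deleting its vertices from a connected set that contains $t_1$ keeps the set connected. Hence $H[T\cup S]$ is connected.
\item Each $s\in S$ precedes $x$ in $u$. Every $q_j$ lies between $x$ and $s$ in $v$, so by transitivity each $q_j$ must be inverted relative to $x$ or relative to $s$. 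Pairs $(x,q_j)$ and $(s,q_j)$ are distinct across all $j$ and $s$, so this gives at least $|S|\cdot M$ inversions.
\end{itemize}
Therefore $|S|M\le kM+n^2<(k+1)M$, since $M>n^2$, and so $|S|\le k$.

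The delicate point, where care is needed, is the charging in the last item: the $M$ units attributed to different elements of $S$ must come from disjoint swapped pairs. The transitivity argument above settles this. The reduction is clearly polynomial, which completes the proof.
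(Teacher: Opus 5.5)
There is a genuine gap, located exactly at the step you flagged as delicate. In your third bullet, the pair $\{x,q_j\}$ does not depend on $s$. Whenever $q_j$ is ranked above $x$ in $u$, this single inversion satisfies the transitivity constraint for every $s\in S$ at once, and nothing forces any pair $\{s,q_j\}$ to be inverted. So your argument yields only $M$ inversions (when $S\neq\varnothing$), not $|S|\cdot M$.

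This is not just a defect in the proof: the reduction itself is unsound. Take $u$ to be $t_1$, then $q_1,\dots,q_M$, then $V(H)\setminus\{t_1\}$ in a search order of $H$ from $t_1$. This $u$ always lies in $\SP(G)$. Its only inversions involving the padding are the $(p-1)M$ pairs $\{t_i,q_j\}$ with $i\ge 2$, so $\swap(u,v)\le (p-1)M+\binom{n}{2}\le d$ whenever $k\ge p-1$. Concretely, let $H$ be the path $t_1,a_1,a_2,t_2$ with $T=\{t_1,t_2\}$ and $k=1$; this is a no-instance. Then $n=4$, $M=17$, $d=33$, yet $u=(t_1,q_1,\dots,q_{17},a_1,a_2,t_2)\in\SP(G)$ has $\swap(u,v)\le 20$. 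Restricting to $k<p-1$ does not help either: $u$ may push any subset of terminals below the padding, paying only about $M$ per pushed terminal.

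The missing idea is to make pushing a terminal below the padding cost more than $k$ per padding vertex. One way to do this:
\begin{itemize}
\item Attach to each terminal a pendant path of $k+1$ new vertices, and list each such path right after its terminal in the first block of $v$.
\item Choose $M$ larger than the square of the number of non-padding vertices, and adjust $d$ accordingly. Completeness still goes through by listing the pendant paths before the padding.
\item Since $x$ is the last terminal and $p\ge 2$, the pendant path of $x$ must come after $x$ in $u$.
\item Hence each $q_j$ ranked above $x$ is inverted with $x$ and its $k+1$ pendant vertices. Each $q_j$ ranked below $x$ is inverted with every $s\in S$.
\item Now every charged pair contains $q_j$, so the pairs really are disjoint across $j$. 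This gives $\swap(u,v)\ge M\cdot\min(k+2,|S|)$, which forces $|S|\le k$.
\end{itemize}
This is exactly the mechanism in the paper's Set Cover reduction. There, the element candidates form paths $c_{i,1},\dots,c_{i,m}$ with $m\ge k+1$. The paper's lower bound $m|P|+(k+1)(n^2m^2-|P|)$ splits the padding into the part ranked above and the part ranked below the last first-element candidate $c_{i^*,1}$.
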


\subsection{Group-Separable Domains}

For a group-separable domain with an arbitrary tree, we %can
show an algorithm that computes the distance to a given vote in time
$O(m^2)$.

Assume we are given a vote $v$
and a group separable domain $D = \GS(\calT)$.  Then, observe that
finding a vote $u \in D$ that minimizes $\swap(u,v)$ is equivalent to
reversing the order of some of the children of each internal node of
$\calT$ so that the frontier~$u$ of~$\calT$
minimizes $\swap(u,v)$.  Moreover, the change in distance we get by
reversing the order of the children of one particular node is
independent of the configuration of the other nodes.  Hence, we can
consider internal nodes of tree $\calT$ one by one, and for each
decide in which of the two ways its children should be ordered.
Fix such an arbitrary node with $k$ children, and let
$C_1,C_2,\dots,C_k$ denote the sets of candidates associated with
leaves that are descendants of each of the children, when looking from
left to right.  This configuration would incur the distance of:
\[ \textstyle
    \sum_{1\le i < j\le k} |\{(a,b)\in C_i\times C_j : b \succ_v a\}|,
\]
while reversing the order gives the distance of:
\[ \textstyle
    \sum_{1\le i < j\le k} |\{(a,b)\in C_i\times C_j : a \succ_v b\}|.
\]
Thus, we compute the values of both sums and choose the configuration that
leads to the lower one (or make an arbitrary choice in case of a tie).
When considering all internal nodes of $\calT$ in this way, we check
each pair of candidates exactly once.  Hence, the running time of this
algorithm is $O(m^2)$.

\begin{theorem}
  There is an algorithm that given a $\GS$-tree $\calT$ and a vote $v$, computes 
  $\swap(\GS(\calT),v)$ in time $O(m^2)$.
\end{theorem}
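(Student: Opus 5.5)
The plan is to make rigorous the decomposition sketched just before the statement: write $\swap(u,v)$ for a frontier $u$ as a sum of contributions, one per internal node of $\calT$, where each node's contribution depends only on whether that node's children are reversed. First precompute $\pos_v(c)$ for all $c \in C$ in $O(m)$ time, so that each comparison $a \pref_v b$ takes $O(1)$. For every pair $\{a,b\}$ of distinct candidates, let $\mathrm{lca}(a,b)$ be the lowest common ancestor of the leaves labeled $a$ and $b$ in $\calT$. Reversing the children of any node $x$ (for a tree where some set of nodes is reversed) keeps every subtree's leaves contiguous in the frontier. The relative order of $a$ and $b$ is decided solely at $\mathrm{lca}(a,b)$: it depends on which children of that node contain $a$ and $b$, and on whether that node is reversed. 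Reversals at other nodes do not change it, since nodes below the lca only permute leaves inside the child blocks, and nodes above it move the whole block of $\mathrm{lca}(a,b)$ as a unit. Hence
\[ \textstyle \swap(u,v) = \sum_{x} \mathrm{cost}_x(\epsilon_x), \]
where $x$ ranges over internal nodes, $\epsilon_x \in \{0,1\}$ indicates reversal at $x$, and $\mathrm{cost}_x(\epsilon_x)$ counts the pairs with lca $x$ that are ordered differently in $u$ and $v$. These are exactly the two sums displayed above for $\epsilon_x=0$ and $\epsilon_x=1$.

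Because $\GS(\calT)$ is precisely the set of frontiers over all choices $(\epsilon_x)_x$, and the objective is separable, we get
\[ \textstyle \swap(\GS(\calT),v)=\sum_x \min\bigl(\mathrm{cost}_x(0),\mathrm{cost}_x(1)\bigr). \]
The minimizing vote is obtained by choosing each $\epsilon_x$ independently. One point needs care here: two different choice vectors may give the same frontier, for example when all children are leaves or reversal is otherwise degenerate. This does not matter, because the minimum ranges over the same set either way. Note also that $\mathrm{cost}_x(0)+\mathrm{cost}_x(1)$ equals the number of pairs with lca $x$.

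For the running time, traverse $\calT$ once and compute the leaf set $C_1,\dots,C_k$ of each child of each internal node $x$, for example by collecting leaves bottom-up as lists. At node $x$, iterate over all pairs $(a,b)\in C_i\times C_j$ with $i<j$ and increment one of the two counters using $\pos_v$. Each unordered pair of candidates is examined exactly once, namely at its lca, so this step totals $\binom{m}{2}=O(m^2)$.

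The remaining overhead is building the leaf lists. Since $\calT$ has at most $m-1$ internal nodes (each has at least two children) and each list has size at most $m$, naive concatenation costs $O(m^2)$ overall, within budget. The main obstacle is the correctness claim that only the lca's reversal status determines each pair's order. I would prove it by induction on the depth of the tree, using the contiguity of subtree leaf sets in every frontier of the reversed trees.
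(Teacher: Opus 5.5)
Your proposal is correct and follows the paper's argument: the paper also fixes each internal node's child order independently by comparing the two displayed sums, and notes that each pair of candidates is examined exactly once (at its lowest common ancestor, in your terms), which gives $O(m^2)$. The only difference is that you spell out details the paper leaves implicit, namely the lca argument for why nodes are independent, the position-array and leaf-list bookkeeping, and the harmless fact that different choice vectors can yield the same frontier.
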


For
$\GSbal$ and $\GScat$, we give algorithms running in time
$O(m\log m)$. Both algorithms follow the general approach outlined
above, but for $\GSbal$ we speed up computing inversions using an
approach similar to that from the classic Merge Sort algorithm, and
for $\GScat$ we use a special data structure
(the details can be found in \Cref{app:computation:gs}).

\begin{theorem}
  There are algorithms that compute the swap distance between a given vote
  and $\GScat$ and $\GSbal$ (represented via $\GS$-trees) in time
  $O(m \log m)$.
\end{theorem}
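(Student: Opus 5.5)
Both algorithms reuse the node-by-node approach behind the general $\GS(\calT)$ algorithm above: the choice of orientation at each internal node is independent of all other nodes, and for a binary node with left child set $C_1$ and right child set $C_2$ we only need the number $I = |\{(a,b)\in C_1\times C_2 : b \succ_v a\}|$. Reversing the node then costs $|C_1||C_2| - I$, so the node contributes $\min(I, |C_1||C_2|-I)$ and $\swap(\GS(\calT),v)$ is the sum of these contributions. The task therefore reduces to computing, for every internal node, the number of cross pairs between its two children that $v$ orders right-before-left. This must be done in total time $O(m\log m)$ rather than the naive $\sum |C_1||C_2| = \Theta(m^2)$.

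\textbf{Balanced trees.} I will label each candidate $c$ with $\pos_v(c)$ and process $\calT$ bottom-up, as in Merge Sort. Each node returns the sorted list of $v$-positions of its leaves. When merging the sorted lists of $C_1$ and $C_2$, I count, for every element of $C_2$, how many elements of $C_1$ have a larger position; this is exactly $I$, obtained as a by-product of the merge in time $O(|C_1|+|C_2|)$. Each level of the tree costs $O(m)$ in total, and since $\calT$ is balanced its depth is $O(\log m)$, giving $O(m\log m)$ overall.

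\textbf{Caterpillar trees.} Merge Sort fails here because the depth is $m-1$, so the spine nodes together sort $\Theta(m^2)$ elements. Instead, every internal node $x$ has at least one leaf child $c$, and the other child's set is either a single leaf or the set $S_x$ of all leaves below the spine node beneath $x$. Only a single candidate $c$ has to be compared with $S_x$, so $I$ is the number of elements of $S_x$ whose position lies above or below $\pos_v(c)$, depending on whether $c$ is the left or the right child. I will walk up the spine from the bottom and maintain $S_x$ in a Fenwick tree (binary indexed tree) indexed by $v$-positions. At each spine node I perform one prefix-count query to get $I$ and then insert $c$. Each step costs $O(\log m)$ and there are $m-1$ steps, so the total is $O(m\log m)$.

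The only delicate step is the caterpillar case. The key observation is that each spine node's cross-count is a rank query of one element against a growing set, which a logarithmic dynamic order-statistics structure handles directly. Correctness in both cases then follows from the independence argument already given for general $\GS$-trees.
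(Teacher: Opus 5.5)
Your proposal is correct and follows the same route as the paper. For $\GSbal$, you count cross-inversions during a bottom-up Merge-Sort-style merge over the $O(\log m)$ levels. For $\GScat$, you walk up the spine and keep the $v$-positions of the leaves below in a dynamic rank structure, queried once per node; the paper uses a size-augmented red-black tree where you use a Fenwick tree, which makes no difference, and in both cases correctness rests on the per-node independence established for general $\GS$-trees.
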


\subsection{Single-Crossing and Euclidean Domains}

Both single-crossing and Euclidean domains contain polynomially many
votes, so a brute-force algorithm that given a ranking $v$ computes
its swap distance to all the rankings in the domain runs in polynomial
time. For example, for $\SC$, which contains $O(m^2)$ rankings, it
would run in time
$O(m^3 \sqrt{\log m})$~\cite{cha-pat:c:fast-swap-distance}. However,
as we typically want to compute the distance from many votes to our
domains, we get better running times via appropriate preprocessing.
Briefly put, for each domain $D \in \{\SC, \Int, \Square, \Cube\}$ we
can arrange the rankings from these domains on a tree $T(D)$---or even
on a path, in case of $\Int$ and $\SC$---so that two neighboring
rankings are at swap distance one. Then, to compute a distance from a
given ranking $v$ to each member of the domain, we compute the
distance between $v$ and an arbitrary ranking in the domain, and then
traverse the tree, updating the distance on the fly, so for each
member of the domain we get its swap distance to $v$.
Building $T(D)$ adds, at most, factor $O(m^2)$ to the complexity of
computing the rankings from the domain
(the details can be found in \Cref{app:computation:sc-euc}).

\begin{theorem}
  For each $D$ that is either $\SC$ or a Euclidean domain, there is an
  algorithm that given a ranking $v$ and tree $T(D)$ computes
  $\swap(D,v)$ in time $O(|D|)$.
\end{theorem}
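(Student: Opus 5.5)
Given the tree $T(D)$, whose vertices are exactly the members of $D$ and whose adjacent vertices are rankings at swap distance one, I would use a single traversal. Root $T(D)$ at an arbitrary member $u_0\in D$. First compute $\swap(u_0,v)$ directly in time $O(m\log m)$ by inversion counting (or $O(m\sqrt{\log m})$ via \citet{cha-pat:c:fast-swap-distance}). Then run a depth-first search from $u_0$. Whenever we move from a ranking $u$ to a child $u'$, the ranking $u'$ differs from $u$ by one swap of two adjacent candidates $a,b$. Here $\swap(u',v)=\swap(u,v)+1$ if $u$ orders $a,b$ as $v$ does, and $\swap(u',v)=\swap(u,v)-1$ otherwise. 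This holds because the relative order of every other pair is unchanged.

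To make each update $O(1)$, I would precompute $\pos_v(c)$ for every candidate $c$ in time $O(m)$. Each tree edge is stored with the pair $\{a,b\}$ of swapped candidates and the order in which the parent ranks them, which is part of the preprocessing that builds $T(D)$. The test ``$a\pref_v b$'' then reduces to comparing $\pos_v(a)$ with $\pos_v(b)$. During the traversal we keep the running minimum of the distances seen, and at the end we output it as $\swap(D,v)$. Correctness follows because every member of $D$ is a vertex of $T(D)$, so each ranking's distance to $v$ is computed exactly once, by induction along the tree path from $u_0$.

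For the running time, the tree has $|D|$ vertices and $|D|-1$ edges, and each edge is processed in $O(1)$. This gives $O(|D|+m\log m)$. Since $|D|\ge m$ for the domains in question ($\SC$ and $\Int$ have $1+\binom{m}{2}$ rankings, and the two- and three-dimensional Euclidean domains are larger), the total is $O(|D|)$. One can also regard the initial $O(m)$ work and the one initial distance as absorbed, or store $\swap(u_0,\cdot)$ support in the preprocessing.

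The main obstacle is not this traversal but the existence of $T(D)$ itself, which the statement takes as given. If I had to justify it, I would do so as follows. For $\SC$, the generating sequence already forms a path of single adjacent swaps. For Euclidean domains, consider the cells of the arrangement of perpendicular bisectors. Moving between adjacent cells across a bisector hyperplane swaps exactly two candidates, and these two are adjacent in the ranking; this requires genericity, so that no point lies on two bisectors in a way that would create a simultaneous double swap. A spanning tree of the cell-adjacency graph, after merging cells that yield equal rankings, then gives $T(D)$. The step that needs care is checking that crossing a single bisector $H_{ab}$ changes only the order of $a$ and $b$, and that $a,b$ are adjacent in the ranking there. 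This holds because at the crossing point they are equidistant and, generically, no other candidate lies at that same distance.
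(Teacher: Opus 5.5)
Your proposal is correct and matches the paper's proof. You compute $\swap(u_0,v)$ once, then traverse the single-crossing path (for $\SC$) or the DFS spanning tree (for Euclidean domains), update the distance by $\pm 1$ along each edge by comparing $\pos_v$ of the stored swapped pair, and return the minimum.

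Your justification that $T(D)$ exists, via the bisector arrangement under genericity, goes beyond the paper, which builds the swap-distance-one graph on $D$ and runs DFS while implicitly assuming that graph is connected. One small correction: absorbing the $O(m\log m)$ start-up cost into $O(|D|)$ needs $|D|=\Omega(m\log m)$, not merely $|D|\ge m$. The sizes you quote, such as $1+\binom{m}{2}$, do provide this.
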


\section{Analysis of the Domains}\label{sec:analysis}

Let us now analyze the outer diversity of our domains.  We first
consider the case of $8$ candidates, and then we analyze how the outer
diversities of our domains change as the number of candidates grows.
The case of $8$ candidates is interesting for the following, somewhat
interrelated, reasons: (1)~\citet{fal-sor-szu-was:c:structured-kemeny}
largely focused on this case, and we want our results to be comparable
to theirs; (2)~The case of $8$ candidates is among the most popular
ones in experiments within computational social
choice~\citep{boe-fal-jan-kac-lis-pie-rey-sto-szu-was:c:guide};
(3)~Considering only $8$ candidates allows us to perform exact
computations.

\subsection{Outer-Diversity for Eight Candidates}

\begin{table}
\centering
\begin{tabular}{cccccc}
\toprule
Domain $D$ & $|D|$ & $\ansd(D)$ &$\out(D)$ & dist-$1$ & dist-$1$/$|D|$ \\
\midrule
  Vote+Its Rev. & 2 & 0.384 & 0.232 & 14 & 7 \\
\midrule
  GS/cat & 128 & 0.194 & 0.613 & 704 & 5.5 \\
  GS/bal & 128 & 0.257 & 0.486 & 384 & 3 \\
\midrule
  SP & 128 & 0.284 & 0.432 & 384 & 3 \\
  SP/DF & 496 & 0.239 & 0.522 & 968 & 1.952 \\
  SPOC & 512 & 0.196 & 0.608 & 1280 & 2.5 \\
\midrule
  SC & 29 & 0.316 & 0.368 & 130.3 & 4.493 \\
\midrule
  1D-Int. & 29 & 0.311 & 0.378 & 134.8 & 4.648 \\
  2D-Square & 351 & 0.217 & 0.566 & 988.0 & 2.815 \\
  3D-Cube & 2311 & 0.138 & 0.724 & 3878.2 & 1.678 \\
\midrule
  Largest Cond. & 224 & 0.282 & 0.435 & 544 & 2.429 \\
\bottomrule
\end{tabular}
  \caption{\label{tab:diversity} Size, average normalized swap
    distance, outer diversity, and size of direct neighborhood (also
    normalized) of various domains, for the case of $\boldsymbol 8$
    candidates. The standard deviation of outer diversity for domains that we need to
    sample ($\boldsymbol\SC$, $\boldsymbol\Int$, $\boldsymbol\Square$,
    $\boldsymbol\Cube$) is no larger than $\boldsymbol{0.005}$ (for
    ten samples).}
\end{table}

For each of our domains, in \Cref{tab:diversity} we provide its size,
average normalized swap distance, outer diversity value, the number of
votes in $\calL(C)$ that are exactly at swap distance $1$ from this
domain (we refer to this as the \emph{size of the direct
  neighborhood}),
and the latter number normalized by the size of the
domain (we analyze these values later on).
Additionally, the table also includes $\LC$ domain, i.e., the largest
Condorcet domain over $8$ candidates, recently discovered by
\citet{lee-mar-rii:j:largest-con-domain8}.
Sorting our domains with respect to their outer diversity values
gives the following ranking:
\newcommand{\dmn}[2]{\substack{\textstyle #1\phantom/\!\! \\ \scriptstyle #2}}
{\small
\begin{multline*}
  \dmn{\Cube}{0.719} \pref \{\dmn{\GScat}{0.613}\ \dmn{,}{} \dmn{\SPOC}{0.608}\} \pref  \dmn{\Square}{0.565} \pref \dmn{\SPDF}{0.522}
                      \pref \dmn{\GSbal}{0.486}  \pref \{ \dmn{\LC}{0.435}\ \dmn{,}{} \dmn{\SP}{0.432} \}  \pref \{ \dmn{\Int}{0.386}\ \dmn{,}{} \dmn{\SC}{0.37}\}.
\end{multline*}}
It is quite interesting that even though $\LC$ is the largest
Condorcet domain over $8$ candidates, its outer diversity is very
similar to that of $\SP$, which contains nearly half of the votes, and
it is notably lower than outer diversities of $\GScat$ and $\GSbal$
(both of the same cardinality as $\SP$). However, a closer analysis of
this domain confirms that it is not as diverse as one might expect
given its size. For example, there are only $4$ candidates that are
ever ranked first in its votes, and $4$ different candidate that are
ever ranked last (indeed, the domain has further restrictions along
these lines, which we omit due to limited space).  Next, we note that
our ranking is very similar to an analogous one obtained by
\citet{fal-sor-szu-was:c:structured-kemeny} based on inner diversity
(also for the case of 8 candidates; note in their case there are no
specific values measuring diversity and the ranking was obtained by
comparing Kemeny vectors of the domains):
\begin{multline*}
  \text{GS/cat} \pref \text{3D-Cube} \pref \{ \text{2D-Square, SPOC}\}
                 \pref \{\text{SP/DF, GS/bal}\} \pref \text{SP} \pref \{\text{SC, 1D-Int.}\}.
\end{multline*}
Both rankings put $\Cube$ and $\GScat$ as the most diverse domains,
and they both put $\Int$ and $\SC$ as the least diverse ones.
Further, they both rank domains from the same families identically:
$\SPOC$ is more diverse than $\SPDF$, which is more diverse than
$\SP$, and $\GScat$ is more diverse than $\GSbal$ (not to mention the
ranking of the Euclidean domains). 
The fact that $\Cube$ has higher outer diversity than $\GScat$, as
well as the tie between $\GScat$ and $\SPOC$, are artifacts of
considering only 8 candidates and for larger numbers of candidates these
relations change (see \Cref{sec:out-div-larger}).

Below, we analyze two features of our domains that are not directly
related to capturing diversity, but which manifest themselves during
outer diversity computations and which shed some light on
how our domains are arranged within the general domain.

\begin{figure*}[t]
\begin{flushleft}
  \includegraphics[width=\textwidth]{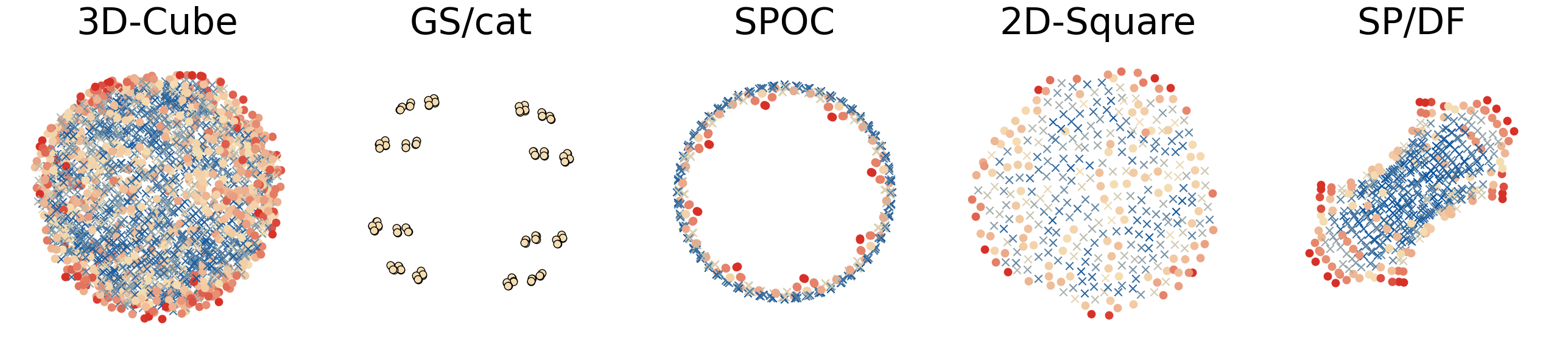}
  \vspace{0.2cm}\\
\begin{minipage}{0.79\textwidth}
  \includegraphics[width=\textwidth]{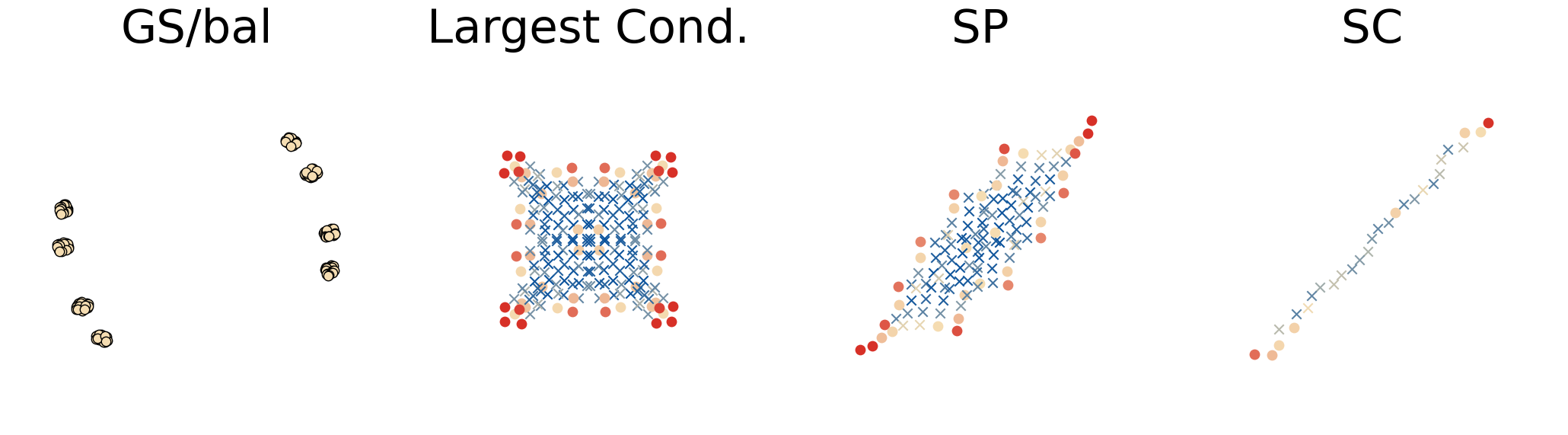}
\end{minipage}%
\hspace{0.7cm}
\begin{minipage}{0.1\textwidth}
  \includegraphics[width=\textwidth]{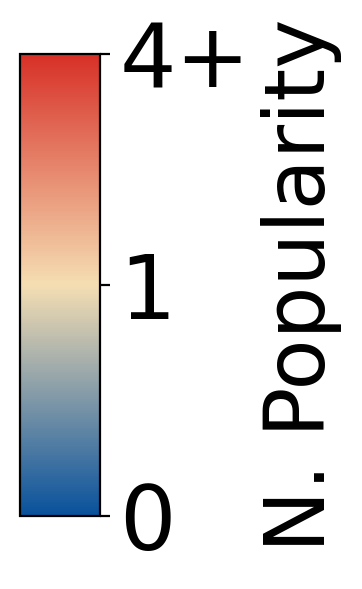}
\end{minipage}
\end{flushleft}
\caption{\label{fig:microscope_m8_without_ic}Microscope plots of our
  domains, where each dot/cross represents a ranking from the domain,
  colored according to its normalized popularity (see
  \Cref{rem:microscope}). Rankings with normalized popularity below
  $\boldsymbol 1$ are marked with crosses, and the remaining ones with
  dots.  Dots marking rankings with normalized popularity equal to
  exactly $\boldsymbol 1$ have a black border.}
\end{figure*}

\subsubsection{Direct Neighborhoods}

The size of the direct neighborhood of a domain, normalized by the
sizes of this domains, is interesting as it gives some intuition on how
the domain is ``spread'' over $\calL(C)$. For example, the domain that
consists of a single ranking and its reverse is ``maximally spread:''
Its two members are as far apart as possible and, as we consider $8$
candidates, there are exactly 7 rankings next to each of the domain
members, neither of which belongs to the domain.
Among our structured domains, $\GScat$ is the most spread one,
with the value of $5.5$, and $\Cube$ is the least spread, with the value of
$1.678$.
Hence, members of $\Cube$
are packed quite closely within $\calL(C)$. While one could think that
this is a consequence of $\Cube$'s large
size, $\calL(C)$ contains more than $16$ rankings for every ranking in
$\Cube$.
It is interesting that for some domains the normalized sizes of their
direct neighborhoods are appealing, round numbers (such as $3$ for
$\GSbal$ or $5.5$ for $\GScat$). For $\GSbal$ and $\GScat$, we show
that this is not a mere coincidence; for the other domains we leave
this issue open.

% \begin{restatable}{proposition}{ProGSBalDist}\label{pro:gs-bal-dist1}
\begin{proposition}\label{pro:gs-dist1}
  Let $D$ be the $\GSbal$ domain for $m = 2^k$ candidates. For
  every ranking $v \in D$ there are exactly $2^{k-1}-1$ unique ones
  from $\calL(C) \setminus D$ at swap distance $1$ from $v$.
\end{proposition}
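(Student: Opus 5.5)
Fix $v \in D$. Since every vote has exactly $m-1 = 2^k-1$ neighbors at swap distance $1$ (one per adjacent position pair), it suffices to show that exactly $2^{k-1}$ of these neighbors lie in $D$. The claim then follows: $(2^k-1)-2^{k-1} = 2^{k-1}-1$. Uniqueness is automatic, since distinct adjacent swaps of $v$ produce distinct rankings.

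The balanced binary tree $\calT$ for $m=2^k$ is the complete binary tree of depth $k$. Because $D$ is closed under flipping nodes, $v$ is the frontier of $\calT$ with some flip configuration. So the leaves of $\calT$ appear left to right in the order of $v$. For each adjacent position pair $(i,i+1)$, I consider the lowest common ancestor $w$ of the two leaves at these positions. This is the internal node at which the boundary between the left and right subtrees falls between positions $i$ and $i+1$. The map from adjacent pairs to internal nodes is a bijection: each of the $2^k-1$ internal nodes has exactly one such boundary, and there are $2^k-1$ adjacent pairs.

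Next I characterize when swapping positions $i$ and $i+1$ keeps the vote in $D$. Let $w$ be the corresponding node, with children subtrees $A$ (last leaf $a$) and $B$ (first leaf $b$). In the swapped vote $v'$, the leaves of the subtree of $w$ must still form a contiguous block, and likewise for every other node. The subtree of $w$ remains contiguous, but the subtree of $A$ is now $A\setminus\{a\}$, followed by $b$, followed by $a$. This is contiguous only if $|A|=1$; symmetrically, $B$'s subtree stays contiguous only if $|B|=1$. So $w$ must be a node at the lowest internal level, whose two children are leaves. Conversely, if $w$ is such a node, swapping its two leaves is exactly the flip of $w$, so $v'\in D$.

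Any ranking in $\GS(\calT)$ must have every subtree occupy a contiguous interval, since reversing children preserves contiguity. Hence the contiguity argument is a valid necessary condition. The number of internal nodes whose children are both leaves is $2^{k-1}$ (one per pair of leaves at depth $k$). This gives exactly $2^{k-1}$ neighbors in $D$ and $2^{k-1}-1$ outside.

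The main subtlety is the necessity direction. I must make sure that no non-bottom swap could accidentally produce a frontier of $\calT$ under some other flip configuration. The contiguity invariant handles this cleanly: subtree $A$ (of size at least $2$) is split by $b$, so no configuration yields $v'$. I would also state explicitly that the case $k=0$ is vacuous and the case $k=1$ gives $0$ neighbors outside $D$, which is consistent with the formula.
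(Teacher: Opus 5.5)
Your count is sound. The bijection between the $2^k-1$ adjacent position pairs and the internal nodes (via lowest common ancestors), together with the fact that every subtree's leaves stay contiguous in every vote of $\GS(\calT)$, shows that exactly $2^{k-1}$ neighbors of $v$ are in $D$ and $2^{k-1}-1$ are outside. The paper reaches the same split more cheaply: it relabels so that $v = c_1 \pref \cdots \pref c_m$ and uses only that bottom-level siblings $c_{2j-1},c_{2j}$ must be adjacent in every vote of $D$.

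The gap is your reading of ``unique''. You dismiss it as automatic because distinct swaps give distinct rankings. In the paper, however, the word means that each of these $2^{k-1}-1$ rankings is at swap distance $1$ from $v$ and from \emph{no other} member of $D$. This is the substantive second half of the statement. It is exactly what makes the direct neighborhood of $\GSbal$ have size $|D|\cdot(2^{k-1}-1)$, i.e., the normalized value $3$ for $m=8$ in \Cref{tab:diversity}. It is also what separates this proposition from its $\GScat$ companion, where one neighbor of $v$ is shared with another domain member. Nothing in your proposal rules out that some $v'\notin D$ adjacent to $v$ is also adjacent to some $u\in D\setminus\{v\}$.

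The missing step fits your framework. Let $v'$ come from swapping positions $i,i+1$ at a node $w$ whose children subtrees $A,B$ satisfy $|A|=|B|\ge 2$, with $a$ the last leaf of $A$ and $b$ the first leaf of $B$. In $v'$ both leaf sets are broken: $b$ separates $A\setminus\{a\}$ from $a$, and $a$ separates $b$ from $B\setminus\{b\}$. A vote $u\in D$ with $\swap(u,v')=1$ must repair both with a single adjacent swap. Repairing $A$ requires swapping positions $(i,i+1)$, or $(i-1,i)$ if $|A|=2$. Repairing $B$ requires $(i,i+1)$, or $(i+1,i+2)$ if $|B|=2$. The only common choice is $(i,i+1)$, which returns $v$.

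This is essentially the paper's argument, phrased there via the bottom-level sibling $a'$ of $a$. Reuniting $a'$ and $a$ in one swap means swapping $b$ with $a$, which gives $v$. The alternative is swapping $b$ with $a'$, which separates $b$ from its own sibling, so the result is not in $D$.
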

% \end{restatable} 
\begin{proof}  
  Let $m = 2^k$ be the number of candidates, let the candidate set be
  $C = \{c_1, \ldots, c_m\}$, and let $D$ be our $\GSbal$ domain for
  $C$. Further, let $v$ be ranking in $\calL(C)$.  W.l.o.g., we can
  assume that $v$ ranks the candidates as
  $c_1 \pref c_2 \pref \cdots \pref c_m$. For each $i \in [m-1]$, let
  $v(i)$ be the ranking obtained from $v$ by swapping candidates $c_i$
  and $c_{i+1}$. These are all the rankings from $\calL(C)$ that are
  at swap distance $1$ from $v$. By definition of $D$, for every odd
  $i \in [m-1]$, $v(i)$ is in $D$, and for each even $i$ it is in
  $\calL(C) \setminus D$. Further, for each even~$i$, swap distance of
  $v(i)$ to every member of $D$ other than $v$ is larger than $1$:
  Indeed, in every vote from $D$, $c_i$ and $c_{i-1}$ must be ranked
  next to each other.
  To achieve this, by performing a single swap on $v(i)$,
  we need to swap $c_{i+1}$ with $c_{i-1}$ or $c_i$.
  The former, does not lead to a vote from $D$, the latter leads to $v$.
  This completes the proof.
\end{proof}

% \begin{restatable}{proposition}{ProGSCatDist}\label{pro:gs-cat-dist1}
\begin{proposition}
  Consider $\GScat$ over $m \geq 4$ candidates. For every ranking
  $v \in \GScat$ there are exactly $m-3$ unique ones from
  $\calL(C) \setminus D$ at swap distance $1$ from $v$, and one ranking
  from $\calL(C)$ that is at swap distance $1$ from $v$ and one other
  ranking in $\GScat$.
\end{proposition}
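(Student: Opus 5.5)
Following the proof of \Cref{pro:gs-dist1}, we fix $v \in \GScat$ and, w.l.o.g., rename candidates so that $v$ ranks them as $c_1 \pref c_2 \pref \cdots \pref c_m$. The first step is to describe $\GScat$ concretely. A caterpillar tree has a spine of internal nodes $x_1$ (the root), $x_2, \ldots, x_{m-1}$. Each $x_j$ with $j < m-1$ has one leaf child and the internal child $x_{j+1}$, and $x_{m-1}$ has two leaf children. Reversing children at a spine node only decides whether its leaf goes before or after the block of all deeper candidates. Hence every vote in $\GScat$ is obtained as follows: start from the two candidates below $x_{m-1}$ in either order, then place each further leaf (moving up the spine) either at the top or at the bottom of the current block. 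In $v$, the block of candidates of the subtree of $x_j$ is therefore a contiguous interval $I_j = \{c_{a_j}, \ldots, c_{b_j}\}$ of $v$. The intervals are nested, and $I_{m-1}$ consists of two adjacent candidates $c_p, c_{p+1}$.

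Next, for each $i \in [m-1]$ we consider the neighbor $v(i)$ obtained by swapping $c_i$ and $c_{i+1}$, and decide whether $v(i) \in \GScat$. The swap of $c_p, c_{p+1}$ amounts to reversing $x_{m-1}$, so $v(p) \in \GScat$. Now take any other adjacent pair. Let $x_j$ be the deepest spine node whose interval contains both $c_i$ and $c_{i+1}$. Then one of the two is the leaf of $x_j$, sitting at an end of $I_j$, and the other is the end of $I_{j+1}$ adjacent to it. Swapping them moves the leaf of $x_j$ strictly inside $I_j$. We will argue that the result lies in $\GScat$ only if $|I_{j+1}| = 1$, which is impossible for $j < m-1$. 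The exception is $j = m-2$: there $I_{j+1}$ has size $2$, and swapping the leaf of $x_{m-2}$ with its neighbor from $I_{m-1}$ yields again a frontier. For example, $a \pref b \pref c$ with bottom pair $\{b,c\}$ becomes $b \pref a \pref c$. This is a genuinely different tree frontier only if the tree structure is fixed. Since the labeling is fixed, $b \pref a \pref c$ would need $a,c$ to be the deepest pair, which is false. So this neighbor is not in $\GScat$ after all. I would recount carefully to pin the numbers: $m-1$ neighbors in total, one in the domain ($v(p)$), leaving $m-2$ neighbors outside $\GScat$. The claim says $m-3$ of these are unique to $v$ and exactly one is shared with another member of $\GScat$. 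The bookkeeping will be checked on $m=4$, where $\GScat$ has $8$ votes and by \Cref{tab:diversity}-style counts the value $5.5 = (m-3) + 2\cdot\tfrac{1}{2}\cdot\ldots$ should come out consistently (for $m=8$: $5 + 1/2 = 5.5$, matching the table).

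The main step, and the expected obstacle, is identifying which outside neighbor is shared and proving the others are unique. For uniqueness of $v(i)$, we mimic \Cref{pro:gs-dist1}. In $v(i)$ the leaf $\ell$ of $x_j$ sits one position inside $I_j$. Any $u \in \GScat$ at distance $1$ from $v(i)$ must restore contiguity of the deeper block $I_{j+1}$, which after the swap is split by $\ell$ (when $|I_{j+1}| \ge 2$). The only single swaps doing so move $\ell$ back, giving $v$, or move the lone separated element of $I_{j+1}$ across $\ell$. The second option yields a vote in which $I_{j+1}$ is contiguous with $\ell$ at an end of $I_j$, and it must be checked whether it lies in $\GScat$. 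The separated element $c$ was an end of $I_{j+1}$ and would become an end on the other side. This is a valid frontier only if $c$ can be at either end of $I_{j+1}$ while the rest is unchanged, i.e.\ essentially only when $|I_{j+1}| = 2$, so $j = m-2$. In that case the resulting $u$ is $v$ with the bottom pair reversed and the leaf of $x_{m-2}$ moved to the opposite side. This $u$ is a different member of $\GScat$ at distance $1$ from $v(i)$, so that neighbor is the shared one. For $j < m-2$, $|I_{j+1}| \ge 3$, so moving $c$ leaves $I_{j+1}$'s internal nested structure violated, and such neighbors are unique. Counting gives $m-2$ outside neighbors, of which exactly one ($j=m-2$) is shared and $m-3$ are unique. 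The case analysis about which end $\ell$ lies on and $m \ge 4$ (ensuring the spine has $x_{m-2}$) is the routine remainder.
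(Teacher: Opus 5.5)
Your plan parallels the paper, which uses the equivalent ``indices increase up to $c_m$ and then decrease'' description. You characterize $\GScat$ by the nested contiguous blocks $I_1\supset\cdots\supset I_{m-1}$, note that only the bottom-pair swap stays in the domain, and then ask which outside neighbors can become a different domain vote by one swap. That plan is sound, but the decisive case analysis is wrong as written. Say $\ell$ (the leaf of $x_j$) lies above $I_{j+1}$ in $v$. Then $v$ reads $\ldots\ell\,e\,f\ldots$ locally and $v(i)$ reads $\ldots e\,\ell\,f\ldots$. Here $e$ is the end of $I_{j+1}$ adjacent to $\ell$, and $f$ is the top element of $R=I_{j+1}\setminus\{e\}\neq\emptyset$. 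A domain vote at distance $1$ from $v(i)$ must make $I_{j+1}$ contiguous, so the swap must involve $\ell$: it is either $(e,\ell)$ or $(\ell,f)$. Your two options, ``move $\ell$ back'' and ``move the lone separated element across $\ell$'', are the \emph{same} transposition $(e,\ell)$, which just returns $v$. You never consider $(\ell,f)$. It yields $\ldots e\,f\,\ell\ldots$, where $\ell$ still separates $\{e,f\}$ from $R\setminus\{f\}$ unless $R=\{f\}$, i.e., unless $j=m-2$. That is the real reason the neighbors with $j\le m-3$ are unique. Your argument about $e$ ``becoming an end on the other side'' concerns a move that is not a single adjacent swap.

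Consequently, your witness for the shared neighbor is also wrong. You take the second domain vote to be $v$ with the bottom pair reversed and $\ell$ moved to the other side, i.e., locally $f\,e\,\ell$ in place of $\ell\,e\,f$. That vote is at swap distance $3$ from $v$ and $2$ from $v(i)$, so it is not a common neighbor; by the triangle inequality, any such vote is within distance $2$ of $v$. The correct one is $\ldots e\,f\,\ell\ldots$, obtained from $v$ by reversing the children of $x_{m-2}$ only, with the bottom pair's order unchanged. For example, take $m=4$ with spine leaves $a,b$, bottom pair $\{c,d\}$, and $v=a\,b\,c\,d$. The neighbor $a\,c\,b\,d$ is adjacent to $v$ and to $a\,c\,d\,b$, not to $a\,d\,c\,b$. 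The corrected dichotomy also supplies something you leave unaddressed: the shared neighbor is adjacent to exactly one other domain vote. Finally, drop the ``exception'' at $j=m-2$ in your first paragraph, which you retract anyway. For every $j\le m-2$, $\ell$ separates $e$ from the nonempty $R$ in $v(i)$, so $v(i)\notin\GScat$ by contiguity alone.
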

% \end{restatable}
\begin{proof}
  Let $\calT$ be a binary caterpillar tree
  with $c_1$ being the leaf closest to the root,
  followed by $c_2$, and so on up to $c_m$.
  Let $D$ be a $\GScat$ domain consistent with tree $\calT$.
  
  Observe that in each vote $v \in D$,
  when we read the candidates along $\succ_v$,
  the indices are increasing until candidate $c_m$,
  after which they are decreasing.%
  \footnote{This property makes $\GScat$ in some sense dual to $\SP$,
  which was observed e.g. in~\cite{boe-bre-elk-fal-szu:c:map-pref-learning}.}
  After swapping any pair of candidates in $v$,
  except for the pair $\{c_{m-1}, c_{m}\}$,
  we obtain $v'$ that does not have this property,
  hence $v' \not \in D$.
  Moreover, unless we swapped
  $c_{m-2}$ with either $c_m$ or $c_{m-1}$,
  the only way to restore this property by a single swap,
  is to go back to $v$.
  Thus, $\swap(v',u) > 1$, for each $u \in D \setminus \{v\}$.
  
  On the other hand, if $v'$ is obtained from $v$ by swapping
  $c_{m-2}$ with $c_m$ or $c_{m-1}$
  (whichever is adjacent to $c_{m-2}$ in $v$),
  then in $v'$, candidate $c_{m-2}$ is ranked exactly between
  $c_m$ and $c_{m-1}$.
  Swapping it with any of these two candidates yields a vote from $D$
  (and no single swap apart from these two results in that).
\end{proof}

\subsubsection{Popularity}
Given a domain $D \subseteq \calL(C)$ and a ranking $v \in D$, we
define its \emph{popularity}, denoted $\pop(v)$, as the number of
rankings from $\calL(C)$ for which $v$ is the closest member of $D$
(if for a given ranking $u \in \calL(C)$ there are $p$ members of $D$
that are closest to $u$, then $u$ contributes $\nicefrac{1}{p}$ to the
popularity of each of them).  The average popularity of a ranking in
$|D|$ is equal to $\nicefrac{|\calL(C)|}{|D|}$ and by \emph{normalized
  popularity} of a ranking $v$ we mean the ratio between its
popularity and this value. Namely, we have
$\npop(v) = \frac{\pop(v)}{|L(C)|/|D|}$.
Popularity
gives hints on both the internal symmetry of a domain, and the
arrangement of its rankings in $\calL(C)$. Indeed, the more uniform
are the popularity values of the rankings, the more likely it is that
they are symmetrically spread within $\calL(C)$. On the other hand, a
mixture of high and low popularity values suggests that the more
popular rankings are on the ``outskirts'' of the domain, and the less
popular ones belong to its ``interior.''
We show the normalized popularities of the rankings in our domains in
\Cref{fig:microscope_m8_without_ic}, on the microscope plots of
\citet{fal-kac-sor-szu-was:c:div-agr-pol-map}.

\begin{remark}\label{rem:microscope}
  Let $D$ be a domain. A microscope plot of $D$ presents each ranking
  from the domain as a dot, whose Euclidean distance from the other
  dots is as similar to the swap distance between the respective
  rankings as possible (exact correspondence between Euclidean
  distances and swap distances is, typically, impossible to achieve,
  but microscopes still give useful intuitions).
\end{remark}

The plots show some remarkable features of our domains. The first
observation is that for both $\GSbal$ and $\GScat$, all rankings have
equal popularity, equal to the expected one. Indeed, this is a general
feature of group separable domains.
% \begin{restatable}{proposition}{ProPopGS}
\begin{proposition}
  Let $D = \GS(\calT)$ be a group separable domain over candidate set
  $C$. Then, for each $v \in \GS(\calT)$, $\npop(v) = 1$.
\end{proposition}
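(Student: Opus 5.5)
The plan is a symmetry argument: I will exhibit a group of permutations of $\calL(C)$ that preserves swap distance, maps $\GS(\calT)$ onto itself, and acts transitively on $\GS(\calT)$. Once I have such a group, popularity will be constant on the domain. Since the popularities sum to $|\calL(C)| = m!$ (each ranking $u$ distributes total weight $1$ among its closest members), each ranking in $\GS(\calT)$ must then have popularity exactly $m!/|D|$, i.e., $\npop(v)=1$.

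For the group I use ``local reversals.'' Fix an internal node $x$ of $\calT$ with children whose leaf-sets are $C_1,\dots,C_k$; these sets are blocks of consecutive candidates in every frontier. For a ranking $u \in \calL(C)$, define $\phi_x(u)$ as follows: whenever $a\in C_i$, $b\in C_j$ with $i\neq j$, the relative order of $a,b$ is flipped; all other pairs keep their order. I need to check three things.
\begin{enumerate}
\item $\phi_x(u)$ is a genuine linear order. Write $u$ as a sequence of candidates and regard the candidates in $L(x)=C_1\cup\dots\cup C_k$ as one coordinate. The operation only changes the relative order within $L(x)$, and there it reverses the order of the blocks' interleaving while keeping each $C_i$ internally ordered. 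Concretely, take the restriction of $u$ to $L(x)$, reverse it, then reverse each $C_i$-subsequence back, and place the result into the positions $u$ assigns to $L(x)$. This is transitive, and flipping exactly the cross-child pairs gives this order.
\item $\phi_x$ is an involution, hence a bijection. It also preserves swap distance, because $\swap(u,w)$ counts pairs of disagreement, and flipping the same fixed set of pairs in both rankings preserves agreement on every pair.
\item $\phi_x$ maps $\GS(\calT)$ to itself. On a frontier, it corresponds exactly to reversing the order of $x$'s children.
\end{enumerate}

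Transitivity then follows. Every member of $\GS(\calT)$ is obtained from a fixed frontier $v_0$ by reversing children at some set $S$ of nodes. The reversal maps at different nodes act on disjoint pair sets: a pair belongs to the unique node that is its lowest common ancestor. So composing $\phi_x$ for $x\in S$ sends $v_0$ to any prescribed member, and each composition is a swap-distance isometry preserving $D$. Finally, an isometry $g$ of $\calL(C)$ with $g(D)=D$ satisfies $\swap(D,g(u))=\swap(D,u)$, and $v$ is closest to $u$ if and only if $g(v)$ is closest to $g(u)$, with the same number $p$ of ties. Hence $\pop(g(v))=\pop(v)$, and summing gives the claim.

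The main obstacle is step 1, namely that $\phi_x$ is well defined on arbitrary rankings and not only on frontiers. I would handle it exactly as above, by defining $\phi_x$ via pairs and verifying transitivity through the explicit ``reverse, then re-reverse each block'' construction.
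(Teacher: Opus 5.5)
Your Step~1 is false, and the isometry claim in Step~2, on which everything else rests, fails with it.

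Flipping exactly the cross-child pairs of $x$ does not give a linear order once the children are interleaved in $u$. Take the caterpillar tree on $c_1,\dots,c_4$: the root has children $\{c_1\}$ and $\{c_2,c_3,c_4\}$, and the next node has children $\{c_2\}$ and $\{c_3,c_4\}$. With $x$ the root and $u=c_2c_1c_3c_4$, the prescribed relation contains the cycle $c_1\succ c_2\succ c_3\succ c_1$. An outside candidate $z$ with $a\succ_u z\succ_u b$ causes the same problem.

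Your ``reverse, then re-reverse each block'' recipe defines a different map $\psi_x$. It reverses, in place, the sequence of child-labels of the $L(x)$-candidates and keeps each child's internal order. It does not flip the cross pairs: it sends the $u$ above to $c_2c_3c_1c_4$, flipping only one of three. It is also not an isometry: it maps $c_2c_1c_4c_3\mapsto c_2c_4c_1c_3$ and $c_4c_1c_2c_3\mapsto c_4c_2c_1c_3$, turning distance $3$ into distance $1$.

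No choice of global isometries can rescue this route. The isometries of $(\calL(C),\swap)$ are only candidate relabelings and their compositions with reversal (the automorphisms of the permutohedron). In the example, a relabeling preserving $D$ must fix $c_1$ and $c_2$, since they are ranked first in $4$ and $2$ members of $D$, respectively. So only $4$ isometries preserve $D$, while $|D|=8$, and no such group acts transitively. The paper's proof has the same defect: it takes a relabeling $\pi$ with $\pi(u)=v$ and asserts $\pi(D)=D$. For root children $a$ and $\{b,c\}$, the relabeling taking $abc$ to $bca$ sends $acb\in D$ to $bac\notin D$, so it is not a model to follow.

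The repair is to keep $\psi_x$ but claim only what popularity needs: $\swap(\psi_x(w),\psi_x(d))=\swap(w,d)$ for all $w\in\calL(C)$ and $d\in\GS(\calT)$.
\begin{itemize}
\item For each internal node $y$, call the \emph{pattern} of $w$ at $y$ the sequence of child indices read off $w$ restricted to $L(y)$.
\item $\psi_x$ reverses the pattern at $x$ and leaves every other pattern unchanged. An ancestor of $x$ only sees the set of positions occupied by $L(x)$, which is fixed. Every other node sees an unchanged restriction of $w$.
\item Every pair of candidates is a cross pair at exactly one node, so $\swap(w,d)=\sum_y N_y(w,d)$. Here $N_y(w,d)$ counts the cross pairs at $y$ on which $w$ disagrees with the orientation of $y$'s children in $d$.
\item $N_y(w,d)$ depends only on $y$'s pattern in $w$ and on that orientation, and it is unchanged if both are reversed.
\item Since $\psi_x(d)$ is $d$ with $x$'s children reversed, the identity follows.
\end{itemize}
Your final paragraph then works verbatim with this relative statement in place of ``isometry''. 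You get $\swap(\GS(\calT),\psi_x(w))=\swap(\GS(\calT),w)$, and $d$ is one of the $p$ closest members to $w$ iff $\psi_x(d)$ is one of the $p$ closest to $\psi_x(w)$. Hence $\pop(\psi_x(d))=\pop(d)$, and compositions of the $\psi_x$ reach every member of the domain.
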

% \end{restatable}
\begin{proof}
    For a contradiction, assume that the thesis does not hold.
    Then, there exist $u,v \in D$ such that $\npop(u) > \npop(v)$.
    Let $\pi : C \rightarrow C$ be a permutation
    such that $v = \pi(u)$,
    where $\pi(u)$ denotes a vote in which each candidate $c \in C$
    is replaced by $\pi(c)$.
    In this way, by a slight abuse of notation,
    $\pi$ is also a permutation of $\mathcal{L}(C)$.
    
    By the definition of $\GS$ domain,
    $\pi$ corresponds to rotating the children
    of certain internal nodes in $\calT$.
    Thus, for every $w \in \mathcal{L}(C)$ it holds that
    $w \in D$ if and only if $\pi(w) \in D$.
    Moreover, for each $w, w' \in \mathcal{L}(C)$
    we have that $\swap(w,w')=\swap(\pi(w),\pi(w'))$.
    Both facts imply that $\npop(u) = \npop(\pi(u))$,
    as $\npop(\cdot)$ is invariant under $\pi$
    (since $\pi$ does not affect the domain, nor the swap distance).
    However, this leads to a contradiction as
    $\npop(u) = \npop(\pi(u)) = \npop(v) < \npop(u).$
\end{proof}

The other domains show a high variance in popularity among their
members.  For example, the most popular rankings in $\SP$ are the
societal axis and its reverse, whereas most rankings in between these
two have low popularity.
Overall,
group-separable domains are perfectly
symmetric and clearly stand out.

\subsection{Outer Diversity for Larger Candidate Sets}\label{sec:out-div-larger}

When considering more than eight candidates, we compute outer
diversity using the sampling approach, with sample size $N = 1000$
(see \Cref{sec:computation}).  For each domain, we repeat this
computation $10$ times, to also obtain standard deviation (it is so
small as to be nearly invisible on our plots, which justifies the use
of sampling).

In \Cref{fig:out-div-m}, we show how the
outer diversity of our domains evolves as a function of the number $m$
of candidates, for $m \in \{2, 3, \ldots, 20\}$. 
In particular, we note that the outer diversity of polynomially-sized
Euclidean domains drops much more rapidly than that of the other,
exponential-sized, ones. It is also notable how $\SPOC$ becomes less
diverse than $\GScat$ (for $9$ candidates or more) and how $\GScat$
becomes the most diverse among our domains (for 12 candidates or
more).  Further, $\GScat$ is consistently more diverse than $\GSbal$.
As these two domains are
extreme among the group-separable ones (one uses the tallest binary
$\GS$-tree and the other one the shortest), we ask if $\GScat$ is the
most diverse group-separable domain and $\GSbal$ is the least diverse
one.

It is interesting if outer diversity of our domains eventually
approaches zero, or if it stays bounded away from it.  As shown below,
the former happens, e.g., if the size of the domain is bounded by a
constant, whereas the latter happens, e.g., for $\GScat$.  Hence,
outer diversity of a domain may be bounded away from zero even if its
size grows notably more slowly than that of the general domain (as a
function of the number of candidates).

% \begin{restatable}{proposition}{ProLimConstSizeDom}
\begin{proposition}
    Let us fix value $k$ and 
    let $D_2$, $D_3$, $\ldots$ be a sequence of domains, where each $D_m$ contains at most $k$ rankings over $m$ candidates. Then $\lim_{m \rightarrow \infty}\out(D_m) = 0$.
\end{proposition}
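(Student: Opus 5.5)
We need to show that $\ansd(D_m)\to 1/2$, i.e., that for a uniformly random $u\in\calL(C)$ the quantity $\swap(D_m,u)/\binom{m}{2}$ concentrates near $1/2$ even after minimizing over at most $k$ rankings. Since $\swap(D_m,u)\le \binom{m}{2}$ always and $\out(D_m)\ge 0$ trivially (as $\ansd\le 1/2$, noted in \Cref{sec:definition}), it suffices to prove that $\liminf_m \ansd(D_m)\ge 1/2$.

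The plan is a concentration argument for Kendall's $\tau$. First I fix a single ranking $w\in\calL(C)$. For uniform random $u$, $\swap(w,u)$ is distributed as the number of inversions of a uniform random permutation. This equals $\sum_{i=1}^{m} X_i$ with independent $X_i$ uniform on $\{0,\dots,i-1\}$ (the Lehmer code). Its mean is $\binom{m}{2}/2$ and its variance is $\sum_i (i^2-1)/12 = O(m^3)$. Chebyshev then gives, for any fixed $\varepsilon>0$,
\[
\Pr\Bigl[\swap(w,u) < (1/2-\varepsilon)\textstyle\binom{m}{2}\Bigr] \le \frac{O(m^3)}{\varepsilon^2 \binom{m}{2}^2} = O\!\left(\frac{1}{\varepsilon^2 m}\right).
\]
This bound is uniform in $w$. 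One could use Hoeffding on the independent $X_i$ instead for an exponential bound, but Chebyshev already suffices because $k$ is constant.

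Next I take a union bound over the at most $k$ members of $D_m$. The probability that some $w\in D_m$ has $\swap(w,u)<(1/2-\varepsilon)\binom{m}{2}$ is at most $k\cdot O(1/(\varepsilon^2 m))$, and this tends to $0$. Since distances are nonnegative,
\[
\ansd(D_m)\ \ge\ (1/2-\varepsilon)\bigl(1-O(k/(\varepsilon^2 m))\bigr).
\]
Hence $\liminf_m \ansd(D_m)\ge 1/2-\varepsilon$ for every $\varepsilon$, so $\ansd(D_m)\to 1/2$ and $\out(D_m)=1-2\ansd(D_m)\to 0$.

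The only step needing care is the distributional fact about $\swap(w,u)$. Relabeling candidates so that $w$ becomes the identity shows that $\swap(w,u)$ equals the inversion count of a uniform random permutation. That count has the independent Lehmer-code decomposition, which gives the $O(m^3)$ variance. Everything else is a routine union bound.
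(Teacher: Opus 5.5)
Your proof is correct, but it takes a different route from the paper. The paper never analyzes the distribution of $\swap(w,u)$. It notes that $m!\binom{m}{2}\cdot\ansd(D_m)$ is the Kemeny score of $D_m$ with respect to $\UN_m$, hence at least $k\hbox{-}\kemscore(\UN_m)$. It then imports Proposition~3.6 of \citet{fal-mer-nun-szu-was:c:map-dap-top-truncated}, which states that $k\hbox{-}\kemscore(\UN_m)/(\frac{1}{2}m!\binom{m}{2})\to 1$ for every fixed $k$. Your argument (Lehmer-code variance bound, Chebyshev, union bound over the at most $k$ members of $D_m$) is in effect a self-contained proof of that imported limit.

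The paper's route is shorter and reinforces its theme that $\ansd$ is a normalized $k$-Kemeny score of $\UN$. Yours needs no external result and yields more. First, it gives an explicit rate, $\out(D_m)\le 2\varepsilon+O(k/(\varepsilon^2 m))$, which becomes $\out(D_m)=O((k/m)^{1/3})$ after optimizing $\varepsilon$. Second, with the Hoeffding tail $\exp(-\Theta(\varepsilon^2 m))$ you mention, the same conclusion holds for any sequence with $|D_m|=e^{o(m)}$. That covers, for example, $\SC$ and the Euclidean domains, whose sizes are polynomial in $m$; this matches the paper's experimental observation that these domains lose diversity fastest. The $e^{o(m)}$ condition is the right scale: by \Cref{pro:gs-cat-lower-bound}, domains of size $2^{m-1}$ such as $\GScat$ keep outer diversity above $\nicefrac{1}{2}$.
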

% \end{restatable}
\begin{proof}
    Let $\UN_m$ denote the $\UN$ election with $m$ candidates.
    As already noted,
    the average normalized swap distance of a domain $D$
    is equal to the normalized Kemeny score of $D$
    with respect to the $\UN_m$ election, i.e.,
    $m!\binom{m}{2}\cdot \ansd(D) = \kemscore_{\UN_m} (D)$.
    This, in turn, is not smaller than
    the $k$-Kemeny score of the $\UN_m$ election,
    where $k = |D|$, which gives us
    $m!\binom{m}{2}\cdot \ansd(D) \ge k\hbox{-}\kemscore(\UN_m).$
    This yields the following bound on the outer diversity:
    \[
        \out(D)= 1 - 2\cdot \ansd(D)
        \le 1 - 2 \cdot \frac{k\hbox{-}\kemscore(\UN_m)}{m!\binom{m}{2}}.
    \]
    \citet[Proposition 3.6]{fal-mer-nun-szu-was:c:map-dap-top-truncated}
    showed that for every $k \in \mathbb{N}$,
    it holds that
    \[
    \lim_{m \to \infty} \frac{k\hbox{-}\kemscore(\UN_m)}{\frac{1}{2}\cdot m!\binom{m}{2}} = 1.
    \]
    Thus,
    \[
        \lim_{m \to \infty} \out(D_m) \le 1 - \lim_{m \to \infty} \frac{k\hbox{-}\kemscore(\UN_m)}{\frac{1}{2}\cdot m!\binom{m}{2}}
        = 0.
    \]
\end{proof}  

\begin{proposition}\label{pro:gs-cat-lower-bound}
  If the number of candidates is even, then
  $\out(\GScat) > \nicefrac{1}{2}$.
\end{proposition}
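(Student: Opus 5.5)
The plan is to use the node-by-node decomposition of the distance to a group-separable domain from \Cref{sec:computation}, and then bound the expected cost of each node separately via linearity of expectation. The goal is to show $\ansd(\GScat) < \nicefrac14$, i.e., that the expected swap distance from a uniformly random $v \in \calL(C)$ to $\GScat$ is strictly less than $\frac{1}{4}\binom{m}{2} = \frac{m(m-1)}{8}$. Then $\out(\GScat) = 1 - 2\cdot\ansd(\GScat) > \nicefrac12$.

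First I fix the caterpillar tree $\calT$ as in the proof of the earlier proposition about $\GScat$: $c_1$ is the leaf closest to the root, then $c_2$, and so on. There are $m-1$ internal nodes. For $i \in [m-1]$, the $i$-th internal node has children $\{c_i\}$ and $\{c_{i+1},\dots,c_m\}$. By the argument preceding the $O(m^2)$ theorem for $\GS(\calT)$, the orientations of the nodes can be chosen independently. Hence
\[ \textstyle \swap(\GScat,v) = \sum_{i=1}^{m-1} \min(X_i,\, (m-i)-X_i), \]
where $X_i = |\{j>i : c_j \succ_v c_i\}|$. Taking expectations, it suffices to understand the distribution of each $X_i$ separately.

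The key observation is that, for $v$ uniform, the restriction of $v$ to $\{c_i,\dots,c_m\}$ is uniform. So the rank of $c_i$ among these $n+1$ candidates, where $n = m-i$, is uniform. Thus $X_i$ is uniform on $\{0,1,\dots,n\}$. I then compute $E[\min(X,n-X)]$ for $X$ uniform on $\{0,\dots,n\}$:
\begin{itemize}
\item for $n = 2t$, the sum of the minima is $t^2$, giving $\frac{t^2}{2t+1} < \frac{t}{2} = \frac{n}{4}$;
\item for $n = 2t+1$, the sum is $t(t+1)$, giving $\frac{t}{2} = \frac{n-1}{4} < \frac{n}{4}$.
\end{itemize}
Summing over $n = 1,\dots,m-1$ gives an expected distance strictly below $\sum_{n=1}^{m-1} \frac{n}{4} = \frac{m(m-1)}{8}$, as required.

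The only delicate step is justifying the per-node decomposition, in particular that flipping one node changes exactly the inversions between $c_i$ and the later candidates. That step is already established in the paper, so I expect no real obstacle. This argument seems to give the strict inequality for every $m \ge 2$, not only even $m$. I would double-check the parity computations, and if needed present the even case as stated, noting that the bound holds more generally.
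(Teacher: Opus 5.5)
Your proof is correct, and it takes a different route from the paper's.

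\textbf{The paper's route.} The paper does not use the optimal node-by-node decomposition. It exhibits one explicit, suboptimal way to move a random $v$ into $\GScat$, based on the characterization that a ranking lies in $\GScat$ iff the candidate indices first increase and then decrease. It sorts the top $m/2$ positions of $v$ in increasing index order and the bottom $m/2$ positions in decreasing index order. It then bounds the cost by the expected inversion count of two random permutations of length $m/2$, which gives $\ansd(\GScat) \le \frac14\cdot\frac{m-2}{m-1}$. This needs only the unimodality characterization and the standard $\frac14 n(n-1)$ expected-inversion fact. The equal split into halves is where the parity assumption enters, although an unequal split would handle odd $m$ as well.

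\textbf{Your route.} You use the per-node independence from the $\GS(\calT)$ algorithm together with the observation that $X_i$ is uniform on $\{0,\dots,m-i\}$. This yields the exact value of $\ansd(\GScat)$ rather than an upper bound, and it works for every $m \ge 2$. As a sanity check, for $m=8$ your formula gives $\out(\GScat)\approx 0.613$, matching \Cref{tab:diversity}.

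\textbf{What the exact formula adds.} Your formula shows the expected distance is $\frac{m(m-1)}{8}-\frac{m-1}{4}+O(\log m)$. Hence $\out(\GScat)=\frac12+\Theta(1/m)$, which tends to $\frac12$. So the constant $\frac12$ in the proposition is asymptotically tight, something the paper's one-sided estimate cannot reveal.
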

\begin{proof}
  Take a $\GScat$ domain for candidate set
  $C = \{c_1, \ldots, c_m\}$, where $m$ is even, defined via binary
  caterpillar tree where the leaf closest to the root is $c_1$, the
  next one is $c_2$, and so on.

  Let $v$ be some arbitrary ranking from $\calL(C)$. To transform it
  into a member of $\GScat$ we can, for example, sort its top half in
  the increasing order of the candidate indices, and sort the bottom
  half in the decreasing order of candidate indices.  As shown by
  \citet{boe-bre-elk-fal-szu:c:map-pref-learning}, ensuring that
  candidate indices first increase and then decrease is a necessary
  and sufficient condition for a ranking to belong to $\GScat$.
  The
  number of swaps needed to implement such sorting in the top half of
  the ranking is equal to the number of inversions there.  Since the
  expected number of inversions in a random permutation is
  $\frac{1}{4}n(n-1)$, when considering all votes from $\calL(C)$, on
  average we need to perform
  $\frac{1}{4}(m/2)(m/2-1) = \frac{1}{16}m(m-2)$ swaps in their top
  halves, and the same number of swaps in their bottom
  halves. Altogether, we need to perform $\frac{1}{8}m(m-2)$ swaps per
  ranking in $\calL(C)$, so we have
  $
    \textstyle
    \ansd(\GScat) \leq \frac{\nicefrac{m(m-2)}{8}}{\nicefrac{m(m-1)}{2}} = \frac{1}{4}\cdot\frac{m-2}{m-1}.
  $
  This means that we have
  $\out(\GScat)
  \geq 1 -
  \frac{1}{2}\cdot\frac{m-2}{m-1} > \frac{1}{2}$.
\end{proof}

Experimentally, we checked that for 1000 candidates,
outer diversity of $\SP$ and $\SPOC$
is equal to around $0.039$ and $0.055$, respectively,
raising a question of convergence to zero for these domains as well.

\section{Most Diverse Domains}\label{sec:most-diverse}
Given a number $k$, we ask for a domain of $k$ rankings with the
highest outer diversity value. As per our observation in
\Cref{sec:definition}, we can compute such a domain by solving the
$k$-Kemeny problem for the $\UN$ election using, e.g., integer linear
programming (ILP).\footnote{Finding $k$ rankings that achieve the
  optimal $k$-Kemeny score for $\UN$ can be formulated as the
  \textsc{$k$-Median} clustering applied on the metric space of all possible
  rankings under the $\swap$ distance. We use the standard ILP
  formulation for this problem.} Unfortunately, solving this ILP is
challenging, as its size for $m$ candidates is
$\Theta((m!)^2)$. Hence, for $m \geq 6$ we use the following heuristics
(to compute the outer diversity of the domains produced
by them, we use the sampling approach, with $N = 1000$ samples):
\begin{enumerate}
\item We sample $k$ rankings uniformly at random from $\calL(C)$ (this
  is known as sampling from impartial culture, IC).
\item We sample $k$ rankings from IC and perform simulated annealing 
  (technical details available in the~\Cref{apdx:sec:computing-most-diverse}).
\end{enumerate}
We also use a heuristic that does not allow us to control the size of
the domain, but selects rankings that are spread out over
$\calL(C)$:
\begin{enumerate}
  \setcounter{enumi}{2}
\item We choose a threshold $t \in \{5, 6, \ldots, 25\}$ and keep on
  sampling rankings from IC (altogether $10^4$ of them), keeping only
  those whose swap distance from the closest already-kept one is
  greater or equal to $t$.
\end{enumerate}
Instead of using this heuristic, we would rather keep on 
selecting rankings that are at the largest possible swap distance from
those previously selected, but finding such rankings is $\np$-complete.

\begin{figure}[t]
\centering
  \begin{tabular}{cc}
  
    \begin{subfigure}{0.4\columnwidth}
      \includegraphics[width=1\linewidth]{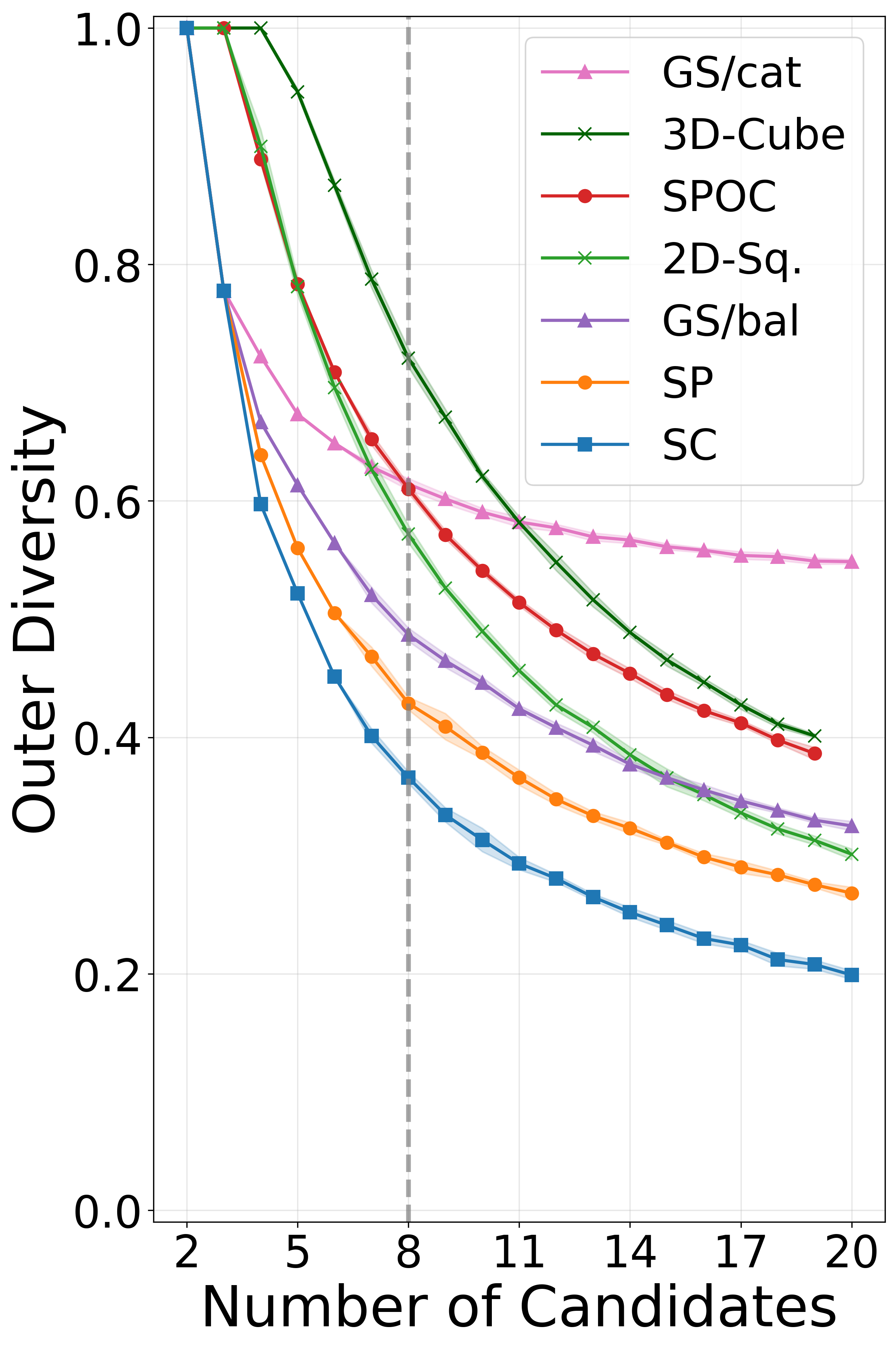}
    \end{subfigure}
    \hspace{0.75cm}
    \begin{subfigure}{0.4\columnwidth}
      \includegraphics[width=1\linewidth]{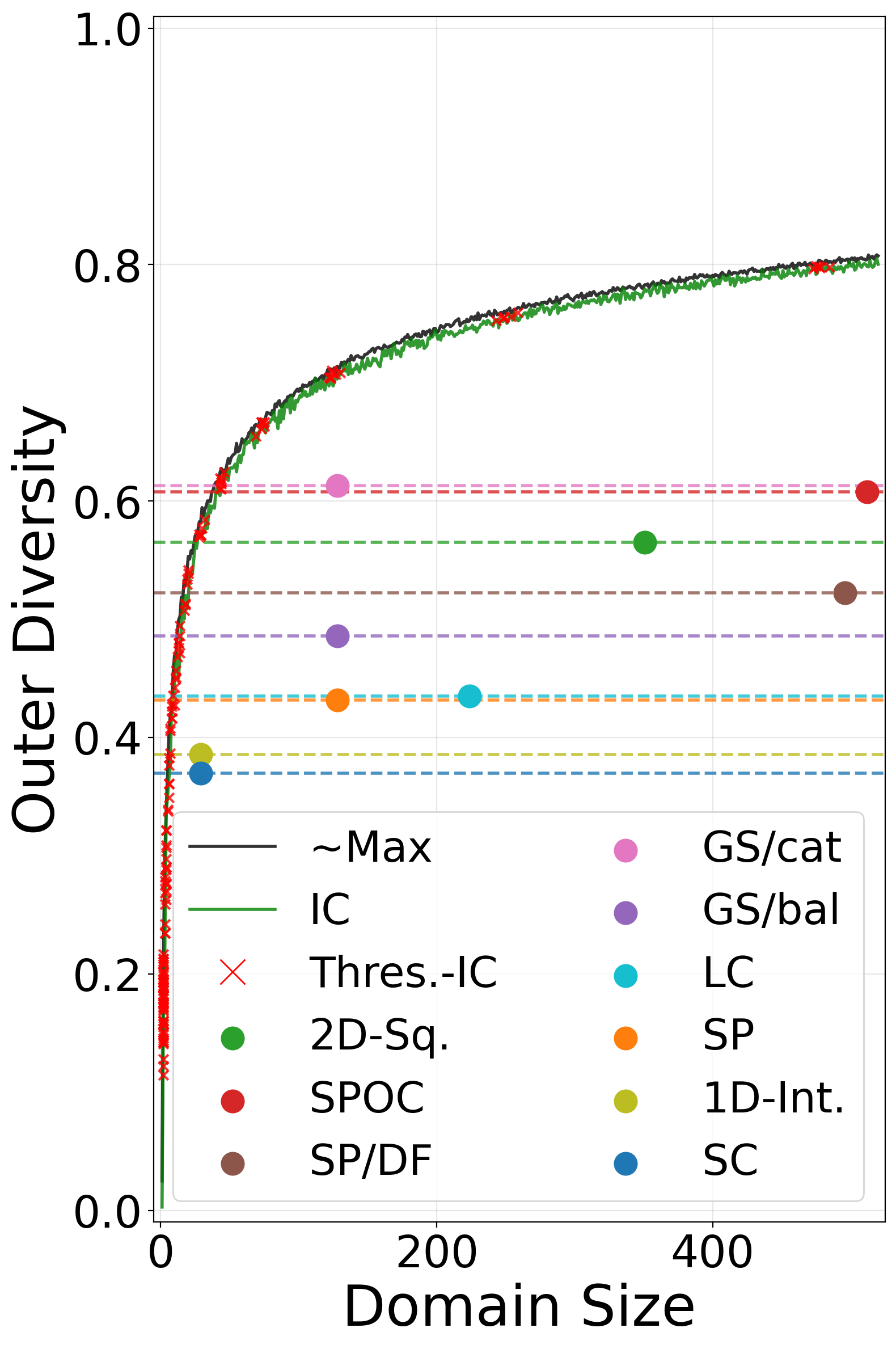}
    \end{subfigure} 
    
  \end{tabular}

  \caption{\label{fig:out-div} Outer diversity of several structured
    domains as a function of the number of candidates (on the left),
    or as a function of their size (on the right; including
    approximations of most diverse domains). For $\boldsymbol\SPOC$ and $\boldsymbol\Cube$, we omit
    outer diversity for $\boldsymbol{20}$ candidates, due to computation time.}
\end{figure}

\begin{theorem}\label{thm:largest-hole-np-hard}
  Given a positive integer $t$ and a domain $D \subseteq \calL(C)$,
  represented by explicitly listing its rankings, deciding if there is
  a ranking $v$ such that $\min_{u \in D}\swap(u,v) \geq t$ is
  $\np$-complete.
\end{theorem}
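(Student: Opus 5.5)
The plan is to observe that, under swap distance, being far from a ranking is the same as being close to its reverse. This turns our problem into the known $\np$-hard \kemenycenter problem, also called the minimax or closest-permutation problem under Kendall's $\tau$: given rankings $u_1,\dots,u_n$ and an integer $R$, decide whether some ranking $v$ satisfies $\swap(u_i,v)\le R$ for all $i$.

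\emph{Membership in $\np$} is routine. The certificate is the ranking $v$ itself. Since $D$ is listed explicitly, we can compute $\min_{u\in D}\swap(u,v)$ in time $O(|D|\cdot m\sqrt{\log m})$ and compare it with $t$.

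\emph{Hardness.} For a ranking $u$, let $\overline{u}$ denote its reverse. Every pair of candidates is ordered oppositely in $u$ and $\overline{u}$. Hence for every $v\in\calL(C)$, each pair is inverted with respect to exactly one of $u$ and $\overline{u}$, which gives
\[
  \swap(u,v)+\swap(\overline{u},v)=\textstyle\binom{m}{2}.
\]
Given an instance $(u_1,\dots,u_n;R)$ of \kemenycenter, we build the domain $D=\{\overline{u_1},\dots,\overline{u_n}\}$ and set $t=\binom{m}{2}-R$. By the identity above, for each $i$ the condition $\swap(\overline{u_i},v)\ge t$ holds if and only if $\swap(u_i,v)\le R$. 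Therefore some $v$ satisfies $\min_{u\in D}\swap(u,v)\ge t$ exactly when $v$ is a center of radius $R$ for the original instance. The construction runs in polynomial time. If $R>\binom{m}{2}$, the instance is trivially a yes-instance and we output a fixed yes-instance; this case is degenerate and harmless.

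The main obstacle is not the reduction itself but the source of hardness. We must rely on a correct citation showing that the Kendall-$\tau$ center problem is $\np$-hard when the input permutations are listed explicitly; I would cite Popov's result on the closest permutation under swaps, or the work of Biedl, Brandenburg, and Deng. If a self-contained proof were required, I would instead reduce from a string center problem such as \textsc{Closest Binary String}. In that approach, each bit is encoded by the relative order of a pair of adjacent candidates, and long blocks of fixed candidates penalize any cross-pair interaction. That gadget analysis would be the delicate part, so I would first try to rely on the known result via the reversal trick above.
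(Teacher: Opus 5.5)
Your proposal is correct and is essentially the paper's argument: the paper also reduces from \kemenycenter using the identity $\swap(u,v)+\swap(\rev(u),v)=\binom{m}{2}$, and with the same threshold. The only cosmetic difference is that the paper keeps $D$ and turns a center $c$ into the witness $\rev(c)$, whereas you reverse the domain and keep the witness. One small fix: your degenerate case should read $R\ge\binom{m}{2}$ rather than $R>\binom{m}{2}$, since $R=\binom{m}{2}$ already gives $t=0$, which is not a positive integer.
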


\begin{figure}[t]
\centering
  \begin{tabular}{ccc}
  
    \begin{subfigure}{0.363\columnwidth}
      \includegraphics[width=0.97\linewidth]{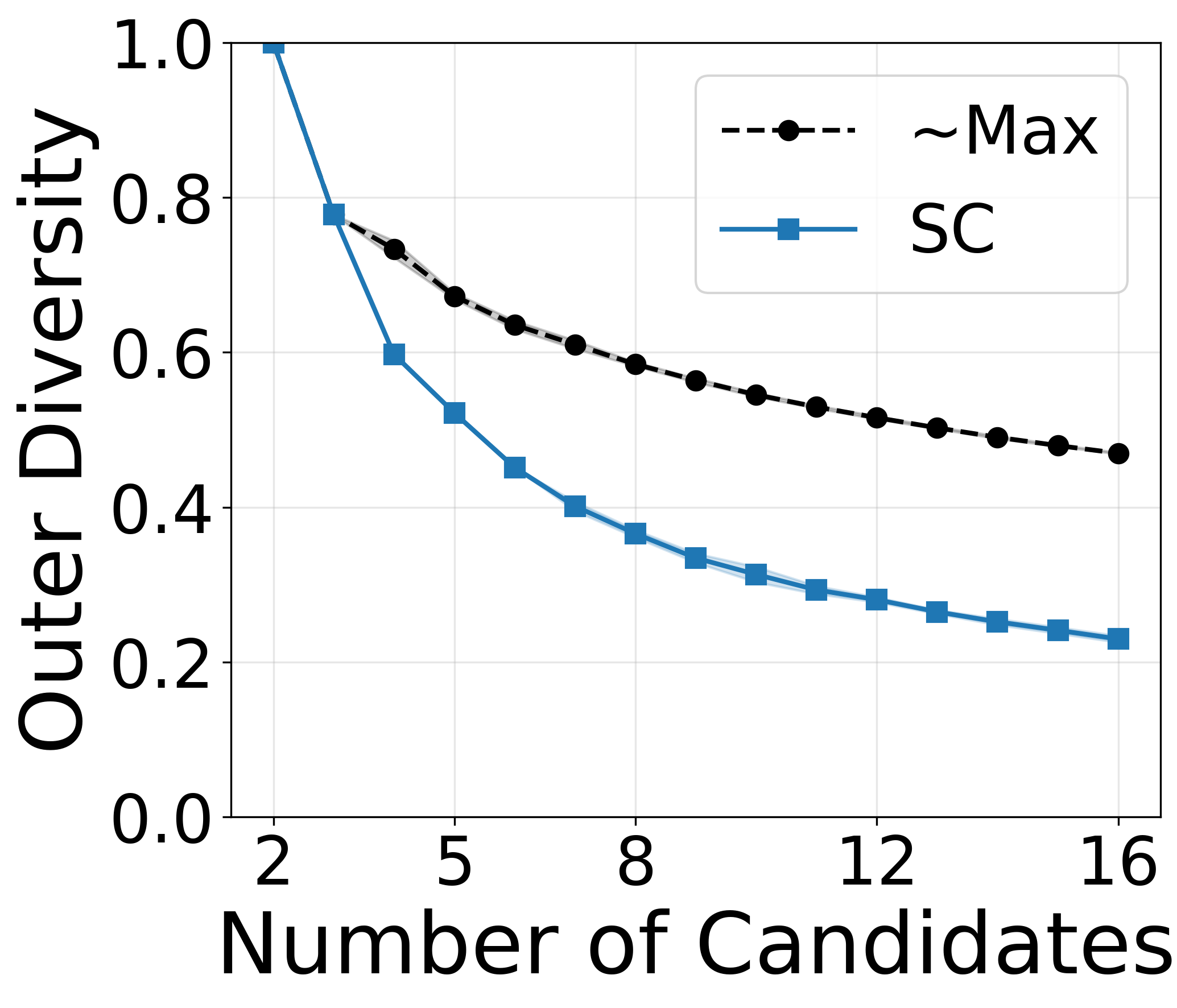}
    \end{subfigure} 
    \begin{subfigure}{0.30\columnwidth}
      \includegraphics[width=0.97\linewidth]{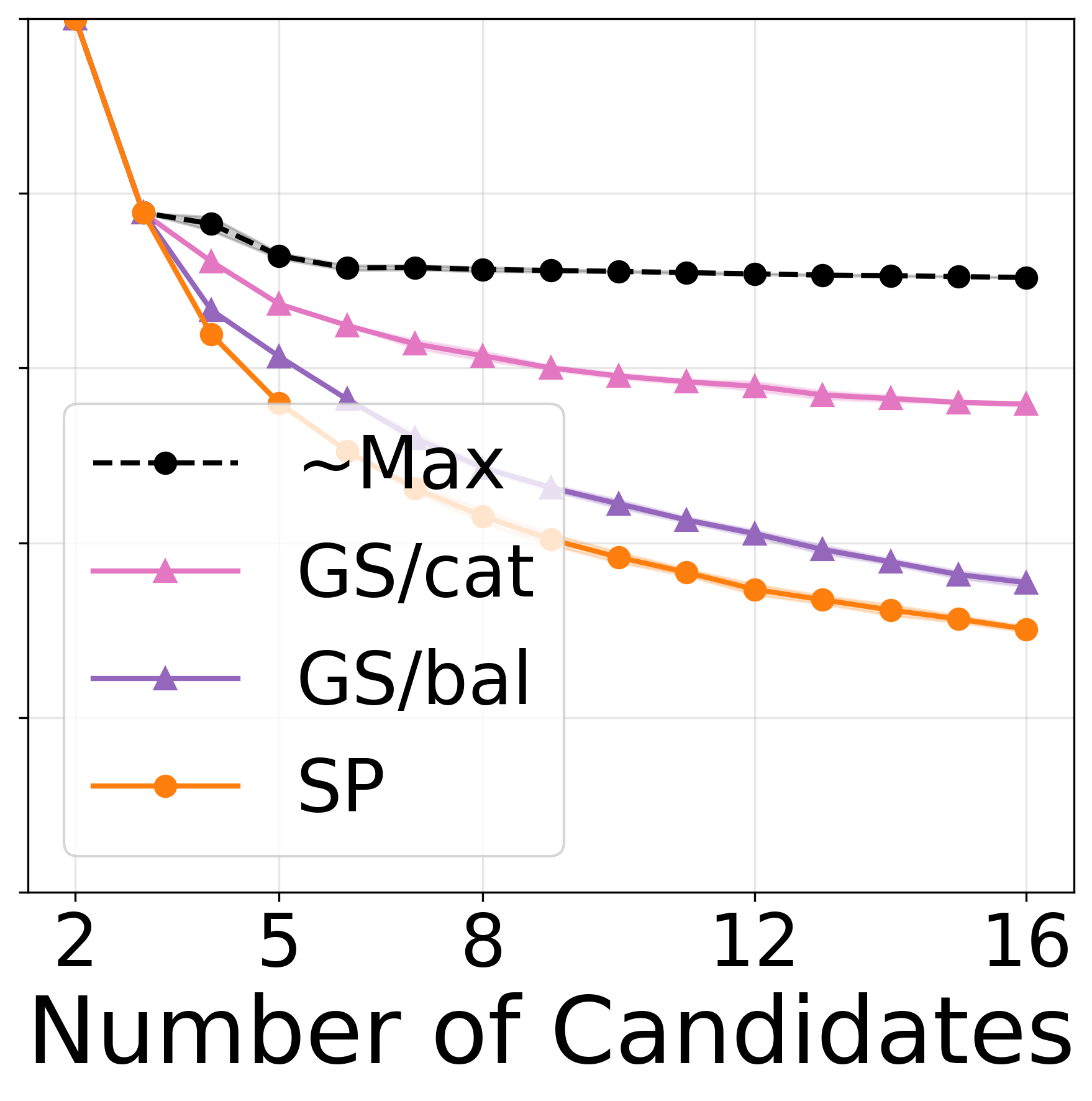}
    \end{subfigure}
    \begin{subfigure}{0.30\columnwidth}
      \includegraphics[width=0.97\linewidth]{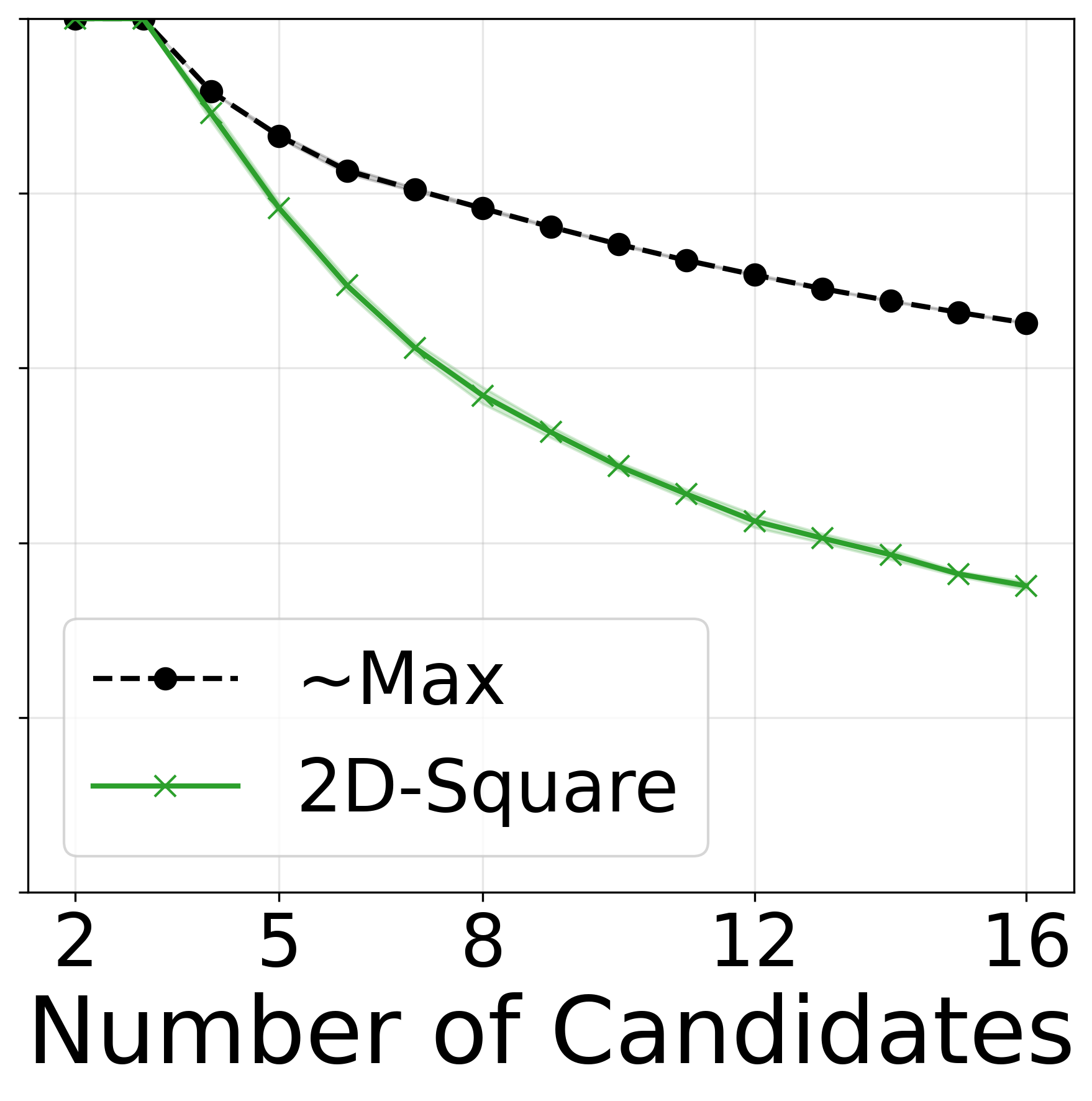}
    \end{subfigure}

  \end{tabular}
    
  \caption{\label{fig:out-div-m}Outer diversity of several structured
    domains as a function of the number of candidates, compared to the
    outer diversity of (an approximation of) the most diverse domain of the
    same size.}
\end{figure}

On the plots, we denote domains computed using the first heuristic as
IC, those computed using simulated annealing as \textasciitilde Max,
and those using the threshold approach as Thres.-IC. In
\Cref{fig:out-div} (right) we show how the outer diversities of these
domains for the case of $m = 8$ candidates, as we increase $k$ (for
the first two heuristics) or decrease $t$ (for the third one). We see
that for each given size of the domain, all three heuristics produce
very similar results. We interpret this as suggesting that, indeed, we
get close to the highest possible diversities. For the case of $6$
candidates we also compared our heuristically computed domains to the
optimal ones, obtained using ILP, and the results were nearly
identical (see \Cref{apdx:sec:computing-most-diverse}). \Cref{fig:out-div} (right) also includes
points corresponding to our structured domains, illustrating how far
off they are from the most diverse domains of their size.

In \Cref{fig:out-div-m}, for each domain $D \in \{\SC$, $\GScat$,
$\GSbal$, $\SP$, $\Square\}$, we plot the outer diversity of this
domain and the outer diversity of the most diverse domain of size
$|D|$ (as computed using our second heuristic) as a function of the
number of candidates (for up to $16$ of them, as beyond this number
computations proved too intensive). In particular, we see that for
polynomial-sized domains ($\SC$ and $\Square$), the diversity of the
most diverse domains seems to be dropping up to $16$ candidates. In contrast,
 for $\SP$, $\GSbal$, and $\GScat$, which are all of size
$2^{m-1}$, the outer diversity of the most diverse domain seems to
stabilize around the value $0.7$ (indeed, by
\Cref{pro:gs-cat-lower-bound}, we know that it cannot go below $0.5$;
proving a stronger bound would be interesting).

\section{Conclusions}

Our main conclusion is that outer diversity is a useful, practical
measure of domain diversity. Using it, we have found that $\GScat$
sharply stands out from many other structured domains in various
respects and, so, we recommend its use in experiments.  Throughout
the paper, we have made a number of observations, and we have explained some
of them theoretically. We propose seeking such explanations for the
remaining observations as future work.
Another interesting direction for future work
is to establish a formal relation between
outer and inner diversity notions.

\section*{Acknowledgments}

Tomasz Wąs was supported by UK Engineering and Physical Sciences Research Council (EPSRC) under grant EP/X038548/1.
This project
has received funding from the European Research Council (ERC) under
the European Union’s Horizon 2020 research and innovation programme
(grant agreement No 101002854).
\begin{center}
  \includegraphics[width=3.5cm]{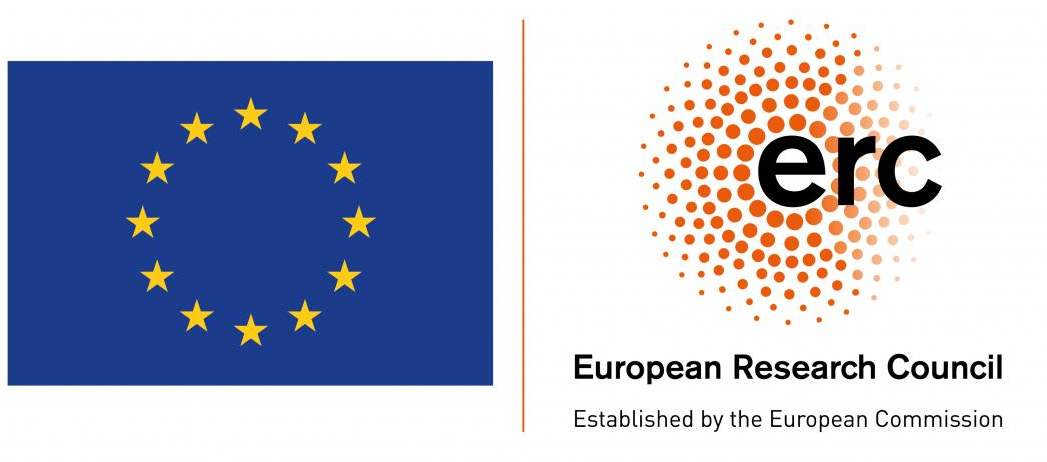}
\end{center}

\bibliography{bib-all}

\clearpage

\appendix

\section{Computing Outer Diversity}
\label{app:computation}

In this appendix,
we provide further details on algorithmic techniques
for establishing outer diversity of given domains.

\begin{algorithm}[t] 
    \caption{Computing outer diversity by BFS}
    \label{alg:growing-balls}
    \begin{algorithmic}[1]{
        \REQUIRE Domain $D$ over candidate set $C = \{c_1, \ldots, c_m\}$ \\ 
        \STATE $D_0 \leftarrow D$, $i \leftarrow 0$
        \WHILE{$\bigcup_{j=0}^i D_i \neq \calL(C)$}
           \STATE $D_{i+1} = \emptyset$
           \FOR{$v \in D_i$}
             \FOR{$u$ such that $\swap(u,v)=1$}
               \STATE \label{gb:core} \textbf{if} $u \notin \bigcup_{j=0}^{i}D_j$: $D_{i+1} \leftarrow D_i \cup \{u\}$
             \ENDFOR
           \ENDFOR
           \STATE $i \leftarrow i+1$
        \ENDWHILE        
        \RETURN $1 - 2(\frac{1}{m!}\sum_{j=0}^{i} j\cdot|D_j|)$}
        \end{algorithmic}
\end{algorithm}

Our exact algorithm, given as \Cref{alg:growing-balls} proceeds as
follows: Given domain $D$, we form a sequence of sets, $D_0$, $D_1$,
$\ldots$, such that for each $i$, $D_i$ contains rankings that are at
swap distance $i$ from $D$. For each $i$, we compute $D_{i+1}$ by
considering all the votes that can be obtained from those in $D_{i}$
by a single swap of adjacent candidates, and include in $D_{i+1}$
those that do not belong to $\bigcup_{j=0}^i D_i$. Given $D_0$, $D_1$,
$\ldots$, we compute $\ansd(D)$ as the weighted sum of their sizes,
and $\out(D)$ as $1-2\ansd(D)$.  Fast implementation requires storing
each $D_i$ individually, as well as the growing union of these sets,
for increasing values of $i$.

\ProBfs*
\begin{proof}
  We use \Cref{alg:growing-balls}, whose correctness
  follows directly from the definitions of $\ansd(D)$ and $\out(D)$.
  In line \ref{gb:core} of the algorithm, for
  each vote $v \in \calL(C)$ we consider all $m-1$ votes $u$
  obtained from $v$ by a single swap of adjacent candidates,
  resulting in $O(m \cdot m!)$ memberships checks.
  Rankings from $\bigcup_{j=0}^{i} D_j$ are stored in a trie (prefix tree),
  which allows $O(m)$-time membership checks and insertions.
  For the current iteration $i$, we store only sets $D_i$ and $D_{i+1}$
  (each taking $O(m \cdot m!)$ space) while retaining the values $|D_j|$ for $j<i$.
  Hence, the computational bottleneck is line \ref{gb:core} executed $O(m \cdot m!)$ times, each taking $O(m)$ time, leading to a total running time of $O(m^2 \cdot m!)$.
\end{proof}

Next, following the sampling approach,
we give detailed descriptions for computing
a distance between a given vote $v$ and domain $D$,
where $D$ is either
SPOC, SP$(T)$, GS/cat, GS/bal, SC, or Euclidean domain.

\subsection{Algorithms for Single-Peaked Domains}
\label{app:computation:sp}

Assume we are given a vote $v$
and single-peaked-on-a-cycle domain with cycle
$(c_1,\dots,c_m)$.
For convenience, we will sometimes allow candidate indices to go over $m$
and treat them as if they cycle over, i.e.,
$c_{i+m} = c_i$ for each $i \in [m]$.

For each $i \in [m]$ and $j \in \{0,1,\dots,m-1\}$,
let $C_{i,i+j}$ denote the set of candidates
$\{c_i,c_{i+1},\dots,c_{i+j}\}$
that form an interval in the cycle.
Then, let $U_{i,i+j}$ be a subset
containing all votes $u$ that
rank candidates from $C_{i,i+j}$ as
top $j+1$ candidates
and for each $t \in [j+1]$
the first $t$ candidates in $u$
form an interval in the cycle.
Also, let $A_{i,i+j}$ denote the minimum swap distance from $v$
to a vote in $U_{i,i+j}$.
These can be efficiently computed using
\Cref{alg:single-peaked-on-a-circle}.

\begin{theorem}
    \label{thm:vote-domain:single-peaked-on-a-circle}
    \Cref{alg:single-peaked-on-a-circle}
    computes the distance between a given vote
    and a single-peaked-on-a-circle domain in time $O(m^2)$.
\end{theorem}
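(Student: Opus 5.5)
The plan is to mirror the proof of \Cref{thm:vote-domain:single-peaked}, but to grow intervals of the cycle from the \emph{top} of the ranking instead of from the bottom. The key fact is that a ranking is single-peaked on the cycle exactly when, for every $t$, its top $t$ candidates form an interval $C_{i,i+t-1}$ of the cycle. So $\SPOC=\bigcup_{i\in[m]} U_{i,i+m-1}$, and the answer is $\min_{i\in[m]} A_{i,i+m-1}$.

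First I would show that $A_{i,i+j}$ splits into two parts. For $u\in U_{i,i+j}$, every inversion between $u$ and $v$ is of one of three kinds:
\begin{itemize}
\item a pair inside the top block $C_{i,i+j}$;
\item a pair inside the remaining candidates;
\item a pair with one candidate in each part.
\end{itemize}
The third kind depends only on the set $C_{i,i+j}$, not on how $u$ orders it. The second kind can be made zero by ordering the remaining candidates as in $v$, since $U_{i,i+j}$ does not restrict them. Hence $A_{i,i+j}=B_{i,i+j}+X_{i,i+j}$. Here $X_{i,i+j}$ is the fixed cross term, and $B_{i,i+j}$ is the minimum number of inversions with $v|_{C_{i,i+j}}$ over orderings of $C_{i,i+j}$ whose every prefix is a cycle interval. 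For $j=m-1$ the cross term vanishes, so only the values $B$ are needed.

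Next comes the recurrence for $B$. In any admissible ordering of $C_{i,i+j}$ with $j\ge 1$, the top $j$ candidates form an interval of size $j$ inside $C_{i,i+j}$. Therefore the last candidate is an endpoint, either $c_i$ or $c_{i+j}$. Suppose it is $c_i$. The inversions involving $c_i$ are exactly the candidates $c\in C_{i+1,i+j}$ with $c_i\succ_v c$. All other inversions form an admissible ordering of $C_{i+1,i+j}$, which can be chosen optimally and independently. This gives
\[
B_{i,i+j}=\min\bigl(B_{i+1,i+j}+L'_{i,i+j},\; B_{i,i+j-1}+R'_{i,i+j}\bigr), \qquad B_{i,i}=0,
\]
where $L'_{i,i+j}=|\{c\in C_{i+1,i+j}: c_i\succ_v c\}|$ and $R'_{i,i+j}=|\{c\in C_{i,i+j-1}: c_{i+j}\succ_v c\}|$, with all indices taken modulo $m$.

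The main point to justify carefully is this exchange argument: an optimal admissible ordering restricted to its top $j$ candidates is itself an optimal admissible ordering of that interval. It follows as in the $\SP$ proof, because a better ordering of the prefix could be substituted without changing the inversions that involve the last candidate.

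For the running time, I would precompute $L'$ and $R'$ cyclically, exactly as in Phase~1 of \Cref{alg:single-peaked}. Each value follows from its predecessor in $O(1)$ time: $L'_{i,i+j}=L'_{i,i+j-1}+[c_i\succ_v c_{i+j}]$, and $R'_{i,i+j}=R'_{i+1,i+j}+[c_{i+j}\succ_v c_i]$. The table $B$ is then filled in increasing order of $j$. There are $m^2$ pairs $(i,j)$, each handled in $O(1)$ time, and the final minimum over $i$ costs $O(m)$, for a total of $O(m^2)$.

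If \Cref{alg:single-peaked-on-a-circle} stores $A$ including the cross term, its recurrence differs from the one above only by the telescoping correction $X_{i,i+j}-X_{i+1,i+j}$ (or $X_{i,i+j}-X_{i,i+j-1}$), which is also available in $O(1)$ time. So the same argument shows that this algorithm is correct as well.
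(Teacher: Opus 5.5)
Your argument is correct in substance and is essentially the paper's proof. Like the paper, you grow arcs from the top, split on whether $c_i$ or $c_{i+r}$ sits at position $r+1$, and justify the recurrence by an exchange argument. The difference is bookkeeping. The paper works with the total distance $A_{i,i+r}$ directly, ordering the bottom $m-r$ candidates as in $v$, so its increments $R_{i+r+1,i}$ and $L_{i+r,i+m-1}$ count candidates \emph{outside} the smaller arc that $v$ prefers to the added candidate. You instead subtract the set-determined cross term $X$ and count inversions \emph{inside} the arc. Your telescoping remark is exactly the identity that converts one recurrence into the other.

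One step needs fixing. You claim the last candidate of an admissible ordering of $C_{i,i+j}$ must be $c_i$ or $c_{i+j}$. This is false for $j=m-1$: then $C_{i,i+m-1}=C$ is the whole cycle, and deleting any single vertex leaves an arc, so the bottom candidate of a $\SPOC$ ranking can be any $c_k$. Consequently, at level $j=m-1$ your recurrence does not compute $B_{i,i+m-1}$ as you defined it, namely the distance to all of $\SPOC$, which is the same for every $i$. It computes only the minimum over rankings ending in $c_i$ or $c_{i-1}$. The identity $\SPOC=\bigcup_i U_{i,i+m-1}$ is true but trivial, and it is not what makes the final $\min_i$ correct.

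The answer still survives, because a ranking ending in $c_k$ is captured at $i=k$, but you need to say so. Alternatively, do what the paper does. Stop the recurrence at $j=m-2$, where $C_{i,i+m-2}$ is a proper arc and the endpoint claim holds. Then return $\min_{i}A_{i,i+m-2}$, using that the top $m-1$ candidates of any $\SPOC$ ranking form the arc $C\setminus\{c_{i+m-1}\}$ for some $i$.
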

\begin{proof}
For the running time,
we observe that the loops in \Cref{alg:single-peaked-on-a-circle}
are at most $2$-nested,
over at most $m$ elements, and
each individual iteration runs in time $O(1)$.
The final minimum in line 14 runs in time $O(m)$,
but it is not part of any loop.

For the correctness,
similarly as in the proof of \Cref{thm:vote-domain:single-peaked},
we first note that for each $i \in [m], r \in \{0,\dots,m-1\}$
in $L_{i,i+j}$ we store the number of candidates in
$\{c_i,c_{i+1},\dots,c_{i+j}\}$ 
that are preferred over $c_i$ in $v$
(in \Cref{alg:single-peaked} we counted the number of candidates
that are less preferred than $c_i$, here it is reversed).
Analogously, in $R_{i+m-j,i}$,
we store the number of candidates preferred over $c_i$ from
$\{c_{i+m-j},c_{i+m-j+1}, \dots, c_{i+m}\}$
(note that it is also an interval).
We can efficiently compute both sets of numbers recurrently.

For every $i \in [m]$,
$U_{i,i}$ is just the set of all votes that
have $c_i$ as the top candidate.
Thus, $A_{i,i}$ is just a number of candidates
that are preferred over $c_i$ in $v$,
which is what is stored in $L_{i,i+m-1}$
(see line 7 of \Cref{alg:single-peaked-on-a-circle}).

For $i \in [m]$ and $r \in [m-1]$,
we compute values of $A_{i,i+r}$
using a recursive formula in line 10,
in a similar way to how it was done in \Cref{alg:single-peaked}.
Let $U_{\underline{i},i+r}$ be a subset of votes in $U_{i+1,i+r}$
that additionally have $c_{i}$ at the position $r+1$.
Similarly, let $U_{i,\underline{i+r}}$
be a subset of votes in $U_{i,i+r-1}$
with $c_{i+r}$ at the position $r+1$.
Since every vote in $U_{i,i+r}$
has candidates $\{c_i,c_{i+1},\dots,c_{i+r}\}$
at the first $r+1$ positions
and the first $r$ candidates form an interval in the cycle,
it must be $c_i$ or $c_{i+r}$ in the position $r+1$.
Thus,
\(
    U_{i,i+r} = U_{\underline{i},i+r} \cup U_{i,\underline{i+r}}.
\)
Hence,
\(
    A_{i,i+r} = \min(\min_{u \in U_{\underline{i},i+r}} \swap(v,u),
        \min_{u \in U_{i,\underline{i+r}}} \swap(v,u)).
\)

Fix a vote $u \in U_{\underline{i},i+r}$ minimizing $\swap(v,u)$.
Observe that the last $m-r-1$ candidates in $u$
have to appear in exactly the same order as they appear in $v$.
Let $u'$ be obtained from $u$
by arranging $m-r$ last candidates in this way
(so we additionally relocate $c_i$).
Observe that $u'$ minimizes the swap distance to $v$
among votes in $U_{i+1,i+r}$, i.e.,
$A_{i+1,i+r} = \swap(u',v)$.
Moreover, the difference between
$\swap(u,v)$ and $\swap(u',v)$
is the number of candidates outside of $\{c_{i+1},\dots,c_{i+r}\}$
that are preferred over $c_i$.
Observe that
$C \setminus \{c_{i+1},\dots,c_{i+r}\} =
\{c_{i+r+1},\dots,c_{i+m}\}$.
Therefore, we get that
$\swap(u,v) = A_{i+1,i+r} + R_{i+r+1,i}$.
Analogously, we can prove that
$\min_{u \in U_{i,\underline{i+r}}} \swap(v,u) =
A_{i,i+r-1} + L_{i+r,i+m-1}$.
This yields the recursive equation from line 10.

Finally, observe that for each $i \in [m]$,
the set $U_{i,i+m-2}$ contains all single-peaked-on-a-circle votes
that have candidate $c_{i+m-1}$ at the bottom of the ranking.
Thus, taking the minimum of distances to each such vote
we get the minimum distance to any single-peaked-on-a-circle vote.
\end{proof}

\begin{algorithm}[t] 
    \caption{Distance between a ranking and $\SPOC$}
    \label{alg:single-peaked-on-a-circle}
    \begin{algorithmic}[1]{
        \REQUIRE Vote $v \in \mathcal{L}(C)$, societal axis $c_1 \rhd \dots \rhd c_m$ \\ 

        \textsc{Phase 1, Precomputation:}
        \STATE \textbf{for} $i \in [m]$ \textbf{do}
            $c_{i+m} \leftarrow c_i$
        \FOR{$i \in [m]$}
            \STATE $L_{i,i} \leftarrow 0, R_{i+m,i} \leftarrow 0$
            \STATE \textbf{for} $j \in [m-1]$ \textbf{do}
                $L_{i,i+j} \leftarrow L_{i,i+j-1} +
                [c_{i+j} \succ_v c_i]$
            \STATE \textbf{for} $j \in [m-i-1]$ \textbf{do}
                $L_{i+m,i+m+j} \leftarrow L_{i,i+j}$
            \STATE \textbf{for} $j \in [m-1]$ \textbf{do}
                $R_{i+m-j,i} \leftarrow R_{i+m-j+1,i} +
                    [c_{i+m-j} \succ_v c_i]$
        \ENDFOR
        
        \textsc{Phase 2, Main Computation:}
        \STATE \textbf{for} $i \in [m]$ \textbf{do}
        $A_{i,i} \leftarrow L_{i,i+m-1}, \quad
        A_{m+i,m+i} \leftarrow A_{i,i}$
        \FOR{$r \in [m-2]$}
            \FOR{$i \in [2m-r]$}
            \STATE $A_{i,i+r} \leftarrow \min(
                    A_{i,i+r-1} + L_{i+r,i+m-1}, \
                    A_{i+1,i+r} + R_{i+r+1,i})$
            \ENDFOR
        \ENDFOR
        \RETURN $\min_{i = [m]}A_{i,i+m-2}$}
        \end{algorithmic}
\end{algorithm}

We can also extend \Cref{alg:single-peaked}
to work on arbitrary tree with $k$ leaves in time $O(km^k)$.
The pseudocode is summarized in \Cref{alg:single-peaked-on-a-tree}.

\begin{theorem}
    \label{thm:vote-domain:single-peaked-on-a-tree-app}
    \Cref{alg:single-peaked-on-a-tree}
    computes the distance between a given vote
    and a single-peaked-on-a-tree domain in time $O(km^k)$.
\end{theorem}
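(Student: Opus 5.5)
We generalize the dynamic program of \Cref{alg:single-peaked}, but now from the top of the ranking rather than the bottom (as in \Cref{alg:single-peaked-on-a-circle}). The key observation is that $u \in \SP(T)$ if and only if, for every $t$, its top $t$ candidates induce a connected subtree of $T$. Equivalently, $u$ is a sequence of subtrees $S_1 \subset S_2 \subset \dots \subset S_m = V(T)$, where $|S_t| = t$ and $S_{t+1}$ is obtained from $S_t$ by adding one vertex adjacent to $S_t$.

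For each connected vertex set $S \subseteq V(T)$, let $A_S$ be the minimum swap distance from $v$ to a ranking that places $S$ on its top $|S|$ positions, such that each prefix induces a connected subgraph. The final answer is $A_{V(T)}$.

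To justify the recursion, fix an optimal such ranking $u$ and let $c$ be the candidate at position $|S|$. Then $S \setminus \{c\}$ is connected, and, exactly as in the proof of \Cref{thm:vote-domain:single-peaked}, the candidates of $C \setminus S$ may be assumed to be ordered as in $v$. The swap distance then splits into three parts:
\begin{enumerate}
\item the cost of ordering $S \setminus \{c\}$;
\item the inversions between $c$ and $C \setminus S$ (the same for every placement), namely the number of $b \in C \setminus S$ with $b \succ_v c$;
\item the inversions between $S$ and $C\setminus S$ excluding $c$, which are already counted in $A_{S\setminus\{c\}}$ once that quantity is defined with the rest of the ranking $C\setminus(S\setminus\{c\})$ ordered as in $v$.
\end{enumerate}
This gives
\[
A_S=\min_{c}\bigl(A_{S\setminus\{c\}}+|\{b\in C\setminus S : b\succ_v c\}|\bigr),
\]
where $c$ ranges over vertices of $S$ whose removal keeps $S$ connected, and $A_\varnothing = 0$. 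The removable vertices are exactly the leaves of the subtree $T[S]$. I would verify this decomposition carefully, mirroring the $U_{\underline{\ell},r}$ argument.

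Next, I bound the number of states. A subtree $S$ of $T$ with at least two vertices is determined by its set of boundary points along root-to-leaf directions. For each of the $k$ leaves $x$ of $T$, record the vertex of $S$ closest to $x$, or equivalently how far $S$ reaches along the path toward $x$. This gives at most $m$ choices per leaf, hence $O(m^k)$ connected sets. Any subtree is determined by its leaf set, and its leaves are among these $k$ extreme points, so at most $k$ candidates $c$ need to be tried per state.

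Each transition needs the count $|\{b \notin S: b\succ_v c\}|$ in $O(1)$. I would obtain this by precomputation: for each $c$, the count equals $(\text{number of candidates above } c \text{ in } v)$ minus the number of those lying in $S$. For $c$ a leaf of $T[S]$, the set $S \setminus \{c\}$ is a subtree, so the count can be maintained incrementally: $|\{b\in S : b\succ_v c\}|$ is obtained from prefix counts along the $k$ paths from a fixed center, precomputed in $O(km^2)$ or less.

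The main obstacle is this $O(1)$ bookkeeping together with an indexing of subtrees (a $k$-tuple of path lengths) that allows constant-time lookup of $A_{S\setminus\{c\}}$, so that the total running time is $O(k \cdot m^k)$. If the per-transition count cannot be made $O(1)$, a simple alternative is to store it as an extra DP table over states, $B_{S,c}$, updated alongside $A_S$.
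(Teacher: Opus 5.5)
Your proposal is correct and is essentially the paper's proof run in mirror image. The paper fixes the bottom of the ranking and shrinks $S$ from $C$ down to singletons (generalizing \Cref{alg:single-peaked}, with answer $\min_{c} A_{\{c\}}$). You instead fix the top and grow $S$ from $\varnothing$ to $C$ (generalizing \Cref{alg:single-peaked-on-a-circle}, with answer $A_C$), with the same $O(m^k)$ state bound via the closest points of $S$ to the $k$ leaves and the same at most $k$ leaf transitions per state.

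For the $O(1)$ transition cost, use your fallback table $B_{S,c}$, updated by deleting another leaf $y$ of $T[S]$ exactly as the paper updates $I_{x,S}$. Your ``prefix counts from a fixed center'' idea is not worked out for trees with several branching vertices, and would naturally cost $O(k)$ per query.
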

\begin{proof}
Let us fix such tree $G$
on a set of candidates $C$
and a given vote $v$.
Let $\mathcal{S}=(S_1,S_2,\dots,S_\ell)$
be a sequence of subsets of $C$
such that each $S \in \mathcal{S}$,
if and only if,
$S$ is nonempty and
connected in $G$,
and for each $i,j \in [\ell]$,
we have that $S_i \supseteq S_j$
only if $i < j$.
In particular, this means that $S_1 = C$.
Observe that the length of the sequence $\mathcal{S}$
is bounded by $O(m^k)$,
as each connected subset of $C$
can be uniquely identified by how far away from each leaf
is the closest node from $S$
(and the maximal value of such distance is bounded by $m$).
We can also compute such sequence $\mathcal{S}$
in time $O(m^k)$ by checking each possible
$k$-tuple of such distances
starting from the smallest ones.

Then, for each $S \in \mathcal{S}$
by $X_S \subseteq S$ let us denote the set of nodes in $S$
that are leafs in the subgraph induced by $S$
(note that there are at most $k$ of them).
Furthermore, for each such $x \in X_S$
we denote the number of candidates in $S$
over which $x$ is preferred in $v$ by:
\[
    I_{x,S} = |\{ c \in S : c \succ_v x\}|.
\]
This corresponds to values $L_{i,j}$ and $R_{i,j}$
used in \Cref{alg:single-peaked}.
We can compute all of them in time
$O(km^k)$ in the reversed order to that in sequence $\mathcal{S}$.
This is since for each other leaf $y \in X_S \setminus \{x\}$,
it holds that
\(
    I_{x,S} = I_{x,S \setminus \{y\}} +
    [x \succ_v y].
\)

Next, for each $S \in \mathcal{S}$
we define $U_S$ as a set of all votes $u \in \mathcal{L}(C)$
in which candidates $C \setminus S$ are in the last positions
and for each $t \in [m] \setminus [|S|]$,
the first $t$ candidates in $u$ form a connected subset in $G$.
Also, we denote $A_S = \min_{u \in U_S} \swap(u,v)$.

Clearly, $A_{S_1} = 0$ as $S_1 = C$, thus $U_C = \mathcal{L}(C)$.
For each $S \in (S_2, \dots, S_\ell)$,
we compute $A_{S}$ recursively,
similarly
to how we computed $A_{l,r}$ in \Cref{alg:single-peaked}.
Let $Y_S \subseteq S$ be a subset of nodes in $C \setminus S$
that are connected to some node in $S$
(again, there are at most $k$ of them).
Then, for each $y \in Y_S$,
we can denote $U_{S,y}$ as a subset of votes in $U_{S \cup \{y\}}$
that have $y$ in the position $|S| + 1$.
Since every vote in $U_S$ has to have one of the nodes in $Y_S$
in the position $|S| + 1$, we get that
\(
    U_S = \bigcup_{y \in Y_S} U_{S,y}.
\)
Thus,
$A_S = \min_{y \in Y_S} \min_{u \in U_{S,y}}\swap(u,v).$

Then, as in the proof of \Cref{thm:vote-domain:single-peaked},
we can show that:
\[
    \min_{u \in U_{S,y}}\swap(u,v) =
    A_{S \cup \{y\}} + I_{y,S\cup \{y\}}.
\]
To this end, take $u \in U_{S,y}$ minimizing $\swap(u,v)$
and observe that in $u$
the first $|S|$ candidates are in the same order
in which they appear in $v$.
Let $u'$ be a vote obtained from $u$
by having the first $|S|+1$ candidates
ordered according to $v$
(i.e., candidate $y$ is relocated).
Then, $u'$ actually minimizes $\swap(u,v)$ in $U_{S \cup \{y\}}$
(the first $|S|+1$ candidates are in the optimal order,
and if reordering the last $m - |S|-1$ candidates was possible,
it would also be possible to reorder them in that way in $u$
decreasing the distance).
Finally, $\swap(u,v) - \swap(u',v)$ is the number of candidates from $S$
which are less preferred by $v$ than $y$,
which is what we store in $I_{y,S \cup \{y\}}$.

Observe that in this way,
we have computed $A_S$ for each singleton set
$S = \{c\}$ with $c \in C$.
In $U_{\{c\}}$ we have all votes in the domain that start with $c$.
Thus, taking the minimum over $A_{\{c\}}$
for all $c \in C$
we get the minimum distance in question.
\end{proof}

\begin{algorithm}[t] 
    \caption{Distance between a ranking and $\SP(G)$, where $G$ is a tree}
    \label{alg:single-peaked-on-a-tree}
    \begin{algorithmic}[1]{
        \REQUIRE Vote $v \in \mathcal{L}(C)$, tree $G$ with $C$ as nodes \\ 

        \textsc{Phase 1, Precomputation:}
        \STATE $\mathcal{S} = (S_1,\cdots,S_\ell) \leftarrow$ a sequence of subsets of $C$, such that:

        \hspace{2.5cm} $S \in \mathcal{S} \Leftrightarrow S \neq \varnothing$ and $ S$ connected in $G$ 
        
        \hspace{2.5cm} $S_i \supseteq  S_j \Rightarrow i <j$ 
        \FOR{$S \in (S_\ell, S_{\ell - 1}, \dots, S_1)$}
            \STATE $X_S \leftarrow$ leaves in graph induced by $S$
            \FOR{$x \in X_S$}
                \IF{$S = \{x\}$}
                    \STATE $I_{x,S} \leftarrow 0$
                \ELSE
                    \STATE $y \leftarrow$ arbitrary node from $X_S \setminus \{x\}$
                    \STATE $I_{x,S} = I_{x,S \setminus \{y\}} +
                        [x \succ_v y]$
                \ENDIF
            \ENDFOR
        \ENDFOR
        
        \textsc{Phase 2, Main Computation:}
        \STATE $A_{S_1} \leftarrow 0$
        \FOR{$S \in (S_2,\dots,S_\ell)$}
            \STATE $Y_S \leftarrow$ nodes in
            $C \setminus S$ connected to $S$
            \STATE $A_{S} \leftarrow \min_{y \in Y_S}(A_{S \cup \{y\}} + I_{y,S \cup \{y\}})$
        \ENDFOR
        \RETURN $\min_{c \in C}A_{\{c\}}$}
        \end{algorithmic}
\end{algorithm}

\subsection{Algorithms for Group-Separable Domains}
\label{app:computation:gs}
      
Now, let us look at the specific cases of GS/bal and GS/cat.
Let as consider $\GScat$ first, and let $v$ be the ranking whose swap
distance from $\GScat$ we want to compute.  We use an algorithm very
similar to the general one, but processing the internal nodes in the
decreasing order of their distance from the root, and using additional
data structures. Namely, when we consider an internal node whose
children are a leaf associated with some candidate $c$ and a subtree
whose leaves hold candidates from the set
$C' = \{c'_1, \ldots, c'_t\}$, then we assume that we also have a data
structure that for each $c'_i \in C$ holds the position that $c'_i$
has in $v$. We require that it is possible to insert positions into
this data structure in time $O(\log m)$ and that this data structure
can also answer in $O(\log m)$ time how many of the positions that it
stores are earlier in $v$ than a given one (so, this data structure
can be, e.g., a classic red-black tree, annotated with sizes of its
subtrees~\citep{cor-lei-riv-ste:b:algorithms-second-edition}).  Now,
we can simply query the data structure for the number $\mathit{inv}$
of candidates in $D$ that are ranked ahead of $d$ (i.e., whose
position is smaller than $\pos_v(d)$). This is the number of
inversions imposed by the current node in case we order its children,
so that in the frontier we have $\{d\} \pref C'$.  $t-\mathit{inv}$ is
the number of inversions imposed in the reversed configuration.  We
implement the configuration that leads to fewer inversions (or we
choose one arbitrarily in case of a tie), we insert $\pos_v(d)$ into
the data structure, and we proceed to the parent node of the current
one (or terminate, in case the current node was a root).
The correctness follows from the correctness of the general algorithm.
The running time follows from the fact that the tree has $O(m)$
internal nodes, and for each of them we need time $O(\log m)$.

\begin{theorem}
  There is an algorithm that computes the distance between a given
  vote and $\GScat$ (represented via a $\GS$-tree) in time
  $O(m \log m)$.
\end{theorem}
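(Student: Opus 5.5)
The plan is to instantiate the general group-separable algorithm from the $O(m^2)$ theorem and make each node's decision cost $O(\log m)$. The general algorithm rests on one observation: the number of inversions created between the children of an internal node depends only on that node's orientation. So $\swap(\GScat,v)$ equals the sum, over internal nodes, of the minimum of the two inversion counts. For a caterpillar tree, every internal node has a leaf child $d$ and a subtree child whose leaf set is $C'$. The only cross pairs are therefore $\{d\}\times C'$, and the two orientation costs are $\mathit{inv}=|\{c\in C' : \pos_v(c)<\pos_v(d)\}|$ and $|C'|-\mathit{inv}$. Since inversions across different nodes are counted independently, correctness transfers directly from the general theorem.

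For efficiency, I will process the internal nodes bottom-up, from the deepest one toward the root. I will maintain a balanced search tree (an order-statistics red-black tree, annotated with subtree sizes) holding the positions $\pos_v(c)$ for all candidates $c$ below the current node's subtree child. The deepest internal node has two leaf children, so I initialize the structure with one of them and treat the other as $d$. At each node, I do three things:
\begin{itemize}
\item query the rank of $\pos_v(d)$ to obtain $\mathit{inv}$;
\item add $\min(\mathit{inv},|C'|-\mathit{inv})$ to a running total;
\item insert $\pos_v(d)$ into the structure.
\end{itemize}
After the insertion, the structure contains exactly the leaf set of the parent's subtree child, which preserves the invariant.

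Precomputing $\pos_v$ takes $O(m)$ time. There are $m-1$ internal nodes, and each performs one rank query and one insertion in $O(\log m)$, giving $O(m\log m)$ in total. If the closest vote itself is wanted, I will record each orientation choice and read off the frontier in $O(m)$.

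The main obstacle is making sure that the rank query counts exactly the cross inversions for this node and no others. This holds because the invariant guarantees that the structure contains precisely $C'$. It is also helpful that a caterpillar tree is a path of internal nodes, so the bottom-up order is just the reverse of that path and requires no merging of structures.
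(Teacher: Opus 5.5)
Your proposal is correct and follows the paper's proof: process the caterpillar's internal nodes from the deepest one up to the root, keep a size-augmented red-black tree of the positions $\pos_v(c)$ for $c \in C'$, use one rank query to get $\mathit{inv}$, take the cheaper of $\mathit{inv}$ and $|C'|-\mathit{inv}$, then insert $\pos_v(d)$. This costs $O(\log m)$ per node and $O(m\log m)$ overall, and correctness is inherited from the general group-separable algorithm.
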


In case
of $\GSbal$, we proceed similarly as in the classic Merge Sort
algorithm. Let $\calT$ be a balanced $\GS$-tree and let $v$ be the
ranking under consideration.  As above, our algorithm manipulates the
ordering of the children of each node, to obtain a tree whose frontier
$u$ minimizes $\swap(u,v)$.
We use a recursive procedure that given an internal node $z$ with two
children, $z_\ell$ on the left and $z_r$ on the right, such that
$A = \{a_1, \ldots, a_x\}$ is the set of candidates associated with
the leaves of the tree rooted at $z_\ell$ and
$B = \{b_1, \ldots, b_y\}$ is the set of candidates associated with
the leaves of the tree rooted at $z_r$, proceeds as follows:
\begin{enumerate}
\item It calls itself recursively on $z_\ell$ and $z_r$ (unless a
  given subtree is a leaf). These calls order the children within the
  respective subtrees to minimize the number of inversions between the
  candidates in $A$ and $v$ and between the candidates in $B$ and
  $v$. Additionally, they return rankings $v_A$ and $v_B$ that are
  equal to $v$ restricted to $A$ and $B$, respectively.  Without loss
  of generality, we assume that $v_A$ orders the candidates in $A$
  according to their indices, and so does $v_B$ for the candidates in
  $B$
  
\item We perform the ``merge'' step, to decide whether to reverse the
  order of children of $z$ and to obtain $v_{A \cup B}$ (i.e., $v$
  restricted to the candidates in $A \cup B$).  We first consider the
  case where we do not reverse the order of $z$'s children.
  Initially, we set the number of inversions between to be $0$ and,
  then, we fill-in $v_{A \cup B}$ from the top position to the bottom
  one, by considering the prefixes of $v_A$ and $v_B$. Suppose that we
  have already filled-in the top $k-1$ positions in $v_{A \cup B}$
  with $i-1$ candidates from $A$ and $j-1$ candidates from $B$. The
  candidate on the $k$-th position in $v_{A \cup B}$ will either be
  the $i$-th candidate from $v_A$ or the $j$-th candidate from $v_B$,
  i.e., either $a_i$ or $b_j$. If $a_i \pref_v b_j$ then we choose
  $a_i$, and otherwise we choose $b_j$ and increase the number of
  inversions by $x-(i-1)$ because, in this configuration, in the
  frontier of our tree $b_j$ is ranked below
  $a_i, a_{i+1}, \ldots, a_{x}$. After we use up all the candidates of
  $A$ or $B$, then we fill-in $v_{A \cup B}$ with those from the other
  set, in the order in which they appear in $v_A$ or $v_B$,
  respectively.
  Let $\mathit{inv}$ be the computed
  number of inversions. If we reversed the order of children of $z$,
  then then number of inversions would be $|A||B|-\mathit{inv}$; if
  this value is smaller than $\mathit{inv}$ then we reverse the children.
  Finally, we output $v_{A \cup B}$. 
\end{enumerate}
Our algorithm executes this procedure on the root of the tree. The
correctness is immediate, whereas the running time of $O(m \log m)$
follows from the fact that $\GSbal$ trees have $O(\log m)$ levels, and
on each level, the merge steps require $O(m)$ steps.

\begin{theorem}
  There is an algorithm that computes the distance between a given
  vote and $\GSbal$ (represented via a $\GS$-tree) in time
  $O(m \log m)$.
\end{theorem}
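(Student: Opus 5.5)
The plan is to follow the same pattern as the general group-separable algorithm, but to organize it bottom-up over the balanced tree so that each level costs only $O(m)$.

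\emph{Step 1: independence of nodes.} As observed for arbitrary $\GS(\calT)$, $\swap(u,v)$ for a frontier $u$ decomposes into a sum over internal nodes $z$. Each term counts the pairs $(a,b)$ with $a$ and $b$ in different child subtrees of $z$ whose relative order in $u$ disagrees with $v$. For a binary node with left set $A$ and right set $B$, this term equals $\mathit{inv}$ if the children keep their order and $|A||B|-\mathit{inv}$ if they are reversed. Here $\mathit{inv}=|\{(a,b)\in A\times B : b\succ_v a\}|$. Hence
\[
\swap(\GSbal,v)=\sum_z \min(\mathit{inv}_z,|A_z||B_z|-\mathit{inv}_z),
\]
and it suffices to compute every $\mathit{inv}_z$.

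\emph{Step 2: merge-sort computation.} I would process nodes recursively. The procedure for node $z$ returns the list of $z$'s candidates sorted by position in $v$, i.e., $v$ restricted to $A\cup B$. Given the sorted lists $v_A$ and $v_B$ from the children, merge them by $v$-position as in the described procedure. Whenever the next element taken is some $b_j\in B$ while $i-1$ elements of $A$ have been taken, add $x-(i-1)$ to the counter. These are exactly the $a\in A$ ranked below $b_j$ in $v$, so the counter ends at $\mathit{inv}_z$. Correctness is then a standard inversion-counting invariant: each pair $(a,b)$ with $b\succ_v a$ is counted exactly once, namely when $b$ is emitted, since $a$ has not been emitted yet.

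\emph{Step 3: running time.} Precompute $\pos_v$ in $O(m)$, so each comparison takes $O(1)$. The merge at $z$ costs $O(|A|+|B|)$, and the nodes on one level of the tree have disjoint candidate sets, so each level costs $O(m)$. A balanced binary tree has depth $O(\log m)$, which gives $O(m\log m)$ in total. The final answer is the sum from Step 1, and recording the reversal choices yields the closest vote itself.

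I expect the main subtlety to be Step 1: justifying that reversing a node's children changes exactly the cross pairs of that node. Reversal also reorders the descendants' positions, but it preserves which pairs lie in different subtrees and keeps each subtree contiguous. Only the cross pairs at $z$ flip, and their contribution depends on $v$ alone, not on choices made at other nodes. Once that is settled, the rest is routine.
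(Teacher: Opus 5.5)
Your proposal is correct and follows essentially the same route as the paper's proof: a bottom-up, merge-sort-style recursion over the balanced tree that counts $\mathit{inv}_z$ while merging $v_A$ and $v_B$, keeps or reverses each node's children according to $\min(\mathit{inv}_z,|A||B|-\mathit{inv}_z)$, and costs $O(m)$ per level over $O(\log m)$ levels. Your explicit argument for node independence (each pair of candidates is decided solely by the orientation at its lowest common ancestor) spells out the fact that the paper takes over from its general group-separable algorithm.
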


\subsection{Algorithms for Single-Crossing and Euclidean Domains}
\label{app:computation:sc-euc}

\subsubsection{Single-Crossing}
\label{app:computation:sc}

Single-crossing domain contains $O(m^2)$ votes,
thus computing swap distance to each of them
and taking the minimum would give $O(m^3\sqrt{\log m})$ algorithm~\cite{cha-pat:c:fast-swap-distance}.

However, in practice,
we often want to compute the distance 
from multiple given votes to a single fixed domain.
For that case,
we present an algorithm that needs
a preprocessing step that also runs in time $O(m^3\sqrt{\log m})$ (this time due to the bottleneck in recognizing a single-crossing ordering of voters),
but then, for each input vote,
allows for computation of the distance in $O(m^2)$.

The preprocessing step involves sorting the votes in the domain
in a sequence that witnesses the single-crossingness
$(u_0,u_1,\dots,u_{M})$, where $M= \binom{m}{2}$.
This can be done in time $O(Mm\sqrt{\log M}) = O(m^3\sqrt{\log m})$~\cite{elk-lac-pet:t:restricted-domains-survey,bah-con-wat:c:fastest-sc-recognition}.
Next, for each $i \in [M]$,
we establish the unique pair of candidates $(a_i, b_i) \in C \times C$
such that $a_i \succ_{u_i} b_i$
but $b_i \succ_{u_{i-1}} a_i$.
We can establish all of them in time $O(m^2\log m)$
by looking at each pair of candidates 
and finding the place where its ordering switches using binary search.

Now, for each input vote $v \in \mathcal{L}(C)$
we first find the vector $\pos_v$
in which we keep the position
of every candidate in $C$
according to $v$.
This can be computed in time $O(m\log m)$
by sorting the arguments of the list in which we store vote $v$.
Next, we compute $\swap(u_0,v)$,
again in time $O(m\sqrt{\log m})$~\cite{cha-pat:c:fast-swap-distance}.
Further, for each $i \in [M]$,
we check whether $\pos_v(a_i) < \pos_v(b_i)$.
If it holds, then it means that in
$u_i$ candidates $a_i$ and $b_i$
are ordered in the same way as in $v$,
which is the opposite ordering to that in $u_{i-1}$.
Since all other pairs are ordered in the same way in $u_i$ and $u_{i-1}$, we get that
$\swap(u_i,v) = \swap(u_{i-1},v) - 1$.
If $\pos_v(a_i) > \pos_v(b_i)$ holds,
then analogously $\swap(u_i,v) = \swap(u_{i-1},v) + 1$.
In this way, we can compute swap distance from $v$
to each of $u_0,u_1,\dots,u_M$ in time $O(m^2)$.
Finally, we output the minimum of these values.

\subsubsection{Euclidean}
\label{app:computation:euc}

For Euclidean elections
we proceed largely analogous to
how we treated single-crossing elections.
We know that there are $O(m^{2d})$ votes in the domain,
where $d$ is the dimension of the Euclidean space.
Hence, the brute-force algorithm of computing the distances directly
and taking the minimum would give us running time $O(m^{2d+1}\sqrt{\log m})$.
However, we can find an alternative algorithm
with $O(m^{2d+2})$ preprocessing step
and $O(m^{2d})$ running time for each input vote.

For the preprocessing step we construct a graph
in which the votes in the domain are vertices
and the edge appears when
the swap distance between two votes is equal to $1$.
Then, let $(u_0,u_1,\dots,u_M)$
be a sequence of votes we get when we run a DFS on this graph.
Also, for each $i \in [M]$ let $p_i$ denote the parent of $u_i$ in the spanning tree that we get as a result.
Then, let $(a_i, b_i) \in C \times C$ be a unique pair of candidates
such that $a_i \succ_{u_i} b_i$
but $b_i \succ_{p_i} a_i$.

We construct the graph and identify the associated pair of candidates on each edge
in overall time $O(m^{2d+2})$.
To do so, we first organize all votes in the domain using a trie (prefix tree) in time $O(m^{2d+1})$, which enables lexicographic ordering and allows membership checks in $O(m)$ time.
Next, for each vote
and for each pair of consecutive candidates in a node,
we check whether the vote obtained by swapping this pair belongs to the domain.
DFS consider at most $O(m^{2d+1})$ many edges.
Since each membership check takes $O(m)$, the total running time is $O(m^{2d+2})$.

Now, as in \Cref{app:computation:sc}
for each input vote $v \in \mathcal{L}(C)$
we first find the vector $\pos_v$
with position of every candidate in $C$
according to $v$,
which we compute in time $O(m\log m)$.
Then, in time $O(m\sqrt{\log m})$~\cite{cha-pat:c:fast-swap-distance}
we compute $\swap(u_0,v)$.
Next, iteratively, for each $i \in [M]$
we check whether $\pos_v(a_i) < \pos_v(b_i)$.
If yes, $\swap(u_i,v) = \swap(p_i,v) - 1$,
otherwise, $\swap(u_i,v) = \swap(p_i,v) + 1$.
In this way, we compute the swap distances between $v$
and all the votes in the domain in time $O(m^{2d})$.
Finally, we output the minimum of these values.

\subsection{Hardness for Single-Peaked-on-a-Graph Domains}
\label{app:spog:hardness}
In this section,
we provide a complete proof of \Cref{thm:spog:dist:hardness}
that finding a distance to an arbitrary single-peaked-on-a-graph domain
is $\np$-complete.

\ThmSpogDistHardness*
\begin{proof}
    If we are given a ranking $u \in SP(G)$
    that is the closest to the given vote, $v$,
    then checking if $\swap(u,v) \le d$
    can be done in polynomial time.
    Thus, the problem belongs to $\np$.
    Hence, in the remainder of the proof,
    we focus on showing the hardness.

    To this end, we will provide a reduction from \textsc{SetCover}.
    In this problem, we are given a universe of elements
    $\mathcal{U} = \{u_1,\dots,u_n\}$,
    a family of $\mathcal{U}$'s subsets
    $\mathcal{S} = \{S_1,\dots,S_m\}$, and
    an integer $k \in \mathbb{N}$.
    The question is whether there exists
    a subset $K \subseteq \mathcal{S}$ of size $|K|=k$,
    known as a \emph{set cover},
    that contains all elements from the universe, i.e.,
    $\bigcup_{S_j \in K} S_j = \mathcal{U}$.
    Answering this question
    is known to be $\np$-complete~\cite{gar-joh:b:int}.
    Without loss of generality, we assume that $n>2$ and $m > k$.

    For each instance of \textsc{SetCover}
    we construct an instance of our problem as follows
    (see \Cref{fig:thm:spog:dist:hardness} for an illustration).
    We let the set of candidates $C = V(G)$ contain three groups of candidates:
    (1) $n \cdot m$ \emph{element candidates}
    $(c_{i,j})_{i \in [n], j \in [m]}$,
    among which $n$ candidates, $(c_{i,1})_{i \in [n]}$,
    are called \emph{first element candidates};
    (2)
    $m$ \emph{subset candidates}
    $s_1,\dots,s_m$;
    and
    (3) $n^2 \cdot m^2 + 1$ \emph{path candidates}
    $p_0,p_1,\dots,p_{n^2 \cdot m^2}$.
    As for the edges in graph $G$,
    for each $i \in [n]$, we connect all
    element candidates $c_{i,1},\dots,c_{i,m}$,
    to form a path,
    and the same we do with all of
    the path candidates $p_0,p_1,\dots, p_{n^2 \cdot m^2}$.
    Additionally, we connect $p_0$ 
    to all set candidates $s_1,\dots,s_m$.
    Finally, for each $j \in [m]$, we connect set candidate $s_j$
    to the first element candidates
    with indices corresponding 
    to the indices of the elements of subset $S_j$.
    Formally,
    \begin{align*}
        E(G) = \ &\{\{c_{i,j},c_{i,j+1}\} : i \in [n],j\in[m-1]\} \\
            \cup \ &\{\{p_{i-1},p_i\} : i \in [n^2 \cdot m^2]\} \\
            \cup \ &\{\{p_0, s_j \} : j \in [m]\} \\
            \cup \ &\{\{s_j, c_{i,1} \} : j\in [m], u_i \in S_j\}.
    \end{align*}
    In the given input vote $v$,
    the top candidate is $p_0$,
    followed by all element candidates,
    then remaining path candidates,
    and lastly the subset candidates at the bottom of the ranking, i.e.,
    \begin{align*}
        &p_0 \succ_v 
        c_{1,1} \succ_v c_{1,2} \succ_v \dots \succ_v  c_{1,m}  \succ_v\\ 
        &c_{2,1}  \succ_v  \dots  \succ_v  c_{n-1,m}  \succ_v
        c_{n,1}  \succ_v c_{n,2}  \succ_v  \dots  \succ_v c_{n,m}  \succ_v\\
        &p_1  \succ_v  p_2  \succ_v \dots  \succ_v p_{n^2 \cdot m^2}  \succ_v 
        s_1  \succ_v  s_2  \succ_v  \dots  \succ_v  s_m.
    \end{align*}
    Finally, we set $d = (k+1) n^2 \cdot m^2 - 1$.

    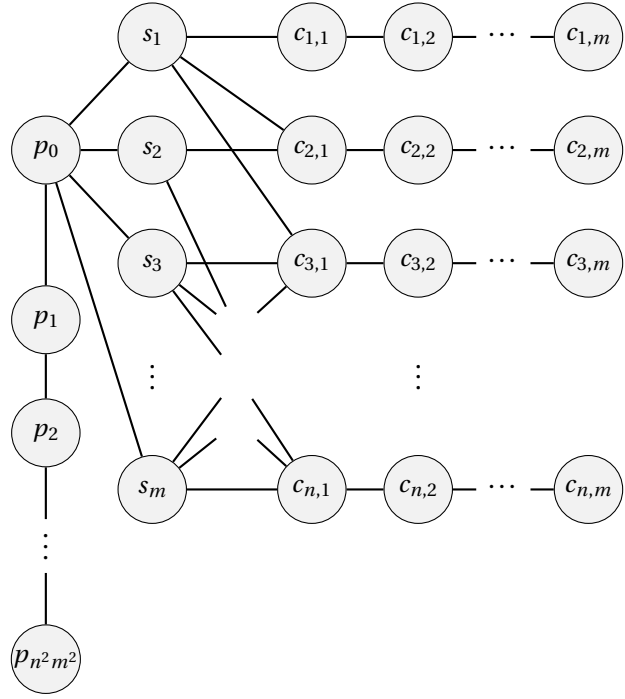
\begin{figure}
    \centering
    \begin{tikzpicture}
        \def\xs{1.4cm} %size of a graph
        \def\ys{1.5cm} %size of a graph
        \def\x{0cm} %placement of a graph
        \def\y{0cm} %placement of a graph
        \def\ls{\footnotesize} %letter sizes
        
        \tikzset{
            node/.style={circle, draw, minimum size=0.9cm, inner sep=0, fill = black!05, font=\small},
            white/.style={circle, draw=white, minimum size=0.7cm, inner sep=0},
            empty/.style={},
            edge/.style={draw, thick, sloped, -, above, font=\footnotesize},
        }
        \node[node] (p0) at (\x + -0.5*\xs, 0*\ys + \y) {$p_0$};
        \node[node] (p1) at (\x + -0.5*\xs, -1.5*\ys + \y) {$p_1$};
        \node[node] (p2) at (\x + -0.5*\xs, -2.5*\ys + \y) {$p_2$};
        \node[white] (p_) at (\x + -0.5*\xs, -3.5*\ys + \y) {};
        \node[empty] () at (\x + -0.5*\xs, -3.5*\ys + 0.1cm + \y) {$\vdots$};
        \node[node] (pnm) at (\x + -0.5*\xs, -4.5*\ys + \y) {$p_{n^2m^2}$};
        
        \node[node] (s1) at (\x + 0.5*\xs, 1*\ys + \y) {$s_1$};
        \node[node] (s2) at (\x + 0.5*\xs, 0*\ys + \y) {$s_2$};
        \node[node] (s3) at (\x + 0.5*\xs, -1*\ys + \y) {$s_3$};
        \node[white] (s_) at (\x + 0.5*\xs, -2*\ys + \y) {};
        \node[white] (s_) at (\x + 0.5*\xs, -2*\ys + 0.1cm + \y) {$\vdots$};
        \node[node] (sm) at (\x + 0.5*\xs, -3*\ys + \y) {$s_m$};

        \node[white] (A) at (\x + 1.3*\xs, -2*\ys + 0.6cm + \y) {};
        \node[white] (B) at (\x + 1.3*\xs, -2*\ys + \y) {};
        \node[white] (C) at (\x + 1.3*\xs, -2*\ys - 0.6cm + \y) {};

        \node[node] (c11) at (\x + 2*\xs, 1*\ys + \y) {$c_{1,1}$};
        \node[node] (c21) at (\x + 2*\xs, 0*\ys + \y) {$c_{2,1}$};
        \node[node] (c31) at (\x + 2*\xs, -1*\ys + \y) {$c_{3,1}$};
        \node[white] (c_) at (\x + 3*\xs, -2*\ys + \y) {};
        \node[white] (c_) at (\x + 3*\xs, -2*\ys + 0.1cm + \y) {$\vdots$};
        \node[node] (cn1) at (\x + 2*\xs, -3*\ys + \y) {$c_{n,1}$};

        \node[node] (c12) at (\x + 3*\xs, 1*\ys + \y) {$c_{1,2}$};
        \node[white] (c1_) at (\x + 3.8*\xs, 1*\ys + \y) {$\cdots$};
        \node[node] (c1m) at (\x + 4.6*\xs, 1*\ys + \y) {$c_{1,m}$};
        \node[node] (c22) at (\x + 3*\xs, 0*\ys + \y) {$c_{2,2}$};
        \node[white] (c2_) at (\x + 3.8*\xs, 0*\ys + \y) {$\cdots$};
        \node[node] (c2m) at (\x + 4.6*\xs, 0*\ys + \y) {$c_{2,m}$};
        \node[node] (c32) at (\x + 3*\xs,-1*\ys + \y) {$c_{3,2}$};
        \node[white] (c3_) at (\x + 3.8*\xs, -1*\ys + \y) {$\cdots$};
        \node[node] (c3m) at (\x + 4.6*\xs, -1*\ys + \y) {$c_{3,m}$};
        \node[node] (cn2) at (\x + 3*\xs, -3*\ys + \y) {$c_{n,2}$};
        \node[white] (cn_) at (\x + 3.8*\xs, -3*\ys + \y) {$\cdots$};
        \node[node] (cnm) at (\x + 4.6*\xs, -3*\ys + \y) {$c_{n,m}$};
        
        \path[edge]
        (p0) edge (p1)
        (p1) edge (p2)
        (p2) edge (p_)
        (p_) edge (pnm)
        (p0) edge (s1)
        (p0) edge (s2)
        (p0) edge (s3)
        (p0) edge (sm)
        (s1) edge (c11)
        (s1) edge (c21)
        (s1) edge (c31)
        (s2) edge (c21)
        (s2) edge (A)
        (s3) edge (c31)
        (s3) edge (A)
        (s3) edge (B)
        (sm) edge (B)
        (sm) edge (C)
        (sm) edge (cn1)
        (c31) edge (A)
        (cn1) edge (B)
        (cn1) edge (C)
        (c11) edge (c12)
        (c12) edge (c1_)
        (c1_) edge (c1m)
        (c21) edge (c22)
        (c22) edge (c2_)
        (c2_) edge (c2m)
        (c31) edge (c32)
        (c32) edge (c3_)
        (c3_) edge (c3m)
        (cn1) edge (cn2)
        (cn2) edge (cn_)
        (cn_) edge (cnm)
        ;
      
    \end{tikzpicture}
    \caption{An illustration of the construction from the proof of \Cref{thm:spog:dist:hardness}.}
    \label{fig:thm:spog:dist:hardness}
\end{figure}

    First, let us show that
    if there exists a set cover $K$ in the original instance,
    then $\swap(\SP(G),v)\le d$.
    Let $K'$ contain all the subset candidates
    corresponding to subsets in $K$, i.e.,
    $K' = \{s_j : S_j \in K\}$.
    Then, let $u$ be a vote obtained from $v$
    by moving all subset candidates in $K'$
    upwards in the ranking so that
    all of them are between candidates $p_0$ and $c_{1,1}$
    (the ordering of the remaining candidates is the same).
    For each $s_j \in K'$,
    there are exactly $n\cdot m$ element candidates,
    $n ^2 \cdot m^2$ path candidates, and at most
    $m$ subset candidates for which the ordering in $u$ and $v$ is different.
    Hence,
    \begin{align*}
        \swap(u,v)
        &\le k n^2m^2 + knm + km \\
        &= kn^2m^2 + km(n+1) \\
        &\le (k+1)n^2m^2 - 1 \\
        &= d,
    \end{align*}
    where the last inequality comes from
    our assumption that $n \ge 3$ and $m > k$.
    Moreover, we can observe that $u$ belongs to the $\SP(G)$ domain.
    Indeed, for each $t \in [k+1]$,
    the first $t$ candidates in $u$ form a connected subgraph in $G$,
    as each subset candidate is connected to $p_0$.
    Then, for each $t \in \{k+2, \dots, k+1+n\cdot m\}$,
    each element candidate $c_{i,j}$ for $i \in [n], j \in [m]$
    is connected through the path of element candidates
    (that all appear before it in $u$)
    to the first element candidate $c_{i,1}$,
    which in turn is connected to some $s_j \in K'$
    (as there is $S_j \in K$ such that $u_i \in S_j$).
    Finally, for $t > k+1+n\cdot m$,
    every path candidate $p_i$ for $i \in [n^2m^2]$
    is connected to $p_0$ through a path of path candidates
    (that all appear before it in $u$),
    and every set candidate $s_j \not \in K'$
    is connected directly to $p_0$.
    Therefore, indeed,
    $\swap(\SP(G),v) \le \swap(u,v) \le d$.

    In the remainder of the proof,
    let us assume that there is no set cover
    in the original \textsc{SetCover} instance
    and let us show that this implies that $\swap(\SP(G),v) > d$.
    Take an arbitrary vote $u \in \SP(G)$.
    Let $i^* \in [n]$ be such that
    $c_{i^*,1}$ is the least preferred
    among the first element candidates in $u$, i.e.,
    $c_{i,1} \succ_u c_{i^*,1}$, for each $i\in [n] \setminus \{i^*\}$.
    
    Observe that it must hold that
    $c_{i^*,1} \succ_u c_{i^*,2} \succ_u \dots \succ_u c_{i^*,m}$
    (otherwise, if there is $j < j' \le m$ such that
    $c_{i^*, j'} \succ_u c_{i^*, j}$,
    then the subset of the first $t$ candidates up to $c_{i^*, j'}$
    is not connected in $G$, as $c_{i^*, j}$
    is on the only path from $c_{i^*, j'}$ to $c_{i, 1}$
    for any $i \in [n] \setminus \{i^*\}$).
    Moreover, by the definition of $\SP(G)$ domain,
    the set of candidates that are weakly preferred over $c_{i^*,1}$ in $u$,
    i.e., $C' = \{c \in C : c \succ_u c_{i^*,1} \} \cup \{c_{i^*,1}\}$,
    must form a connected subgraph in $G$.
    This means that $C'$ must contain
    at least one subset candidate connected
    to each first element candidate.
    Let $K' = \{s_1,\dots,s_m\} \cap C'$ be a subset of
    all subset candidates in $C'$
    and let $K = \{S_j : s_j \in K'\}$
    be a set of corresponding subsets in $\mathcal{S}$.
    We know that $K$ covers all the elements in $\mathcal{U}$,
    but since there is no set cover of size $k$,
    in the \textsc{SetCover} instance,
    we get that $|K'| = |K| \ge k+1$.
    
    Now, let $P$ denote the set of path candidates,
    excluding $p_0$ that are ranked above $c_{i^*,1}$ in $u$,
    i.e., $P = C' \cap \{p_1,\dots,p_{n^2\cdot m^2}\}$.
    Then, there are at least $m \cdot |P|$ pairs
    of an element candidate from the path $(c_{i^*,j})_{j \in [m]}$
    and a path candidate in $P$
    that are ordered differently in $u$ and $v$.
    Moreover, there are at least $(k+1) \cdot (n^2 \cdot m^2 - |P|)$ pairs
    of a subset candidate in $K'$
    and a path candidate in $\{p_1,\dots,p_{n^2\cdot m^2}\} \setminus P$
    that are ordered differently in $u$ and $v$.
    Hence,
    \begin{align*}
        \swap(u,v)
        &\ge m \cdot |P| + (n^2 \cdot m^2 - |P|) \cdot (k+1) \\
        &\ge (k+1) \cdot |P| + (n^2 \cdot m^2 - |P|) \cdot (k+1) \\
        &= n^2 \cdot m^2 \cdot (k+1) \\
        &> d,
    \end{align*}
    where the second inequality comes from our assumption that $m \ge k+1$.
    This concludes the proof.
\end{proof}

\section{Most Diverse Domains}\label{apdx:sec:most-diverse}
Below, we provide a formal definition for the \textsc{Most Diverse Domain}.

\begin{definition}
  For a set of candidates $C$ and an integer $k \leq |C|!$ the \textsc{Most Diverse Domain} problem asks for a set $D \subseteq \mathcal{L}(C)$ of size $k$ that maximizes $\out(D)$.    
\end{definition}

We observe that an optimal solution to \textsc{Most Diverse Domain} is a set of $k$ rankings that achieves the optimal $k$-Kemeny score for the election $(C,\mathcal{L}(C))$.
Moreover, finding $k$ rankings that realize the optimal $k$-Kemeny score of $(C,\mathcal{L}(C))$ can be formulated as the classic clustering problem \textsc{$k$-Median} of the metric space of all possible rankings together with the $\swap$ distance, i.e., $(\mathcal{L}(C), \swap)$. %add cite dap+cos z kmedian

To compute optimal solutions for \textsc{Most Diverse Domain}, we used a standard Integer Linear Program (ILP) for \textsc{$k$-Median}.
Unfortunately, this approach is computationally expensive because the ILP has size $\Theta((m!)^2)$.
A faster, heuristic alternative is simulated annealing:
We initialize a random set of $k$ rankings and iteratively attempt to improve the solution by replacing a single ranking to reduce the total swap distance.
This heuristic appears surprisingly effective, likely because randomly sampling $k$ rankings from the impartial culture model already yields near-optimal solutions, especially for large $k$.

For completeness, we provide an ILP formulation for \textsc{Most Diverse Domain} below that is equivalent to an ILP for \textsc{$k$-Median} in a specific metric space $(\mathcal{L}(C), \swap)$ and with a specific set of points to cluster $\mathcal{L}(C)$. 

  Let $\mathcal{L}(C) = \{ u_1, \dots, u_{m!}\}$ be a set of rankings over a set $C$ of $m$ candidates, and let $k$ denote the size of domain.
  For readability, we define $n = m!$.
  For each ranking $u_i \in \mathcal{L}(C)$, we define a binary variable $y_i$ with the intention that value $1$ indicates that ranking $u_i$ is selected to a solution.
  For each pair of rankings $u_i, u_j \in \mathcal{L}(C)$, we define a binary variable $x_{ij}$ with the intention that value $1$ means that a ranking $u_i$ has $u_j$ as the closest ranking in a solution ($u_j$ is a representative, or cluster center, of $u_i$).
  Let $d_{ij}$ denote the swap distance between rankings $u_i$ and $u_j$, i.e., $d_{ij} = \swap(u_i, u_j)$.
  We introduce the following constraints:
  \begin{align}
    x_{ij}, y_i &\in \{0,1\}, \quad\quad\:\; \forall i, j \in [n] \nonumber\\
    \label{ilp:assign-vote}
    \textstyle\sum_{i \in [n]} x_{ij} &= 1, \quad\quad\quad\quad\quad \forall j \in [n] \\
    \label{ilp:select-vote}
    x_{ij} &\leq y_i, \quad\quad\quad\:\:\; \forall i, j \in [n] \\
    \label{ilp:count-vote}
    \textstyle\sum_{i \in [n]} y_i &= k.
  \end{align}
  Constraint~\eqref{ilp:assign-vote} ensures that each ranking is assigned to exactly one selected ranking.
  Constraint~\eqref{ilp:select-vote} ensures that a vote can only be assigned to another vote if that vote is selected.
  Constraint~\eqref{ilp:count-vote} ensures that exactly $k$ rankings are selected.
  The objective function defined in~\eqref{ilp:obj-vote} minimizes the total cost, i.e., the total swap distance:
  \begin{align}
    \label{ilp:obj-vote}
    \min \sum_{i \in [n]} \sum_{j \in [n]} d_{ij} \cdot x_{ij}.
  \end{align}

\begin{figure}[t]
\centering

    \begin{subfigure}{0.35\columnwidth}
      \includegraphics[width=1\linewidth]{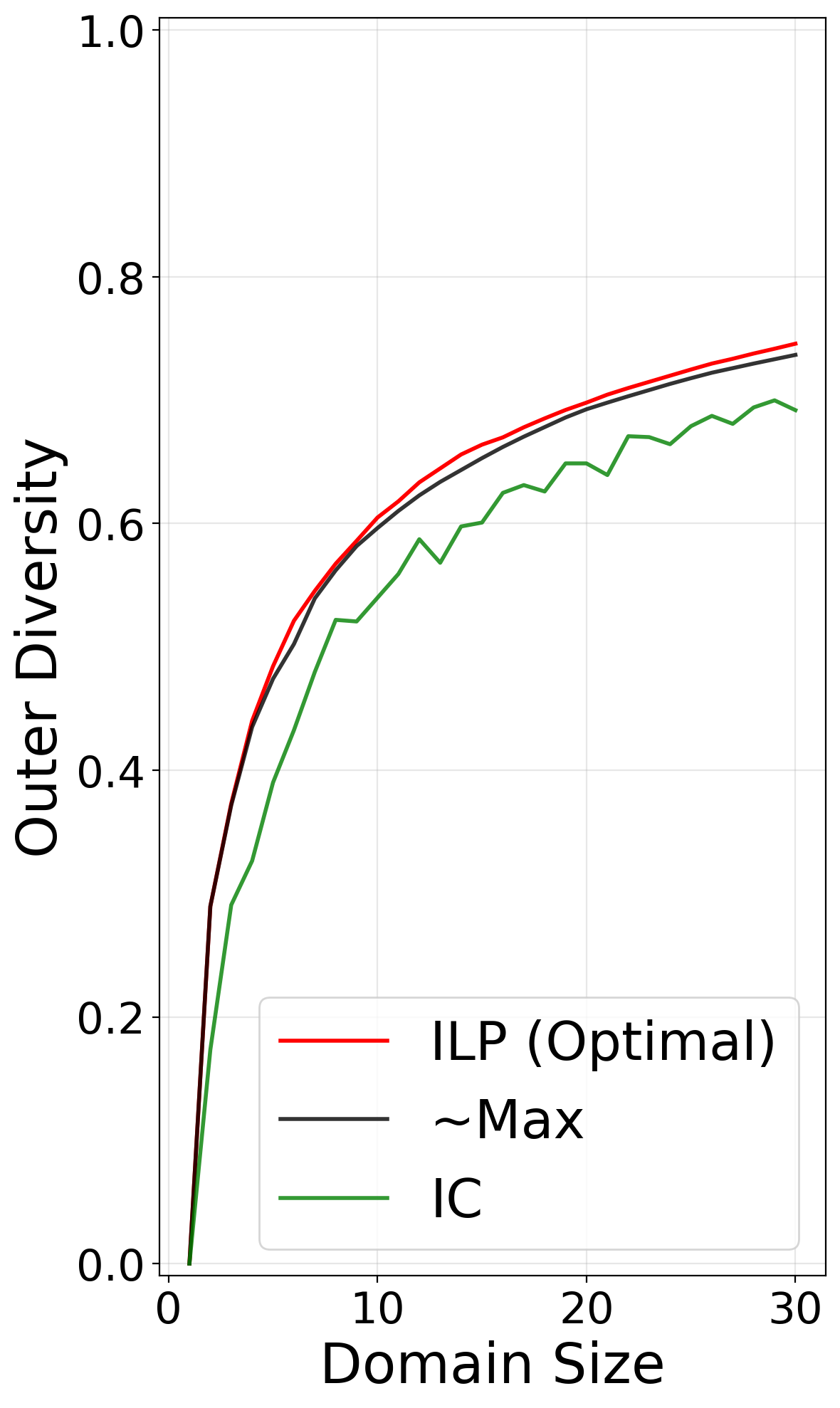}
    \end{subfigure}
    \caption{Comparison of the optimal diversity (red line) and the one achieved by simulated annealing (black line) for $6$ candidates.}
    \label{apdx:fig:out-div-ilp}
\end{figure}

\subsection{Computing Most Diverse Domain}\label{apdx:sec:computing-most-diverse}
Our simulated annealing algorithm operates as follows. We begin with a randomly generated set of rankings. At each iteration, we uniformly at random remove one of the rankings and add one ranking sampled from IC. If the new solution is better than the current one, it is always accepted. Otherwise, it is accepted with probability
\[
P = \exp\left(\frac{E_{\text{new}} - E_{\text{current}}}{T}\right),
\]
where $T$ denotes the current temperature. The initial temperature is set to $T_0 = 0.5$, and it decreases geometrically with a cooling rate of $0.95$ per iteration. Moreover, we perform at most $256$ iterations.

In~\Cref{apdx:fig:out-div-ilp}, we compare the performance of simulated annealing and ILP for the case of six candidates. As shown, the solution found by simulated annealing is nearly optimal. Moreover, note that simply sampling votes from the IC distribution serves as an effective heuristic.

\subsection{Largest Gap in a Domain}

A domain can be considered diverse if it is well distributed across a metric space of all possible rankings, i.e., $(\mathcal{L}(C), \swap)$.
This implies that there are no large gaps between rankings within the domain.
Consequently, it is natural to search for the largest such gap.
To formalize this, we define a decision problem of finding the center of a ball in $(\mathcal{L}, \swap)$ with a given radius that contains no rankings from the given domain $D$.

\begin{definition}
    In the \farthestpermutation problem we are given $D \subseteq \mathcal{L}(C)$ and $r_\far \in \{-1, 0\} \cup \mathbb{N}$.
    We ask if there exists a ranking $f \in \mathcal{L}(C)$ which swap distance to $D$ is at least $r_\far$, i.e.,
    $$r_\far < \min_{v \in D} \swap(v,f) = \swap(D,f).$$
\end{definition}

We emphasize that the definition uses a strict inequality because the goal is to identify a ball that excludes all rankings from $D$.
For example, if $D = \mathcal{L}(C)$ then the only value of $r_\far$ for which a response is YES, is $-1$.
For $D \subset \mathcal{L}(C)$, every $f \in \mathcal{L}(C) \setminus D$ is at distance $1$ from $D$, so $r_\far \geq 0$ in this case.
We can also define an optimization version of \farthestpermutation, searching for a maximum $r_\far$ for which a response is YES.

In the subsequent proofs, we will rely on results concerning the \kemenycenter problem, which may be regarded as a dual problem to \farthestpermutation in the sense that \farthestpermutation looks for a ball of radius $r_\far$ where none of domain rankings are included, but \kemenycenter looks for a ball of radius $r_\cent$ where all of domain rankings are included.

\begin{definition}
    In the \kemenycenter problem we are given $D \subseteq \mathcal{L}(C)$ and $r_\cent \in \{0\} \cup \mathbb{N}$.
    We ask if there exists a ranking $c \in \mathcal{L}(C)$ which swap distance to every element in $D$ is at most $r_\cent$, i.e.,
    $$\max_{v \in D} \swap(v,c) \leq r_\cent.$$
\end{definition}

The duality mentioned can be formalized in a quantitative way as done in \Cref{lem:duality-1center-farthest-perm} which essentially says that for every permutation $x \in \mathcal{L}(C)$, the sum of radii of two balls: 1) a ball with a \farthestpermutation objective and a center in $x$ and; 2) a ball with a \kemenycenter objective and a center in $\rev(x)$, where $\rev(x)$ is a reversed permutation of $x$; is always equal to $\binom{m}{2}-1$, i.e., a maximum distance between two permutations of $m$ elements decreased by $1$.

Formally, for $D \subseteq \mathcal{L}(C)$ and $x \in \mathcal{L}(C)$ we define the radii described above as:
$\fp(D,x) = \swap(D,x) -1$
and
$\kc(D,x) = \max_{v \in D} \swap(v,x)$.

\begin{lemma}\label{lem:duality-1center-farthest-perm}
    For every $D \subseteq \mathcal{L}(C)$ and every $x \in \mathcal{L}(C)$ we have
    $$\fp(D,x) + \kc(D,\rev(x)) = \textstyle{\binom{m}{2}}-1.$$
\end{lemma}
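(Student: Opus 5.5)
The plan is to reduce the lemma to the elementary identity $\swap(v,x) + \swap(v,\rev(x)) = \binom{m}{2}$, valid for all $v,x \in \calL(C)$. To see it, note that $\rev(x)$ orders every pair of candidates opposite to $x$. Hence, for each unordered pair $\{a,b\}$, the ranking $v$ disagrees with exactly one of $x$ and $\rev(x)$ on that pair. Summing the Iverson brackets $[v \text{ and } x \text{ disagree on } \{a,b\}] + [v \text{ and } \rev(x) \text{ disagree on } \{a,b\}] = 1$ over all $\binom{m}{2}$ pairs gives the identity. This is the only combinatorial input, and it follows directly from the definition of swap distance.

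Next, I rewrite $\kc(D,\rev(x))$ using the identity:
\[
\kc(D,\rev(x)) = \max_{v \in D} \swap(v,\rev(x)) = \max_{v\in D}\Bigl(\textstyle\binom{m}{2} - \swap(v,x)\Bigr) = \textstyle\binom{m}{2} - \min_{v \in D}\swap(v,x).
\]
The last expression equals $\binom{m}{2} - \swap(D,x)$, by the definition of $\swap(D,x)$ as a minimum over $D$. Adding $\fp(D,x) = \swap(D,x) - 1$ makes the $\swap(D,x)$ terms cancel, leaving $\binom{m}{2} - 1$, as claimed.

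The only point needing care is the degenerate case $D = \varnothing$. There the minimum and maximum are not well defined, so I would note that the lemma implicitly assumes $D$ is nonempty (as in the definitions of both problems). Apart from this, no step is a real obstacle; the main thing to state clearly is the pairwise disagreement argument behind the identity.
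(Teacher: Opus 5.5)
Your proof is correct and follows essentially the same route as the paper: both rest on the identity $\swap(v,x)+\swap(v,\rev(x))=\binom{m}{2}$ to turn the minimum over $D$ into a maximum, whether one rewrites $\kc(D,\rev(x))$ (as you do) or $\fp(D,x)$ (as the paper does). Your pairwise justification of that identity and your remark that $D$ must be nonempty are worthwhile additions.
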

\begin{proof}
    Let us fix $D \subseteq \mathcal{L}(C)$ and $x \in \mathcal{L}(C)$.
    First we observe that, by the $\swap$ distance definition, we have
    $$\swap(v,x) + \swap(v,\rev(x)) = \textstyle{\binom{m}{2}}.$$
    Using it, we obtain a sequence of equalities:
    \begin{align*}
        \fp(D,x) &= -1 + \min_{v \in D} \swap(v,x)\\
                 &= -1 + \min_{v \in D} \Big(\textstyle{\binom{m}{2}} - \swap(v,\rev(x))\Big)\\
                 &= \textstyle{\binom{m}{2}} - 1 -\max_{v \in D} \swap(v,\rev(x))\\
                 &= \textstyle{\binom{m}{2}} - 1 - \kc(D,\rev(x)).
    \end{align*}
    This finishes the proof.
\end{proof}

The lemma implies that hardness of finding a solution to \kemenycenter implies hardness of finding a solution to \farthestpermutation
as well as an additive approximation algorithm with additive loss guarantee of at most $\beta$ for \kemenycenter is also an approximation algorithm for \farthestpermutation with the same additive loss guarantee.
The two results results are formally presented in the following two theorems.
We observe that \Cref{thm:farthest-perm-np-com} directly implies the result stated in \Cref{thm:largest-hole-np-hard}.

\begin{theorem}\label{thm:farthest-perm-np-com}
    \farthestpermutation is NP-complete, even when $|D|=4$.
\end{theorem}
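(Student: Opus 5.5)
Membership in NP is immediate: a certificate is a ranking $f \in \mathcal{L}(C)$. We compute $\swap(v,f)$ for each of the listed $v \in D$ in polynomial time and check that the minimum exceeds $r_\far$.

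For hardness, I would reduce from \kemenycenter restricted to four input rankings, using \Cref{lem:duality-1center-farthest-perm}. I would invoke the known NP-hardness of \kemenycenter (the closest-string-type problem for permutations under Kendall's $\tau$): Biedl, Brandenburg, and Deng's results on Kemeny-type problems, and the work on the ``Kemeny center / permutation center'' problem, which is known to be NP-hard already for four input permutations (Popov, and Biedl--Brandenburg--Deng). If the four-voter version is not directly citable, the fallback is to adapt the four-voter Kemeny hardness reduction of \citet{dwo-kum-nao-siv:c:rank-aggregation,bie-bra-den:j:kemeny-hardness}. That adaptation (maximum instead of sum) is the main obstacle; I expect to cite rather than reprove it.

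Given a \kemenycenter instance $(D, r_\cent)$ with $|D| = 4$, output the \farthestpermutation instance $(D', r_\far)$ with $D' = \{\rev(v) : v \in D\}$ and $r_\far = \binom{m}{2} - 1 - r_\cent - 1 = \binom{m}{2} - r_\cent - 2$. If this value drops below $-1$, clamp it to $-1$; then both instances are YES trivially, because any ranking is within the maximum distance $\binom{m}{2}$ of every ranking.

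Correctness rests on the identity $\swap(v, x) + \swap(v, \rev(x)) = \binom{m}{2}$ from the proof of \Cref{lem:duality-1center-farthest-perm}, which gives $\swap(\rev(v), f) = \swap(v, \rev(f))$. Hence
\[
\min_{u \in D'} \swap(u, f) = \binom{m}{2} - \max_{v \in D} \swap(v, \rev(f)).
\]
So $f$ witnesses $r_\far < \swap(D', f)$ exactly when $\max_{v \in D} \swap(v, \rev(f)) < \binom{m}{2} - r_\far$. This is $\le r_\cent + 1$ by our choice of $r_\far$; adjusting by one (i.e., using $r_\far = \binom{m}{2} - r_\cent - 1$), this is precisely $\max_{v \in D} \swap(v, c) \le r_\cent$ for $c = \rev(f)$.

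Since $f \mapsto \rev(f)$ is a bijection, the two instances are equivalent. The map is polynomial-time and preserves $|D| = 4$, which completes the reduction.
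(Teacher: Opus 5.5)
Your strategy is the paper's: reduce from \kemenycenter on four rankings via $\swap(v,x)+\swap(v,\rev(x))=\binom{m}{2}$, with threshold $r_\far=\binom{m}{2}-1-r_\cent$. Your final reduction is valid. However, the correctness paragraph contains a false identity and an off-by-one that you patch rather than fix, so it needs rewriting.

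First, with $D'=\{\rev(v):v\in D\}$ the identity that actually holds is
\[
\min_{u\in D'}\swap(u,f)=\min_{v\in D}\Bigl(\binom{m}{2}-\swap(v,f)\Bigr)=\binom{m}{2}-\max_{v\in D}\swap(v,f),
\]
with $f$, not $\rev(f)$, inside the maximum. Your display applies the complementation twice, once by reversing $D$ and once by reversing $f$, and these cancel. For example, take $m=2$, $D=\{a\pref b\}$ and $f=a\pref b$: your left side is $1$ and your right side is $0$. Consequently the pointwise claim ``$f$ witnesses \farthestpermutation on $D'$ iff $\rev(f)$ witnesses \kemenycenter on $D$'' is false; for $D'$ the correct correspondence is $c=f$. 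Your conclusion that the instances are equivalent survives only because both questions are existential and $f\mapsto\rev(f)$ is a bijection.

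The cleaner fix is what the paper does: keep $D$ itself, so that $\swap(D,f)=\binom{m}{2}-\max_{v\in D}\swap(v,\rev(f))$ and the witness map is $c=\rev(f)$. This is exactly \Cref{lem:duality-1center-farthest-perm}. Reversing the domain as well buys nothing.

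Second, you define $r_\far=\binom{m}{2}-r_\cent-2$ and then ``adjust by one.'' \farthestpermutation asks for $r_\far<\swap(D,f)$, i.e., $\swap(D,f)\ge r_\far+1$. So the right value is $\binom{m}{2}-1-r_\cent$ from the outset, and the reduction should be stated with it.

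Your clamping of $r_\far$ to $-1$ is a point the paper glosses over, since its $r_\far$ can leave $\{-1,0\}\cup\mathbb{N}$; keep it. With the corrected threshold it is needed only when $r_\cent>\binom{m}{2}$.

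Finally, you need no fallback adaptation of the four-voter Kemeny reduction. The paper cites Dwork et al.\ together with Theorem~5 of Biedl, Brandenburg, and Deng for NP-hardness of \kemenycenter with four distinct rankings. Reversal preserves distinctness, so $|D'|=4$ as required.
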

\begin{proof}
    The inclusion in NP is straightforward as this is enough to compute all pairwise distances between a solution and elements of a domain and check if any of them is equal or smaller than $r$.

    In order to show NP-hardness for $|D| = 4$, we will construct a reduction from \kemenycenter which is NP-hard for $|D| = 4$, where all input orders are distinct (see \cite{dwo-kum-nao-siv:c:rank-aggregation} for the original proof and \cite[Theorem 5]{bie-bra-den:j:kemeny-hardness} for its correction).

    Let $D \subseteq \mathcal{L}(C), r_\cent \in \{0\} \cup \mathbb{N}$ be an input of \kemenycenter.\footnote{While the original definition of \kemenycenter allows inputs with non-distinct orders, every such instance can, without loss of generality, be reduced to an equivalent instance consisting solely of distinct orders.}
    We define an input of \farthestpermutation simply by providing the same domain $D$ and $r_\far = {\binom{m}{2}} -1 -r_\cent$.

    Correctness of the reduction directly follows from \Cref{lem:duality-1center-farthest-perm}.
    For completes we provide the two formal implications below.
    
    ($\Rightarrow$) If $(D,r_\cent)$ is a YES-instance of \kemenycenter then there exists $c \in \mathcal{L}(C)$ such that $\kc(D,c) \leq r_\cent$ and we obtain 
    \begin{align*}
      \swap(D,\rev(c)) &\stackrel{\phantom{\textnormal{\Cref{lem:duality-1center-farthest-perm}}}}{=} \fp(D,\rev(c))+1\\
      &\stackrel{\textnormal{\Cref{lem:duality-1center-farthest-perm}}}{=} \textstyle{\binom{m}{2}} -1 -\kc(D,c)+1\\
      &\stackrel{\phantom{\textnormal{\Cref{lem:duality-1center-farthest-perm}}}}{\geq} \textstyle{\binom{m}{2}} -r_\cent > r_\far.    
    \end{align*}
    Therefore, $\rev(c)$ is a solution to the \farthestpermutation instance $(D,r_\far)$.
    
    ($\Leftarrow$)  If $(D,r_\far)$ is a YES-instance of \farthestpermutation then, analogously,
    there exists $f \in \mathcal{L}(C)$ such that $\fp(D,f) \geq r_\far$ and we obtain
    \begin{align*}
      \max_{v \in D} \swap(v,\rev(f)) &\stackrel{\phantom{\textnormal{\Cref{lem:duality-1center-farthest-perm}}}}{=} \kc(D,\rev(f))\\
      &\stackrel{\textnormal{\Cref{lem:duality-1center-farthest-perm}}}{=} \textstyle{\binom{m}{2}} -1 -\fp(D,f)\\
      &\stackrel{\phantom{\textnormal{\Cref{lem:duality-1center-farthest-perm}}}}{\leq} \textstyle{\binom{m}{2}} -1 -r_\far = r_\cent.    
    \end{align*}
    Therefore, $\rev(f)$ is a solution to the \kemenycenter instance $(D,r_\cent)$.
    This finishes the proof.
\end{proof}

The duality between \farthestpermutation and \kemenycenter presented in\break \Cref{lem:duality-1center-farthest-perm} holds for centers of balls at $x$ and $\rev(x)$.
The duality can be also expressed, in \Cref{lem:additive-duality-1center-farthest-perm}, in terms of how far radii of balls with centers at $x$ and $\rev(x)$ are from optimum solutions.
For that we will need a few more definitions.
For a given $D \subseteq \mathcal{L}(C)$,
let $\fp(D)$ be a maximum $r_\far$ for which $(D,r_\far)$ is a YES-instance of \farthestpermutation.
Analogously, let $\kc(D)$ be a minimum $r_\cent$ for which $(D,r_\cent)$ is a YES-instance of \farthestpermutation.

\begin{lemma}\label{lem:additive-duality-1center-farthest-perm}
    For every $D \subseteq \mathcal{L}(C), x \in \mathcal{L}(C)$ and $\beta \in \mathbb{N}$ we have:
    $$\fp(D,x) \geq \fp(D) - \beta \Leftrightarrow \kc(D,\rev(x)) \leq \kc(D) + \beta.$$
\end{lemma}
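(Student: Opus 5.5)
The plan is to derive the statement directly from \Cref{lem:duality-1center-farthest-perm}, applied twice: once at the point $x$ and once at an optimal point. The only additional ingredient is to relate the two optimal values $\fp(D)$ and $\kc(D)$ through the same identity. Here $\fp(D)$ denotes the largest $r_\far$ giving a YES-instance, so $\fp(D)=\max_{y}\fp(D,y)$. Similarly, $\kc(D)$, read as the optimum of \kemenycenter, equals $\min_{z}\kc(D,z)$.

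The first step is to show that $\fp(D)+\kc(D)=\binom{m}{2}-1$.

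By \Cref{lem:duality-1center-farthest-perm}, for every $y \in \mathcal{L}(C)$ we have $\fp(D,y)=\binom{m}{2}-1-\kc(D,\rev(y))$. Taking the maximum over $y$ gives $\max_y \fp(D,y)=\binom{m}{2}-1-\min_y \kc(D,\rev(y))$. Since $y\mapsto\rev(y)$ is a bijection of $\mathcal{L}(C)$, the minimum on the right equals $\min_z \kc(D,z)=\kc(D)$. A small technical point needs care here: the decision version of \farthestpermutation uses the strict inequality $r_\far<\swap(D,f)$. The maximum YES value is therefore $\max_f \swap(D,f)-1=\max_f\fp(D,f)$, which agrees with the definition $\fp(D,x)=\swap(D,x)-1$.

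The second step combines the two identities. Subtracting $\fp(D)+\kc(D)=\binom{m}{2}-1$ from $\fp(D,x)+\kc(D,\rev(x))=\binom{m}{2}-1$ yields
\[
\fp(D)-\fp(D,x)=\kc(D,\rev(x))-\kc(D).
\]
Hence $\fp(D,x)\ge \fp(D)-\beta$ holds if and only if the left side is at most $\beta$, which holds if and only if the right side is at most $\beta$, i.e., $\kc(D,\rev(x))\le\kc(D)+\beta$. This is exactly the claimed equivalence.

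I expect no real obstacle. The only delicate point is interpretive: the text defines $\kc(D)$ as the minimum $r_\cent$ for a YES-instance "of \farthestpermutation", which is evidently a typo for \kemenycenter. I would state explicitly that $\kc(D)=\min_{c}\max_{v\in D}\swap(v,c)$ and then use it as above.
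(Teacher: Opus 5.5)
Your proof is correct and follows essentially the same route as the paper. The paper fixes a maximizer $x_\far$ of $\fp(D,\cdot)$, asserts $\kc(D,\rev(x_\far))=\kc(D)$ via \Cref{lem:duality-1center-farthest-perm}, and then chains equivalent inequalities using the duality identity at $x$ and at $x_\far$. Your explicit derivation of $\fp(D)+\kc(D)=\binom{m}{2}-1$ (using that reversal is a bijection on $\mathcal{L}(C)$), together with your remarks on the strict inequality and the typo in the definition of $\kc(D)$, makes precise the step the paper leaves implicit.
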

\begin{proof}
    We fix $D \subseteq \mathcal{L}(C), x \in \mathcal{L}(C)$ and $\beta \in \mathbb{N}$.
    Let $x_\far$ be such that $ \fp(D,x_\far) = \fp(D)$.
    Then, by \Cref{lem:duality-1center-farthest-perm}, we have that
    $\kc(D,\rev(x_\far)) = \kc(D)$.
    We obtain a sequence of equivalent inequalities:
    \begin{align*}
        \fp(D,x) &\geq \fp(D) - \beta\\
        \fp(D,x) &\geq \fp(D,x_\far) - \beta\\
        \textstyle{\binom{m}{2}}-1-\kc(D,\rev(x)) &\geq \textstyle{\binom{m}{2}}-1-\kc(D,\rev(x_\far)) - \beta\\
        -\kc(D,\rev(x)) &\geq -\kc(D) - \beta \\
        \kc(D,\rev(x)) &\leq \kc(D) + \beta,
    \end{align*}
    where the second equivalence comes from \Cref{lem:duality-1center-farthest-perm}. This finishes the proof.
\end{proof}

An algorithm for \farthestpermutation is an (additive) $\beta$-approximation if for an input $D$ it outputs $x \in \mathcal{L}(C)$ such that $\fp(D,x) \geq \fp(D) - \beta$.
An algorithm for \kemenycenter is an (additive) $\beta$-approximation if for an input $D$ it outputs $x \in \mathcal{L}(C)$ such that $\kc(D,x) \leq \kc(D) + \beta$.
The following corollary is an implication of \Cref{lem:additive-duality-1center-farthest-perm}.

\begin{corollary}
    For a given $D \subseteq \mathcal{L}(C)$ it holds that:
\begin{enumerate}
    \item Let $x$ be an output of an additive $\beta$-approximation algorithm for \kemenycenter on $D$.
    Then, $\rev(x)$ is an additive $\beta$-approximate solution to \farthestpermutation on $D$.

    \item Let $x$ be an output of an additive $\beta$-approximation algorithm for \farthestpermutation on $D$.
    Then, $\rev(x)$ is an additive $\beta$-approximate solution to \kemenycenter on $D$.
\end{enumerate}
\end{corollary}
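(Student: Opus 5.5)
The corollary follows by applying \Cref{lem:additive-duality-1center-farthest-perm} twice, once in each direction, together with the fact that reversal is an involution, i.e., $\rev(\rev(x)) = x$ for every $x \in \mathcal{L}(C)$. I would not redo any swap-distance computations; both items are direct translations through the equivalence in that lemma.

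For the first item, let $x$ be the output of an additive $\beta$-approximation algorithm for \kemenycenter on $D$, so $\kc(D,x) \leq \kc(D) + \beta$. I would apply \Cref{lem:additive-duality-1center-farthest-perm} to the ranking $x' = \rev(x)$. Since $\rev(x') = x$, its right-hand side reads $\kc(D,x) \leq \kc(D) + \beta$, which holds. The equivalence then gives $\fp(D,\rev(x)) \geq \fp(D) - \beta$. This is exactly the statement that $\rev(x)$ is an additive $\beta$-approximate solution to \farthestpermutation on $D$.

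For the second item, let $x$ satisfy $\fp(D,x) \geq \fp(D) - \beta$. I would apply the lemma directly to $x$, obtaining $\kc(D,\rev(x)) \leq \kc(D) + \beta$. This is the approximation guarantee for $\rev(x)$ as a solution to \kemenycenter.

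The only step needing care is that $\fp(D)$ and $\kc(D)$ are well defined, since the lemma's proof relies on a maximizer $x_\far$. Both optima are attained because $\mathcal{L}(C)$ is finite. Also, the paper's definition of $\kc(D)$ mentions \farthestpermutation, which is evidently a typo for \kemenycenter. With $\kc(D)$ read as $\min_{c} \kc(D,c)$, \Cref{lem:duality-1center-farthest-perm} shows that the maximizer $x_\far$ of $\fp(D,\cdot)$ yields $\rev(x_\far)$ as a minimizer of $\kc(D,\cdot)$. With that reading, both items follow immediately.
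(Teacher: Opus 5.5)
Your proposal is correct and matches the paper, which presents the corollary simply as an implication of \Cref{lem:additive-duality-1center-farthest-perm}. You apply that lemma to $\rev(x)$ using $\rev(\rev(x)) = x$ for the first item, and directly to $x$ for the second, which is exactly the intended argument; you are also right that the definition of $\kc(D)$ should refer to \kemenycenter rather than \farthestpermutation.
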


\section{Additional Plots}
Additional histograms of swap distances are shown in~\Cref{apdx:fig:domain_swap_histogram}.

\begin{figure}[t]
  \centering
  \begin{tabular}{ccc}
    \begin{subfigure}{0.28\columnwidth}
      \includegraphics[width=1.16\linewidth]{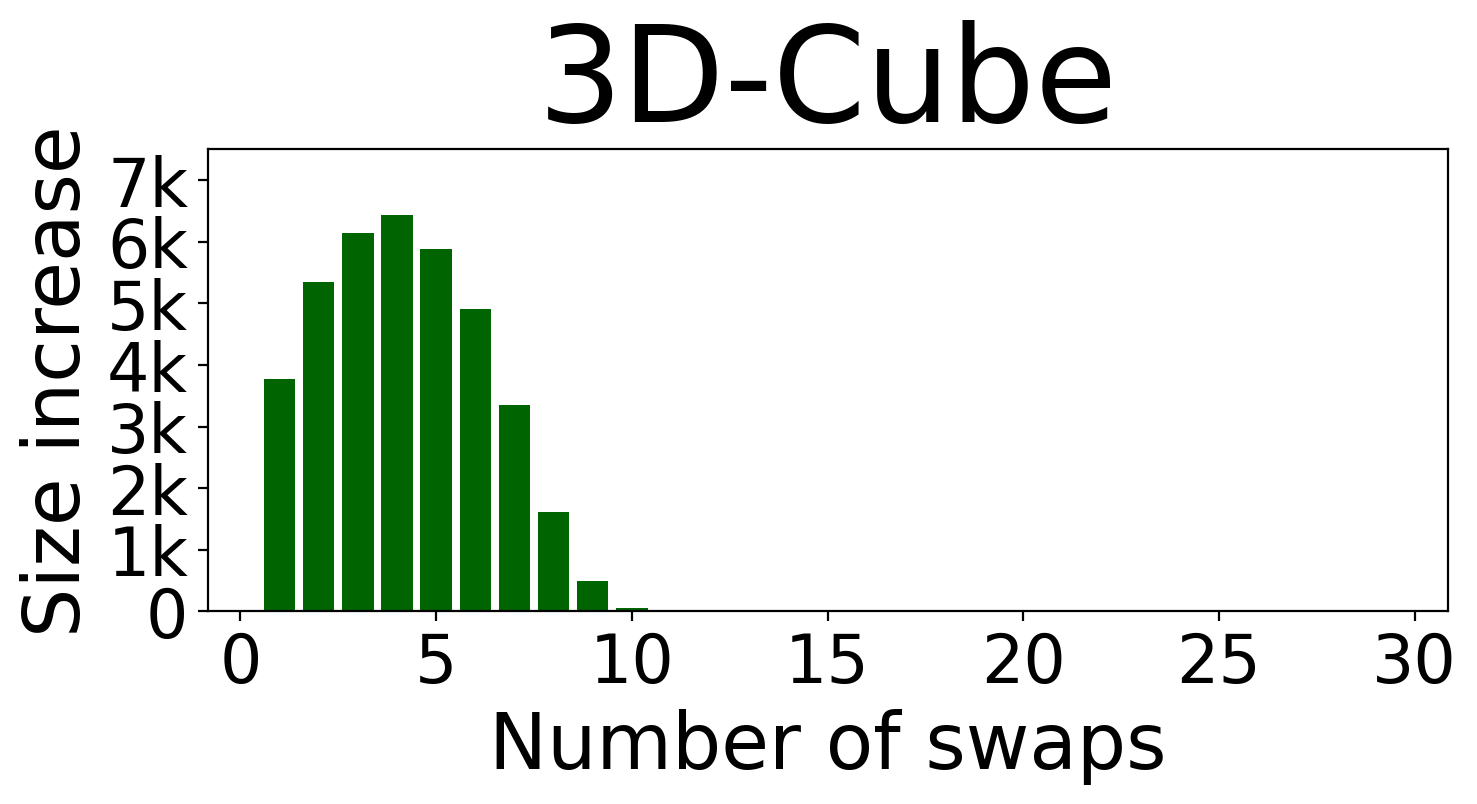}
    \end{subfigure} &
    \begin{subfigure}{0.28\columnwidth}
      \includegraphics[width=1.16\linewidth]{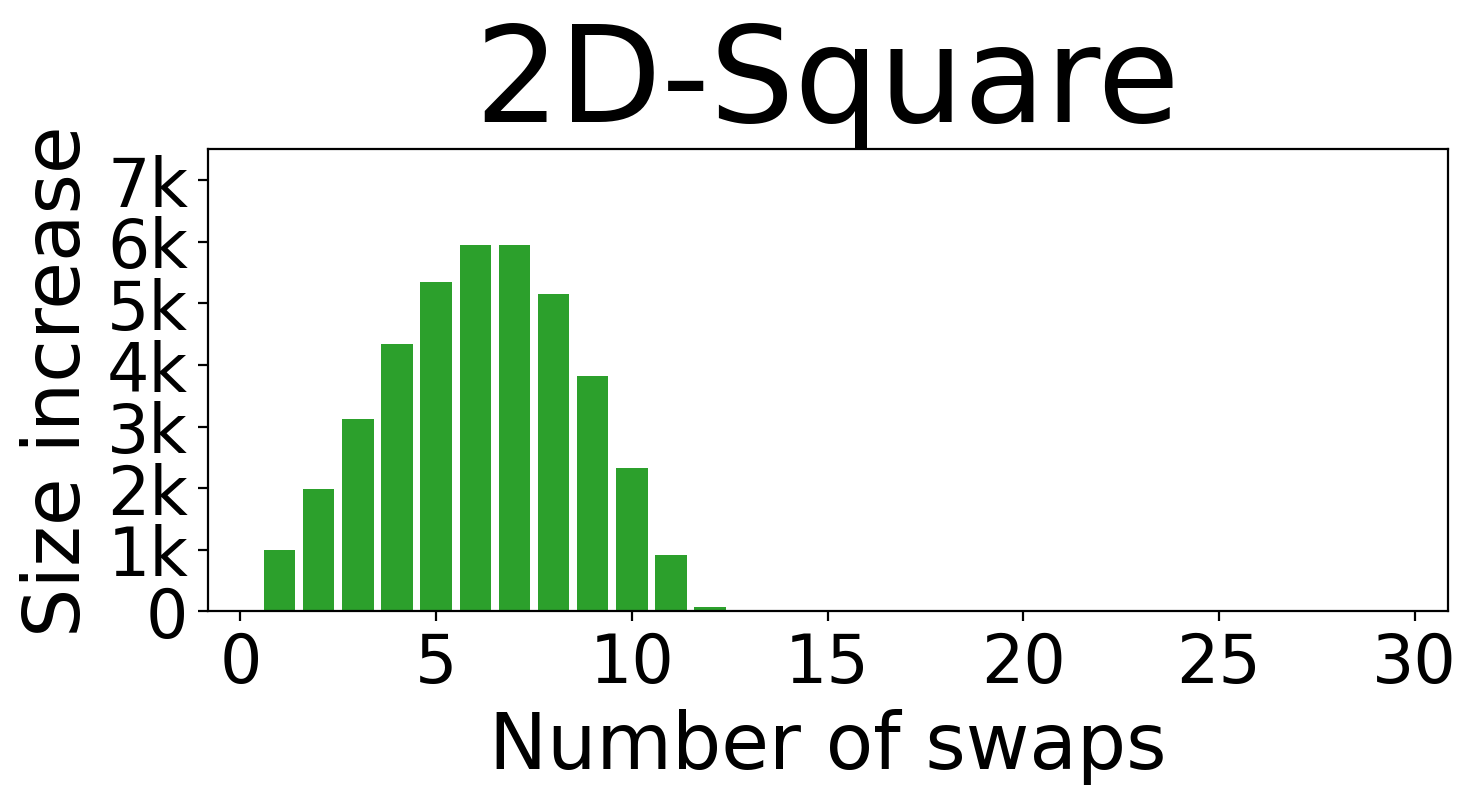}
    \end{subfigure} &
    \begin{subfigure}{0.28\columnwidth}
      \includegraphics[width=1.16\linewidth]{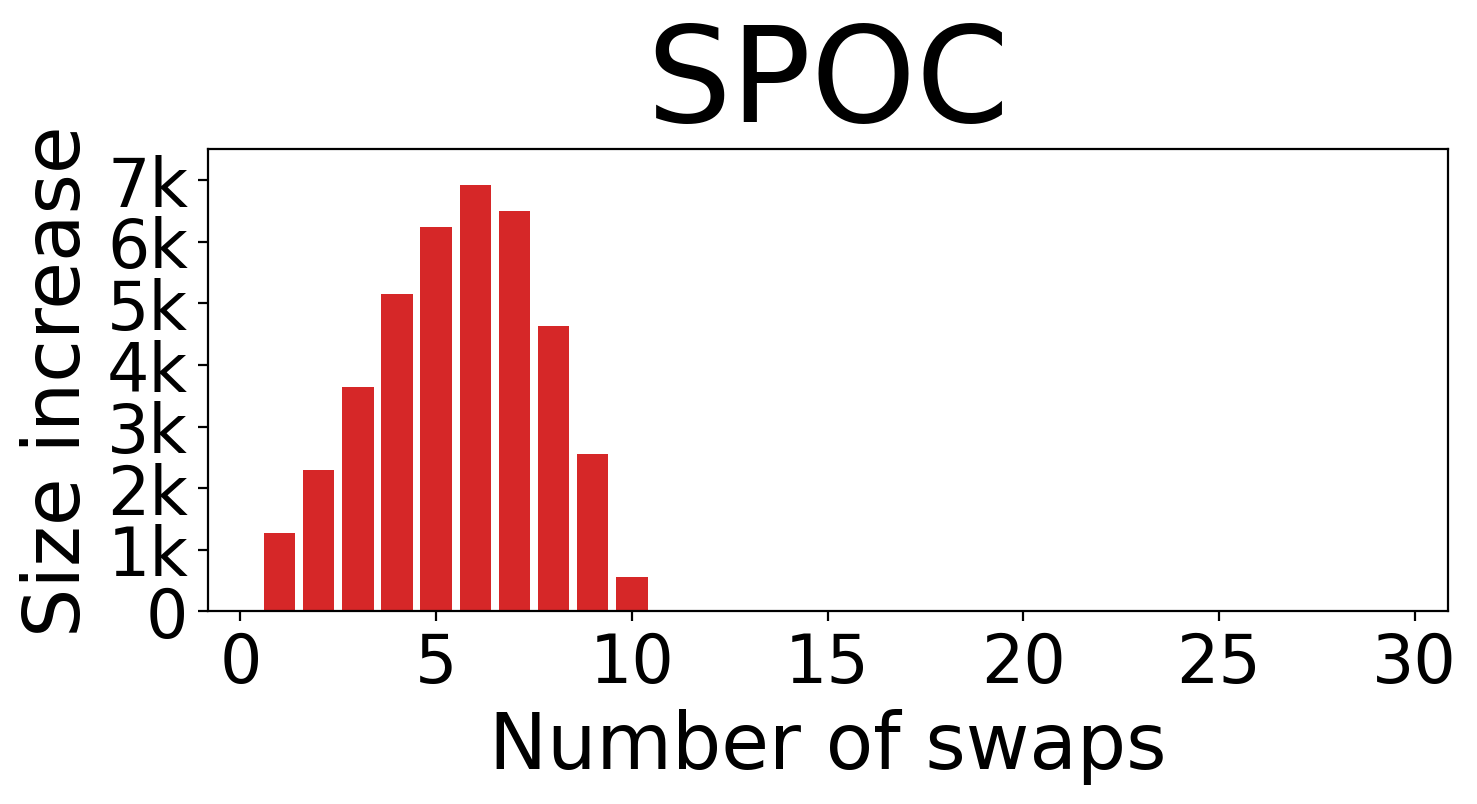}
    \end{subfigure} \\

    \begin{subfigure}{0.28\columnwidth}
      \includegraphics[width=1.16\linewidth]{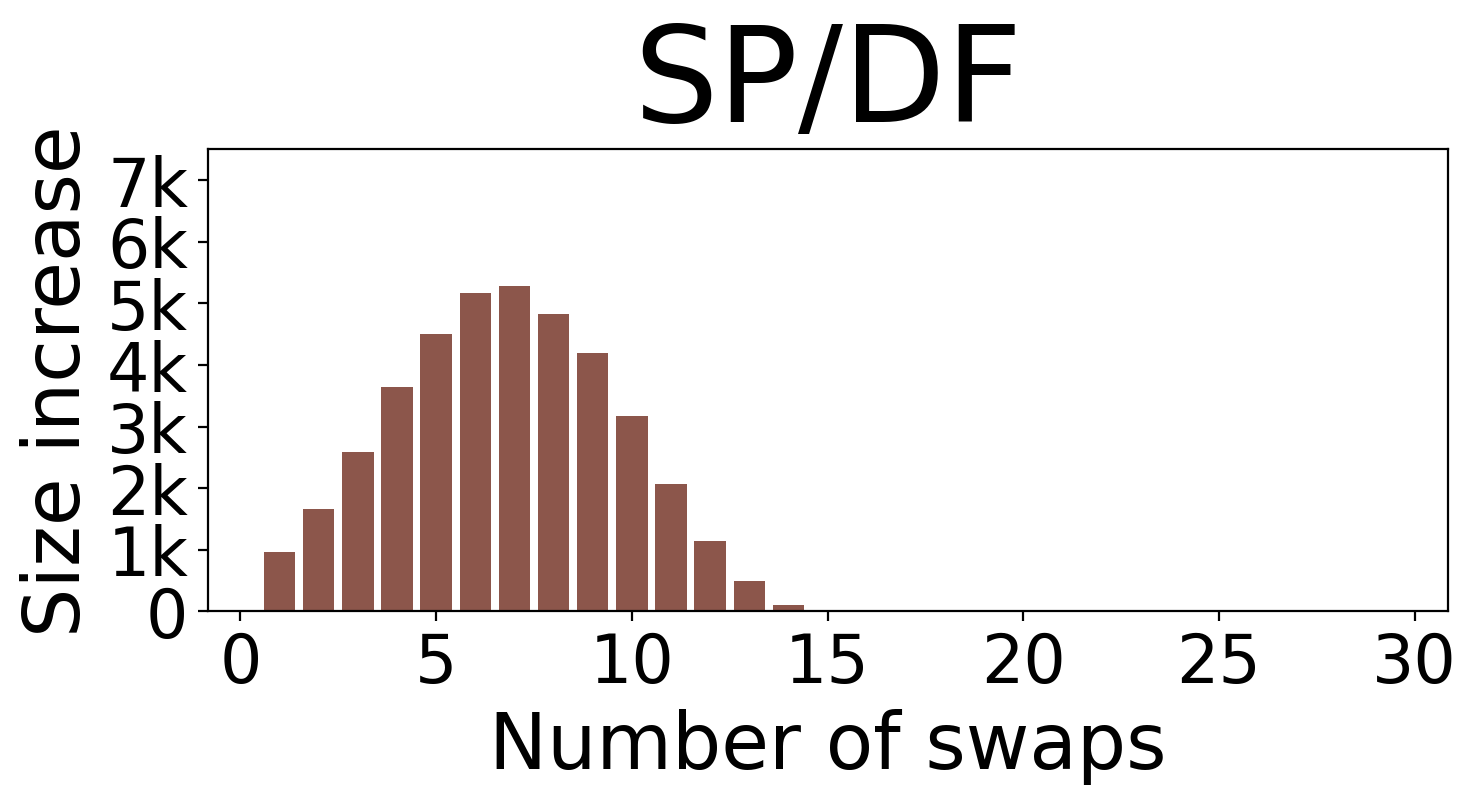}
    \end{subfigure} &
    \begin{subfigure}{0.28\columnwidth}
      \includegraphics[width=1.16\linewidth]{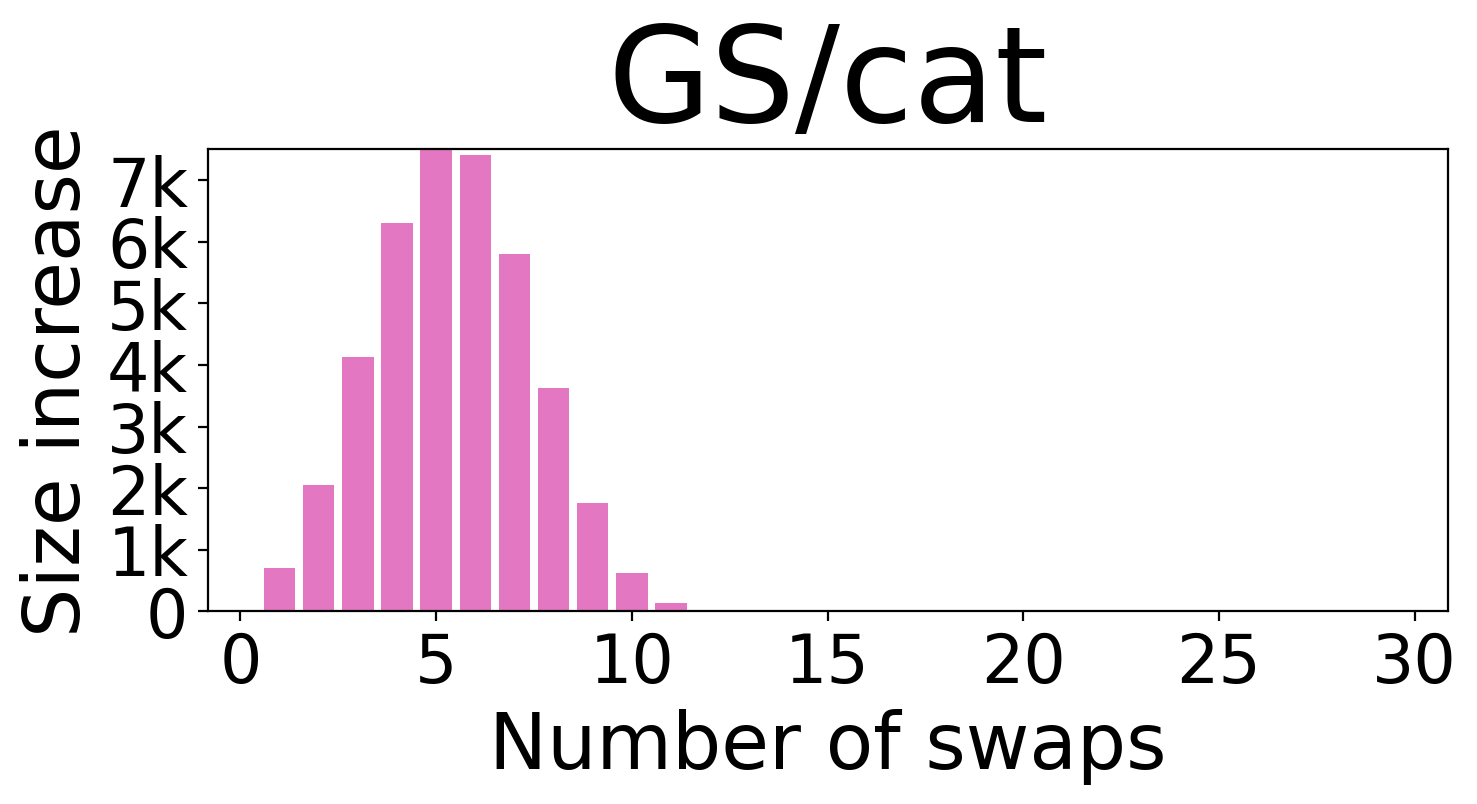}
    \end{subfigure} &
    \begin{subfigure}{0.28\columnwidth}
      \includegraphics[width=1.16\linewidth]{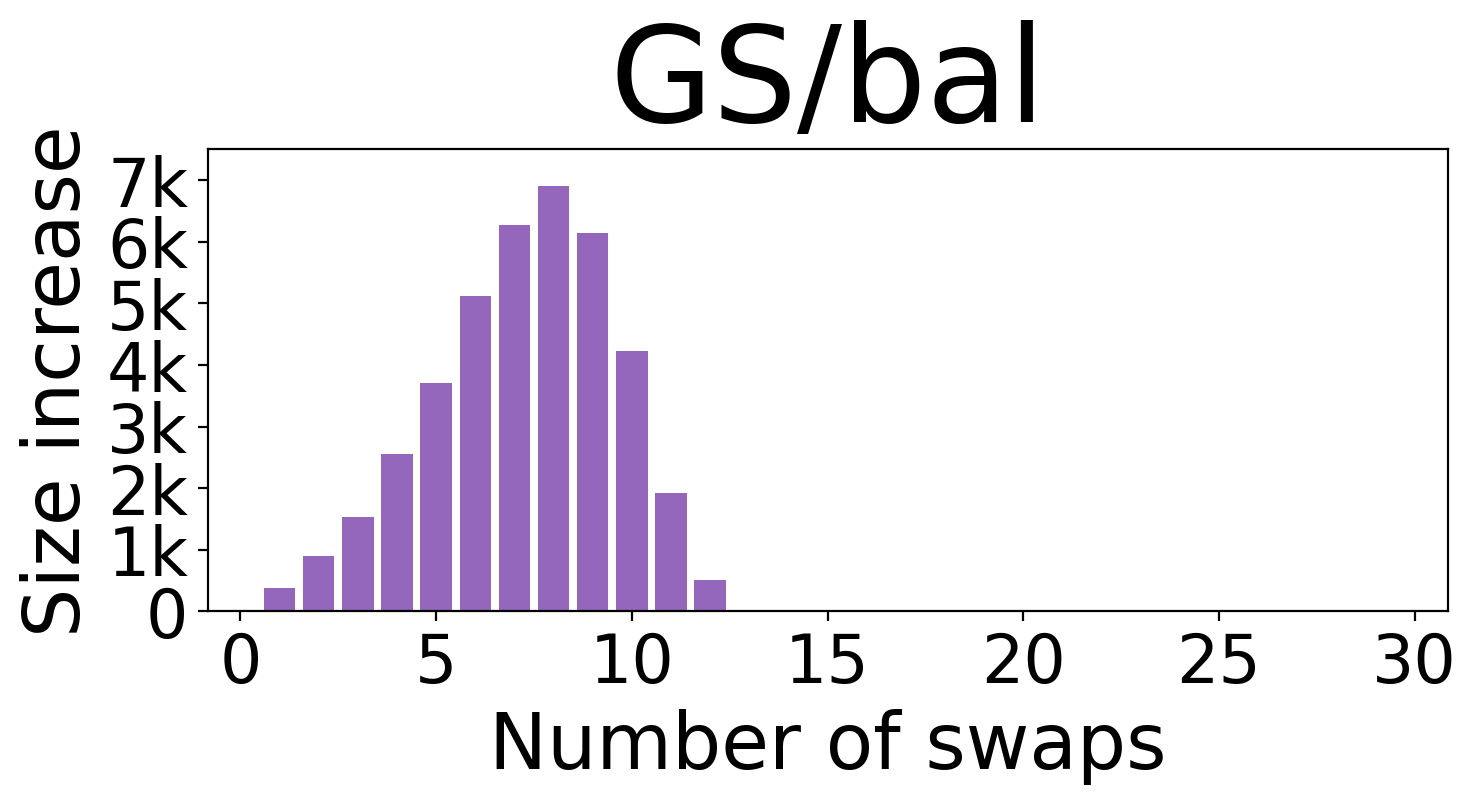}
    \end{subfigure} \\

    \begin{subfigure}{0.28\columnwidth}
      \includegraphics[width=1.16\linewidth]{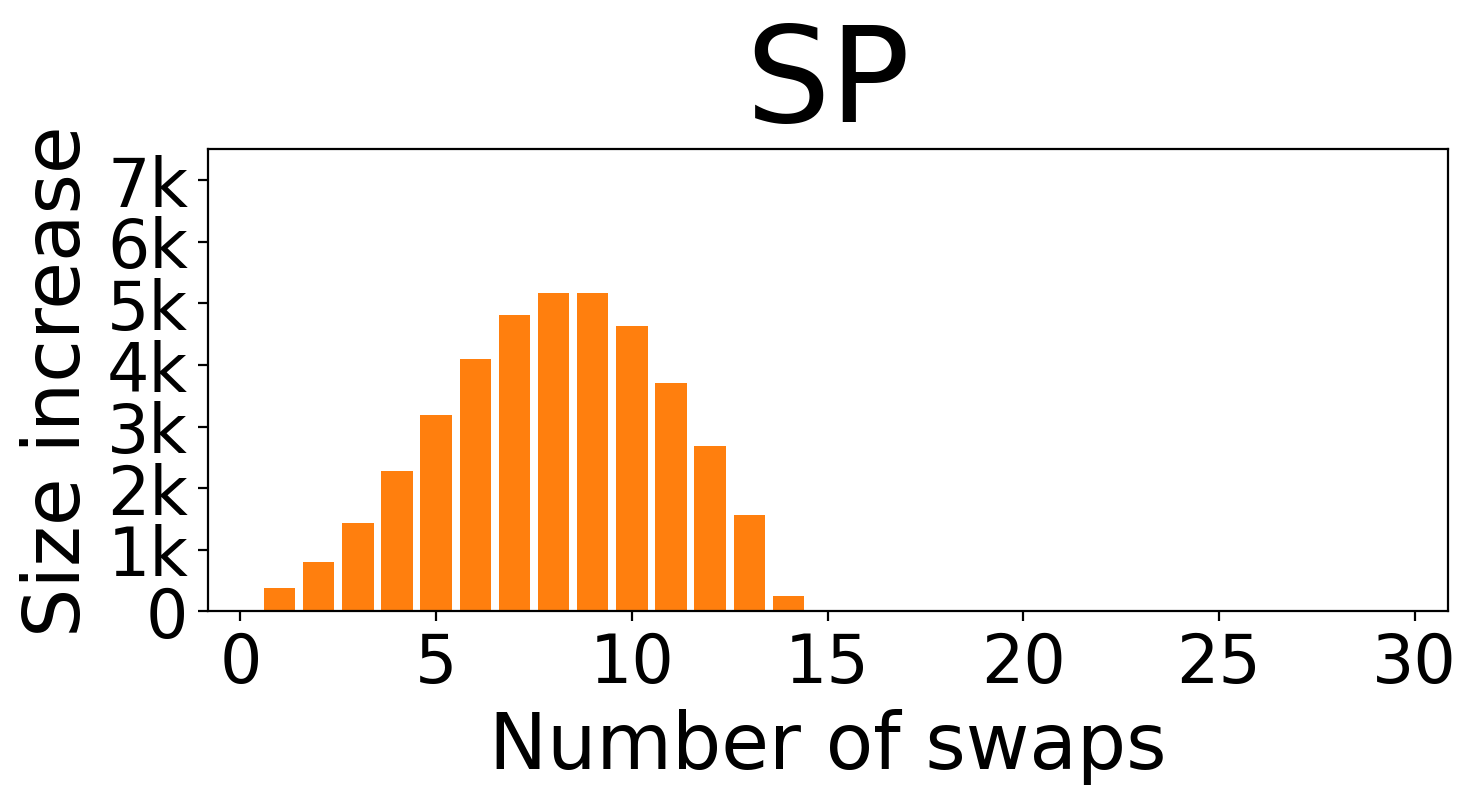}
    \end{subfigure} &
    \begin{subfigure}{0.28\columnwidth}
      \includegraphics[width=1.16\linewidth]{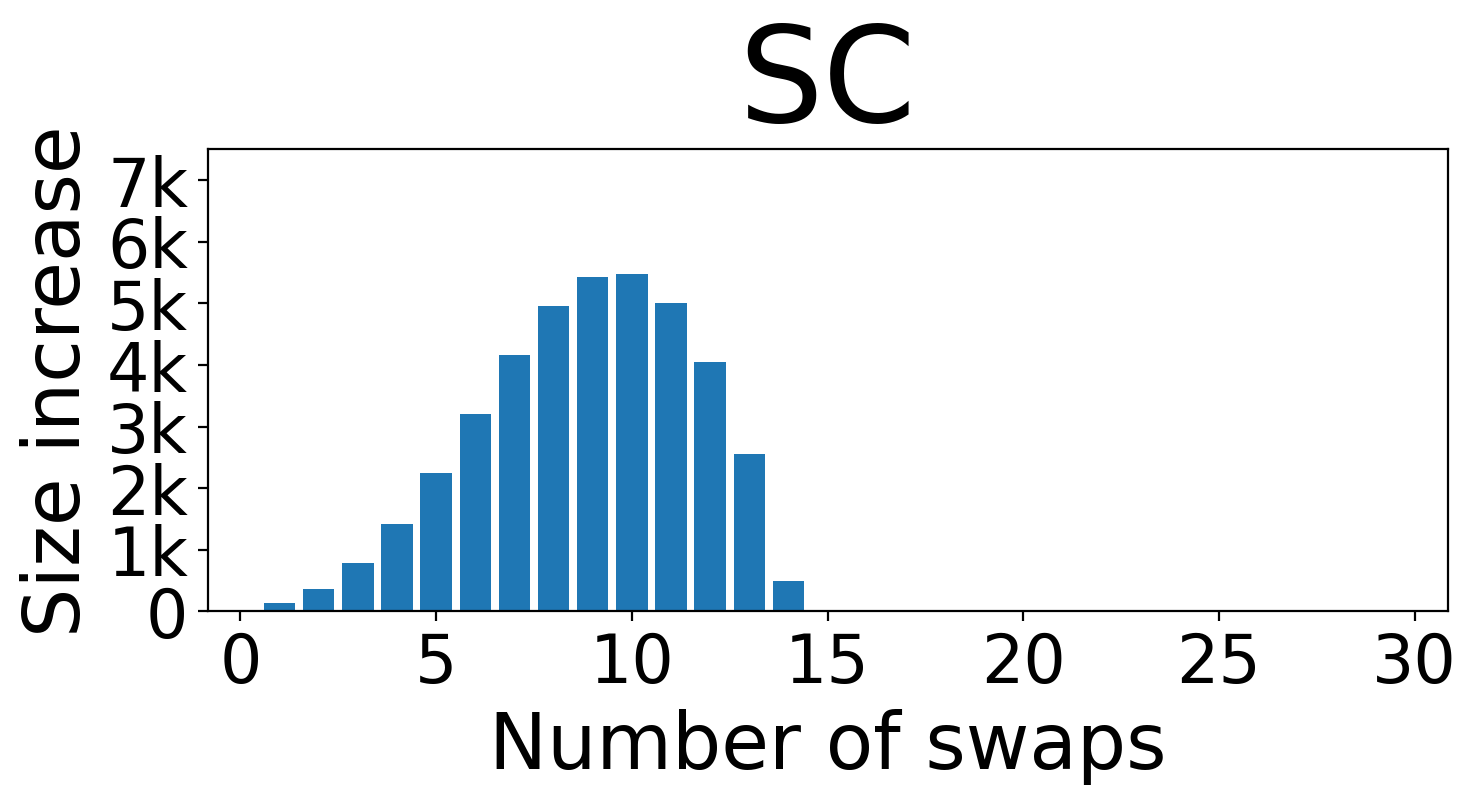}
    \end{subfigure} &
    \begin{subfigure}{0.28\columnwidth}
      \includegraphics[width=1.16\linewidth]{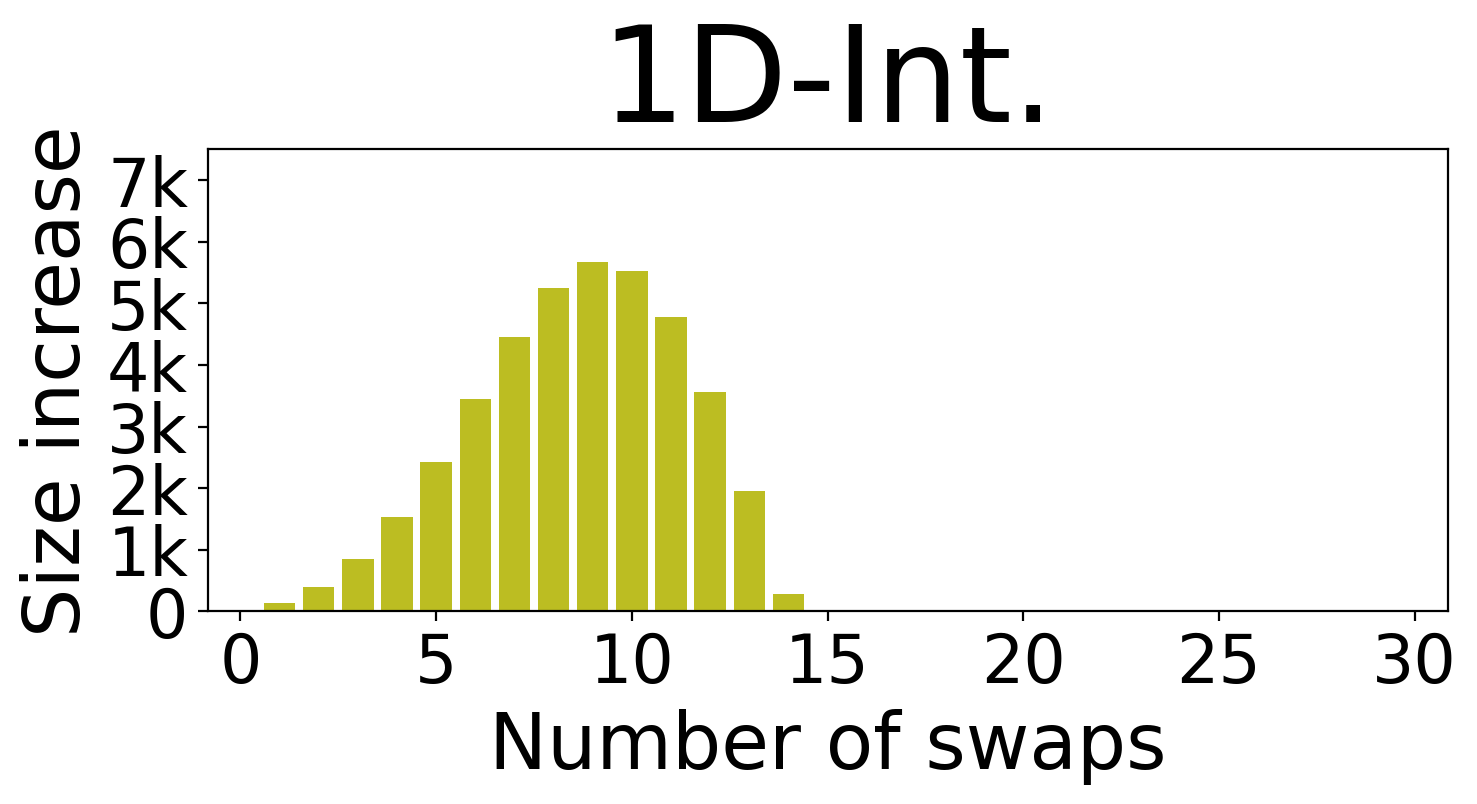}
    \end{subfigure} \\

    \begin{subfigure}{0.28\columnwidth}
      \includegraphics[width=1.16\linewidth]{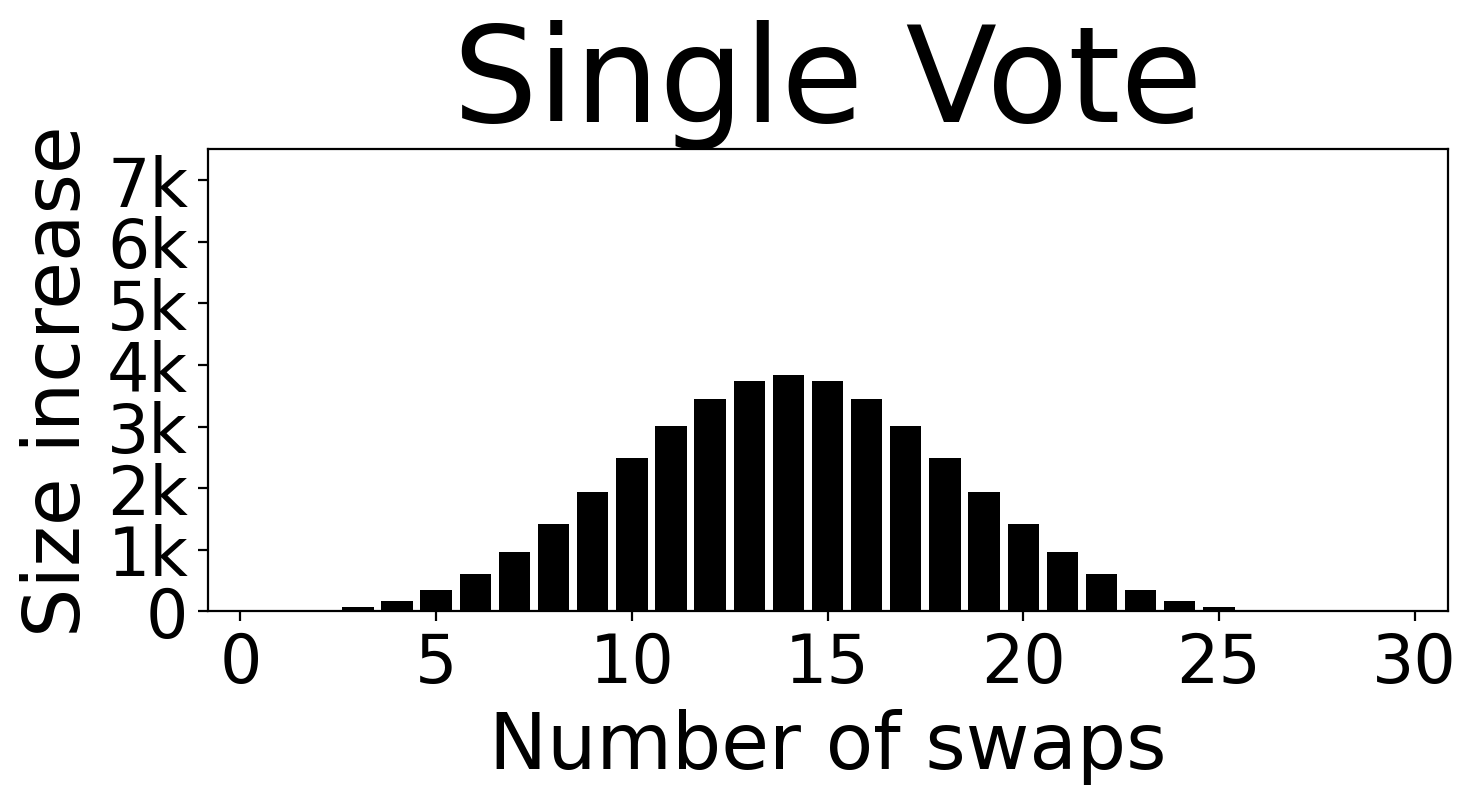}
    \end{subfigure} &
    \begin{subfigure}{0.28\columnwidth}
      \includegraphics[width=1.16\linewidth]{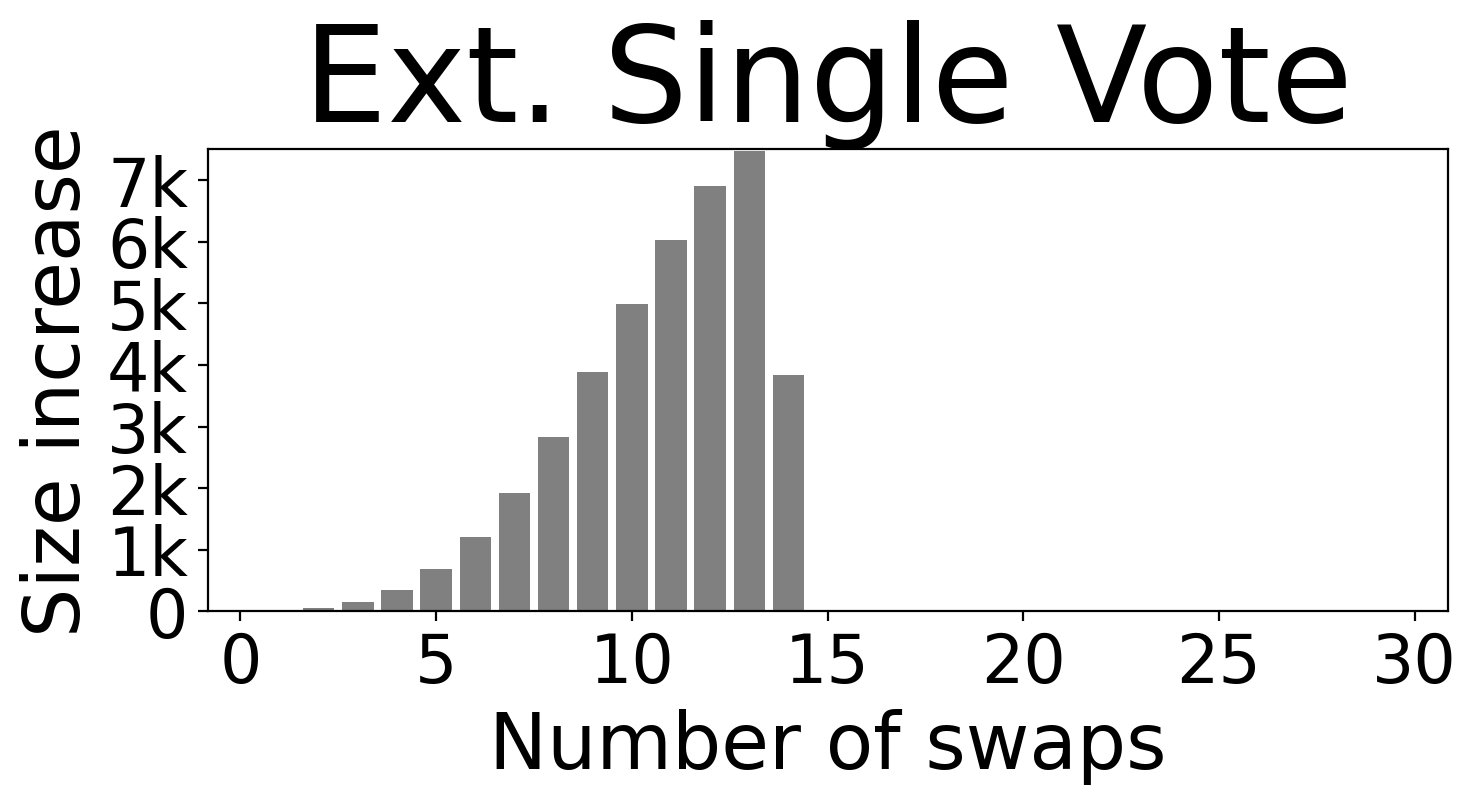}
    \end{subfigure}  &
    \begin{subfigure}{0.28\columnwidth}
      \includegraphics[width=1.16\linewidth]{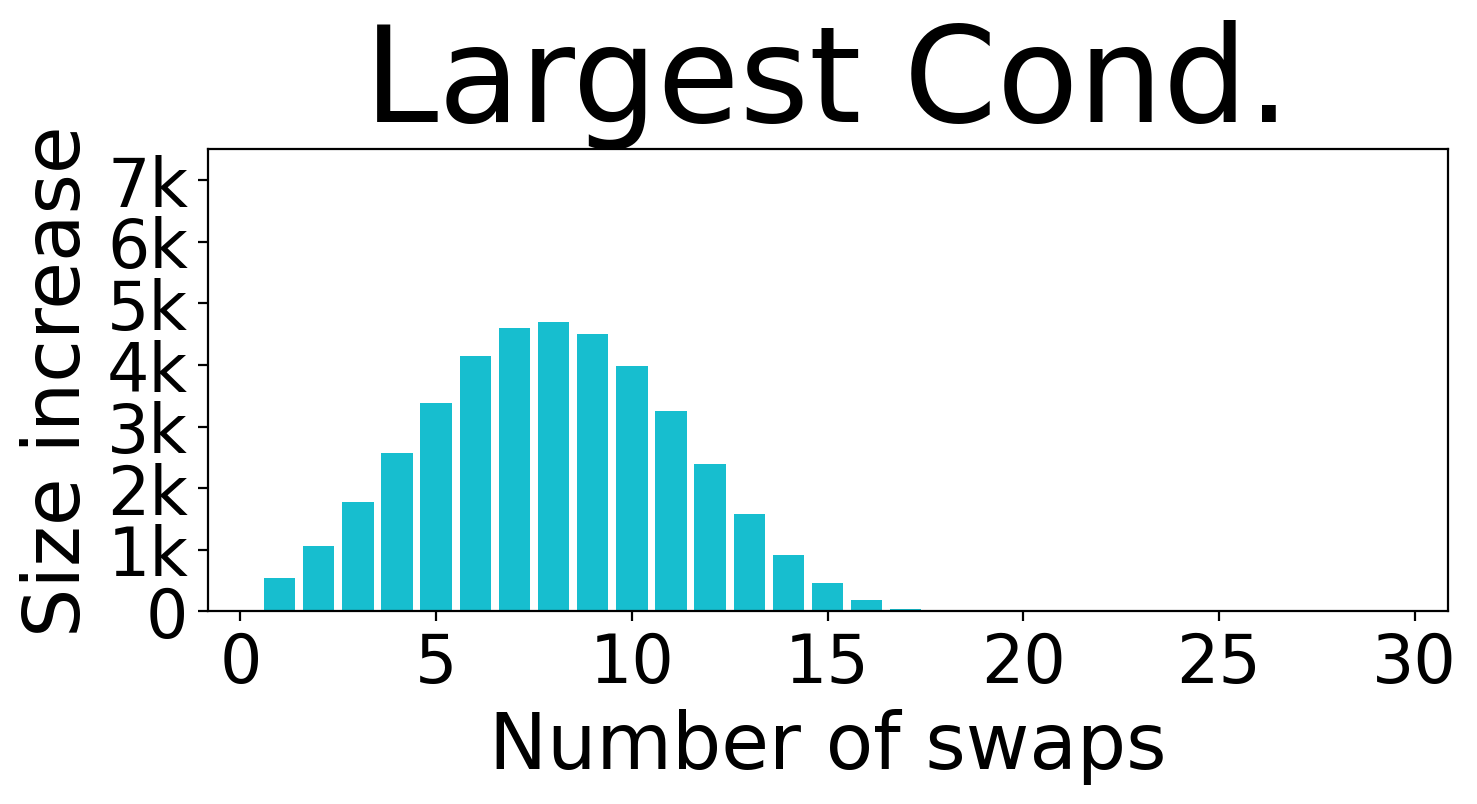}
    \end{subfigure} 
    
  \end{tabular}
  \caption{Histograms of votes at a given swap distance.}
  \label{apdx:fig:domain_swap_histogram}
\end{figure}

\end{document}